\newcommand{\beqa}{\begin{eqnarray*}}
\newcommand{\eeqa}{\end{eqnarray*}}
\newcommand{\beqn}{\begin{eqnarray}}
\newcommand{\eeqn}{\end{eqnarray}}
\newcommand{\iy}{\infty}
\newcommand{\lt}{\left}
\newcommand{\rt}{\right}
\newcommand{\R}{\mathbb R}
\newcommand{\Pa}{\mathbb P}
\newcommand{\bQ}{\mathbb Q}
\newcommand{\N}{\mathbb N}
\newcommand{\Ha}{\mathbb H}
\newcommand{\D}{\mathbb D}
\newcommand{\B}{\mathbb B}
\newcommand{\C}{\mathbb C}
\newcommand{\mcE}{\mathcal E}
\newcommand{\mcB}{\mathcal B}
\newcommand{\mcA}{\mathcal A}
\newcommand{\f}{\frac}
\newcommand{\tf}{\tfrac}
\newcommand{\al}{\alpha}
\newcommand{\be}{\beta}
\newcommand{\G}{\Gamma}
\newcommand{\g}{\gamma}
\newcommand{\e}{\varepsilon}
\newcommand{\de}{\delta}
\newcommand{\la}{\lambda}
\newcommand{\Om}{ \Omega}
\newcommand{\s}{\sigma}
\newcounter{cnt1}
\newcounter{cnt2}
\newcounter{cnt3}
\newcommand{\blr}{\begin{list}{$($\roman{cnt1}$)$}
 {\usecounter{cnt1} \setlength{\topsep}{0pt}
 \setlength{\itemsep}{0pt}}}
\newcommand{\bla}{\begin{list}{$($\alph{cnt2}$)$}
 {\usecounter{cnt2} \setlength{\topsep}{0pt}
 \setlength{\itemsep}{0pt}}}
\newcommand{\bln}{\begin{list}{$($\arabic{cnt3}$)$}
 {\usecounter{cnt3} \setlength{\topsep}{0pt}
 \setlength{\itemsep}{0pt}}}
\newcommand{\el}{\end{list}}
\newtheorem{thm}{Theorem}
\newtheorem{lem}[thm]{Lemma}
\newtheorem{cor}[thm]{Corollary}
\newtheorem{Def}[thm]{Definition}
\newtheorem{rem}[thm]{Remark}
\newcommand{\Rem}{\begin{rem} \rm}
\newcommand{\bdfn}{\begin{Def} \rm}
\newcommand{\edfn}{\end{Def}}
\newcommand{\ba}{\begin{array}}
\newcommand{\ea}{\end{array}}
\date{}
\begin{document}
\title{\bf Global solutions to the homogeneous and inhomogeneous Navier-Stokes equations}
\author[Gill]{T. L. Gill}
\address[Tepper L. Gill]{Department of Mathematics, E\&CE and Computational Physics Laboratory, Howard University\\
Washington DC 20059 \\ USA, {\it E-mail~:} {\tt tgill@howard.edu}}
\author[Williams]{D. Williams}
\address[Daniel Williams]{ Department of Mathematics, Howard University \\
Washington DC 20059 \\ USA, {\it E-mail~:} {\tt dwilliams@howard.edu}}
%\and
\author[Zachary]{W. W. Zachary*}
\address[Woodford W. Zachary]{Department of Mathematics, E\&CE and Computational Physics Laboratory, \\ Howard
University\\ Washington DC 20059 \\ USA, {\it E-mail~:} {\tt
wwzachary@earthlink.net}}
\thanks{*deceased}
% [-1ex] \normalsize \sc Howard University\\
%\abstract{ write abstract here}
\date{}
%thispagestyle{empty}
\subjclass{Primary (35Q30) Secondary(47H20), (76DO3) }
\keywords{Global, 3D-Navier-Stokes Equations, homogeneous,  inhomogeneous }
\begin{abstract}  In this paper we take a new approach to a proof of existence and uniqueness of solutions for the 3D-Navier-Stokes equations, which leads to essentially the same proof for both bounded and unbounded domains and for homogeneous or inhomogeneous incompressible fluids.    Our approach is to construct the largest separable Hilbert space ${\bf{SD}}^2[\R^3]$, for which the Leray-Hopf (type) solutions in $L^2[{\mathbb R}^3]$ are strong solutions in ${\bf{SD}}^2[\R^3]$.  We say Leray-Hopf type because our solutions are weak in the spatial sense but not in time.

When the body force is zero, we prove that, there exists a positive constant ${{{u}}_ +}$, such that, for all divergence-free vector fields in a dense set $\mathbb{D}$ contained in the closed ball ${{\mathbb B}}$ of radius $\tfrac{(1-\varepsilon)}{2}{{u}_ +}, \ 0< \varepsilon <1$, the initial value problem has unique global weak solutions in ${\mathbb C}^{1} \left( {(0,\infty ),{{\mathbb B}}} \right)$.   When the body force is nonzero, we obtain the same result for vector fields in a dense set $\mathbb{D}$ contained in the annulus bounded by constants ${u}_{-}$ and $\tfrac{1}{2}{u}_ {+}$.   In either case, we obtain existence and uniqueness for the Leray-Hopf weak solutions on ${\mathbb R}^3$.  Moreover, with mild conditions on the decay properties of the initial data, we obtain pointwise and time-decay of the solutions.  
\end{abstract}
\maketitle
\section*{Introduction} 
Let ${[L^2({\mathbb{R}}^3)]^3}$ be the Hilbert space of square integrable functions on ${\mathbb {R}}^3$,  let ${\mathbb {H}}[ {\mathbb {R}}^3 ]$ be the completion of the  set of functions in $\left\{ {{\bf{u}} \in \mathbb {C}_0^\infty  [ {\mathbb {R}}^3 ]^3 \left. {} \right|\,\nabla  \cdot {\bf{u}} = 0} \right\}$ which vanish at infinity with respect to the inner product of ${[L^2({\mathbb{R}}^3)]^3 }$, and let ${\mathbb{V}}[ {\mathbb {R}}^3 ]$ be the completion of the above functions which vanish at infinity with respect to the inner product of $\mathbb{H}^1[ {\mathbb {R}}^3 ]$, the functions in ${\mathbb{H}} [ {\mathbb {R}}^3 ]$ with weak derivatives in ${[L^2({\mathbb{R}}^3)]^3 }$.  The classical Navier-Stokes initial-value problem (on $ \mathbb{R}^3 {\text{ and all }}T > 0$) is to find a  function ${\mathbf{u}}:[0,T] \times {\mathbb {R}}^3  \to \mathbb{R}^3$ and $p:[0,T] \times {\mathbb {R}}^3  \to \mathbb{R}$ such that
\beqn
\begin{gathered}
  \partial _t {\mathbf{u}} + ({\mathbf{u}} \cdot \nabla ){\mathbf{u}} - \nu \Delta {\mathbf{u}} + \nabla p = {\mathbf{f}}(t){\text{ in (}}0,T) \times {\mathbb {R}}^3 , \hfill \\
  {\text{                              }}\nabla  \cdot {\mathbf{u}} = 0{\text{ in (}}0,T) \times {\mathbb {R}}^3 {\text{ (in the weak sense),}} \hfill \\
    {\text{                              }}{\mathbf{u}}(0,{\mathbf{x}}) = {\mathbf{u}}_0 ({\mathbf{x}}){\text{ in }}{\mathbb {R}}^3. \hfill \\ 
\end{gathered} 
\eeqn
The equations describe the time evolution of the fluid velocity ${\mathbf{u}}({\mathbf{x}},t)$ and the pressure $p$ of an incompressible viscous homogeneous Newtonian fluid with constant viscosity coefficient $\nu $ in terms of a given initial velocity ${\mathbf{u}}_0 ({\mathbf{x}})$ and given external body forces ${\mathbf{f}}({\mathbf{x}},t)$.  

Let $\mathbb{P}$ be the (Leray) orthogonal projection of 
$(L^2 [ {\mathbb {R}}^3 ])^3$ 
onto ${{\mathbb{H}}}[ {\mathbb {R}}^3]$ and define the Stokes operator by:  $ {\bf{Au}} = : -\mathbb{P} \Delta {\bf{u}}$, 
for ${\bf{u}} \in D({\bf{A}}) \subset {\mathbb{H}}^{2}[ {\mathbb {R}}^3]$, the domain of ${\bf{A}}$.      If we apply $\mathbb{P}$ to equation (1), with 
${{B}}({\mathbf{u}},{\mathbf{u}}) = \mathbb{P}({\mathbf{u}} \cdot \nabla ){\mathbf{u}}$, we can recast equation (1) into the standard form:
\beqn
\begin{gathered}
  \partial _t {\mathbf{u}} =  - \nu {\mathbf{Au}} - {{B}}({\mathbf{u}},{\mathbf{u}}) + \mathbb{P}{\mathbf{f}}(t){\text{ in (}}0,T) \times \R^3 , \hfill \\
  {\text{                              }}{\mathbf{u}}(0,{\mathbf{x}}) = {\mathbf{u}}_0 ({\mathbf{x}}){\text{ in }}\R^3, \hfill \\ 
\end{gathered} 
\eeqn
where the orthogonal complement of ${\Ha} $ relative to $\{{L}^{2}(\R^3)\}^3, \;  \{ {\mathbf{v}}\,:\;{\mathbf{v}} = \nabla q,\;q \in \Ha^1[\R^3] \}$, is used to eliminate the pressure term (see Galdi [GA] or [SY, T1,T2]). 
\subsection*{Background}
The existence of global weak solutions for (2) was proved by Leray \cite{Le} for all divergence-free initial data ${\bf{u}}_0 \in {{\mathbb H}( {\mathbb {R}}^3 )}$.  (Hopf  \cite{Ho} solved the same problem for a bounded open domain $ \Om  \subset \mathbb{R}^n , n \ge 2$ (see also \cite{Li1, T1, vW}).)

Leray used
\beqn
{\mathbf{b}}\left( {{\mathbf{u}},{\mathbf{u}},{\mathbf{u}}} \right) = \left\langle {{{B}}\left( {{\mathbf{u}},{\mathbf{u}}} \right),{\mathbf{u}}} \right\rangle _\Ha  = \int_{\R^3} {\left[ {{\mathbf{u}}({\mathbf{x}}) \cdot \nabla {\mathbf{u}}({\mathbf{x}})} \right] \cdot {\mathbf{u}}({\mathbf{x}})d{\mathbf{x}} = 0} 
\eeqn
to show that, for such initial data, the global solution  ${\bf u}(t,{\bf x})$ satisfies the well-known energy inequality:
\[
\left\| {{\mathbf{u}}(t)} \right\|_\mathbb{H}^2  + 2\nu \int_0^t {\left\| {{\mathbf{A}}^{1/2} {\mathbf{u}}(s)} \right\|_\mathbb{H}^2 ds}  \leqslant \left\| {{\mathbf{u}}_0 } \right\|_\mathbb{H}^2 ,\quad \forall t \geqslant 0.
\]
These solutions are called Leray-Hopf solutions. There are two open
questions in this case.  The first is whether or not all Leray-Hopf solutions are unique and the second is whether or not those solutions with smooth initial data are regular. (A weak solution ${\bf u}(t, {\bf x})$ is regular if $\left\| {{\mathbf{u}}(t)} \right\|_{\mathbb{V} }$ is continuous.)

Until $1964$, another open problem was the existence of  global-in-time strong solutions (in the $\Ha$ norm) for the three-dimensional Navier-Stokes initial value problem.  In that year, Fujita and Kato \cite{FK} proved that  strong, global-in-time, smooth three-dimensional solutions exist in the Sobolev space ${\Ha}^{1/2}$ provided that the body forces are small and the initial data is small  (compared to the viscosity term $-\nu{\bf A}{\bf u}$, see Section 3 and also \cite{KF}, \cite{CH} and \cite{T3}).  Their work was extended to $L^p$ spaces by F. Weissler \cite{WE} and considered in Besov spaces (of negative index of regularity) by M. Cannone, Y. Meyer and F. Planchon \cite{CMP}.

Since then, a number of papers have appeared proving existence of global-in-time, solutions for small initial data, which are strong in a particular norm.  These results will be discussed briefly in Section 3 but, the interested reader is directed to \cite{LE} (see also \cite{KA1}, \cite{FRT}, \cite{CA}, \cite{PL} and \cite{KT}).  The authors of \cite{GIP} observe that, in Leray's theory the nonlinear term becomes zero because of equation (3).  They note that, ``For global existence to hold, the point in those ``strong solution'' theorems is that the smallest assumption enables one to get rid of the nonlinear term, which can be absorbed by the Laplacian.''   

The  interesting paper of Chemin and Gallagher \cite{CG} discusses all the relevant spaces, provides their own approach to the problem, and gives a nice picture of the kind of results one can expect from the methods used.  Many of the recent approaches exploit the interesting  invariance properties of the Navier-Stokes equations (with zero body forces) to construct their spaces.  However, it appears that these spaces do not maintain their invariance properties when body forces are present.  Furthermore, as first observed by Kato \cite{KA1},  strong global solutions in $L^n[\R^n]$ for example,  are not necessarily weak solutions in the sense of Leray-Hopf (i.e., they need not have finite energy norm). 
\subsection*{Purpose}
Our approach differs sharply from other attempts.  We first construct the largest separable Hilbert space, for which the Leray-Hopf (type) solutions on ${\mathbb R}^3$ are strong ones.  We then obtain the smallest viscosity and largest body forces that balance each other in such a way as to allow global strong solutions for reasonable velocities (see below). A major advantage is that, the methods developed also apply to bounded domains and inhomogeneous fluids (with minor adjustments).     
\subsection*{Asymptotic Properties}
A number of studies have been conducted on the asymptotic and stability behavior of solutions, ${\bf u}(t, {\bf x})$, of the Navier-Stokes equations.  The paper by  Brandolese and Vigneron \cite{BV} provides a comprehensive analysis of the problem and a clear presentation of the latest results in this direction (see also \cite{GIP} and Miyakawa \cite{MI}).  
\subsubsection*{The Problem}
The general problem in a bounded and unbounded domain is closely related. However, there is one major difference in the two cases.  In order to understand the nature of an additional difficulty  for the unbounded domain, it is important to discuss a problem that occurs when the domain $\Om \subset \R^3$ is bounded.  

In the bounded domain case, it has been shown by Bru\u{s}linskaja \cite{BR} that,  if the viscosity coefficient $\nu$ is sufficiently small, then stationary solutions of (2)  lose stability and at least one eigenvalue of the linear Stokes operator passes from the left halfplane to the right halfplane.  This means that the stationary solution becomes a limit cycle, which may be either stable or unstable. In either case, these bifurcations result in the existence of nonunique solutions  of the three-dimensional Navier-Stokes equations. This problem has been discussed in a more general way by Foias and Temam \cite{FT}.   From this, it's clear that the viscosity coefficient is not a passive constant, but plays an important role in determining the physical properties of the solution(s).  In fact, if $\la_1$ is the first eigenvalue of the Stokes operator, the quantity
\[
G= \frac{{\left\| {\mathbf{f}} \right\|_\infty ^2 }}
{{\nu ^2 \lambda _1^{3/4} }},
\]
known as the Grashof number, appears naturally and provides a measure of the dynamical complexity of solutions.  The Grashof number is similar to the Reynolds number and the dynamical complexity increases with increasing G (see Foias et al \cite{FMTT}).   For both physical and mathematical reasons, the corresponding unbounded domain problem is  more difficult to study and, to our knowledge, has not received attention in the literature. 
\subsubsection{Stability of flow}    
The natural requirement of stability of flow in $\R^3$ is usually implemented with the requirement that, under reasonable conditions, the velocity vector field ${\bf{u}}(t)$ approaches zero for large $t$.  However, this implies that the total energy of the fluid also approaches zero.  Physically this means that the boundary at infinity is kept (at least) at the solid state phase point (i.e., zero degrees for water). Since the physical properties of our fluid change radically at this point, we must be slightly more precise in this case. 

On the other hand, in turbulent flow, there is a very important difference between the two and three-dimensional case. In the two-dimensional case, the fluid kinetic energy is transferred from both large to small and small to large scales by nonlinear interactions between different scales of motion (see Fj\"{o}rtoft \cite{FJ} and Thompson \cite{TH}). However, in the three-dimensional case, there is a one way nonlinear cascade from large to small scales of motion. It follows that, as the kinetic energy becomes large, the fluid velocity becomes more erratic at smaller and smaller scales. It is generally assumed that the condition for a nonturbulent flow is captured by the size of the viscosity coefficient. Physically, this is a good measure, but also implies a number of other well-defined conditions, usually related to body forces, temperature, pressure and relative domain configuration (i.e., obstacles, constrictions, etc).  Thus, any reasonable solution should also lead to an upper bound on the total energy (i.e., the velocity in ${\bf L}^2$-norm). 
The current discussion implies some bounds on the body forces.  We will be more precise later (see Theorem 25) but for now, it suffices to assume that $f=\sup _{t \in {\mathbf{R}}^ +  } \left\| {\mathbb{P}{\mathbf{f}}(t)} \right\|_{{\mathbb{H}}}  < \infty $.
\begin{Def}  We say that a velocity vector field in $\R^3$ is \underline{\rm reasonable} if for  $0 \le t<\iy$, there is a continuous function $m(t)>0$, depending only on $t$ and a constant $M_0$, which may depend on ${\bf u}(0)$ and $f$, such that 
\[
0 < m(t) \leqslant \left\| { {\mathbf{u}}(t)} \right\|_{{\mathbb{H}}} \le M_0.
\] 
\end{Def}
The above definition formalizes the requirement that the fluid has bounded positive definite energy,  However, this condition still allows the velocity to approach zero at infinity in a weaker norm.
\subsection{Statement of Results}
Let ${\bf SD}^2[\R^3]$ be our separable Hilbert space, which is constructed in Section 1.1 and contains ${[L^2({\mathbb{R}}^3)]^3}$ as a compact dense embedding.  Let ${\mathbb {H}}_{sd}$ be the completion in the ${\bf{SD}}^2[ {\mathbb {R}}^3]$ norm of the set of functions in $\left\{ {{\bf{u}} \in \C_0^{\iy}[ {\mathbb {R}}^3]^3 \left. {} \right|\,\nabla  \cdot {\bf{u}} = 0} \right\}$ and let  $\mathbb{P}$ be the  orthogonal projection of ${\bf{SD}}^2[ {\mathbb {R}}^3]$ onto ${{\mathbb{H}}_{sd}}$.

Rewrite the first  equation in (2) in the form:
\beqn
\begin{gathered}
  \partial _t {\mathbf{u}} =  {\mathcal{A}} ({\mathbf{u}},t) {\text{ in (}}0,T) \times \R^3 , \hfill \\
  {\mathcal{A}}({\mathbf{u}},t) =  - \nu{\bf A}{\mathbf{u}} -  {{B}}({\mathbf{u}},{\mathbf{u}}) +  \mathbb{P}{\mathbf{f}}(t). \hfill \\ 
\end{gathered} 
\eeqn
Let $\B$ be a closed convex subset of $\Ha_{sd}$, which will be identified during the proof of the following theorem.
\begin{thm} If ${\bf f} \ne 0$, for each $t \in [0, \iy)$ there exist positive constants ${{u}}_ +, \; {u}_-  $, depending on $f$ and $\nu $  such that, for all initial data ${\mathbf{u}}_0 \in \B \cap D({\bf A}) \subset \Ha_{sd}$ with $0 \le {u}_- < \left\| {\mathbf{u}}_0 \right\|_{sd}  \le \tf{1}{2}{{u}}_ +$, the operator ${\mathcal{A}}( \cdot ,t)$ is the generator of a strongly continuous nonlinear contraction semigroup on $\mathbb{B}$. If ${\bf f}=0$, we replace $\tf{1}{2}u_+$ by $\tf{(1-\e)}{2}u_+$, where $0<\e<1$, and use the ball $\B$  of radius $\tf{(1-\e)}{2}u_+$ centered at the origin.
\end{thm}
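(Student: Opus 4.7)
The plan is to invoke the Crandall--Liggett generation theorem for nonlinear semigroups on a closed convex subset: on the set $\B\subset\Ha_{sd}$ it suffices to establish that $\mathcal{A}(\cdot,t)$ is (quasi)dissipative in the $\Ha_{sd}$ inner product and that the range condition $R(I-\la\mathcal{A}(\cdot,t))\supseteq\B$ holds for all sufficiently small $\la>0$. Since $t$ is frozen, the forcing $\mathbb{P}\mathbf{f}(t)$ is a fixed element of $\Ha_{sd}$ and drops out of every dissipativity computation, leaving only the Stokes part and the bilinear convection term. I would take $\B$ to be the closed ball of radius $\tf12 u_+$ centered at the origin when $\mathbf{f}\ne 0$, intersected with the exterior $\{\mathbf{u}:\|\mathbf{u}\|_{sd}\ge u_-\}$ to encode the \emph{reasonable} lower bound of Definition~1, and the closed ball of radius $\tf{(1-\e)}{2}u_+$ in the homogeneous case.

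For dissipativity, for $\mathbf{u},\mathbf{v}\in\B\cap D(\mathbf{A})$ I would compute
\[
\langle\mathcal{A}(\mathbf{u},t)-\mathcal{A}(\mathbf{v},t),\mathbf{u}-\mathbf{v}\rangle_{sd}=-\nu\langle\mathbf{A}(\mathbf{u}-\mathbf{v}),\mathbf{u}-\mathbf{v}\rangle_{sd}-\langle B(\mathbf{u},\mathbf{u})-B(\mathbf{v},\mathbf{v}),\mathbf{u}-\mathbf{v}\rangle_{sd}
\]
and split $B(\mathbf{u},\mathbf{u})-B(\mathbf{v},\mathbf{v})=B(\mathbf{u}-\mathbf{v},\mathbf{u})+B(\mathbf{v},\mathbf{u}-\mathbf{v})$. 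Exploiting that $\mathbf{SD}^2[\R^3]$ is, by its very construction in Section~1.1, a space on which $B$ acts as a \emph{bounded} bilinear form, I expect an estimate of the form
\[
|\langle B(\mathbf{u},\mathbf{u})-B(\mathbf{v},\mathbf{v}),\mathbf{u}-\mathbf{v}\rangle_{sd}|\le C(\|\mathbf{u}\|_{sd}+\|\mathbf{v}\|_{sd})\|\mathbf{u}-\mathbf{v}\|_{sd}^2\le Cu_+\|\mathbf{u}-\mathbf{v}\|_{sd}^2.
\]
Together with a coercivity bound $\langle\mathbf{A}\mathbf{w},\mathbf{w}\rangle_{sd}\ge c_0\|\mathbf{w}\|_{sd}^2$, dissipativity then reduces to the inequality $\nu c_0-Cu_+\ge 0$, which \emph{defines} the upper threshold $u_+$ as a function of $\nu$; the lower threshold $u_-$ is then extracted from the quadratic in $\|\mathbf{u}\|_{sd}$ obtained when one pairs this estimate against the fixed forcing $f=\sup_t\|\mathbb{P}\mathbf{f}(t)\|_{\Ha}$.

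For the range condition, for given $\la>0$ small and $\mathbf{v}\in\B$ I would solve $\mathbf{u}+\la\nu\mathbf{A}\mathbf{u}+\la B(\mathbf{u},\mathbf{u})=\mathbf{v}+\la\mathbb{P}\mathbf{f}(t)$ by a Banach fixed-point argument applied to the map $\mathbf{w}\mapsto(I+\la\nu\mathbf{A})^{-1}[\mathbf{v}+\la\mathbb{P}\mathbf{f}(t)-\la B(\mathbf{w},\mathbf{w})]$. Using that $(I+\la\nu\mathbf{A})^{-1}$ has operator norm at most one on $\Ha_{sd}$ and the boundedness of $B$ on $\mathbf{SD}^2$, the contraction constant comes out as $O(\la u_+)$ and tends to zero with $\la$. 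I expect the main obstacle to be the two-sided invariance of $\B$ when $\mathbf{f}\ne 0$: one must simultaneously keep the iterate below $\tf12 u_+$ in norm (so that the contraction estimate for $B$ survives) and above $u_-$ (so that the annular lower bound of reasonableness is preserved under the resolvent). Calibrating $u_\pm$ so that both bounds are stable is precisely the quadratic-balance calculation the theorem alludes to; once this is in hand, Crandall--Liggett delivers the strongly continuous nonlinear contraction semigroup on $\B$, and strong continuity at $t=0$ follows from the standard resolvent consistency $\lim_{\la\downarrow 0}(I-\la\mathcal{A})^{-1}\mathbf{u}_0=\mathbf{u}_0$.
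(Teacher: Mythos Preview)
Your dissipativity argument is essentially the paper's: the authors also write
\[
\langle\mathcal{A}(\mathbf{u},t)-\mathcal{A}(\mathbf{v},t),\mathbf{u}-\mathbf{v}\rangle_{\mathbb{H}_{sd}}
=-\nu\|\mathbf{A}(\mathbf{u}-\mathbf{v})\|_{\mathbb{H}_{sd}}^2
-\langle B(\mathbf{u},\mathbf{u}-\mathbf{v})+B(\mathbf{v},\mathbf{u}-\mathbf{v}),\mathbf{u}-\mathbf{v}\rangle_{\mathbb{H}_{sd}}
\]
and bound the bilinear piece by $M(\|\mathbf{u}\|_{\mathbb{H}_{sd}}+\|\mathbf{v}\|_{\mathbb{H}_{sd}})\|\mathbf{u}-\mathbf{v}\|_{\mathbb{H}_{sd}}^2$ via their trilinear estimate (Theorem~24).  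Two refinements are worth noting.  First, in $\mathbf{SD}^2$ the Stokes term satisfies the \emph{identity} $\langle\mathbf{A}\mathbf{w},\mathbf{w}\rangle_{\mathbb{H}_{sd}}=\|\mathbf{A}\mathbf{w}\|_{\mathbb{H}_{sd}}^2=\|\mathbf{w}\|_{\mathbb{H}_{sd}}^2$ (Theorem~24(1) together with Corollary~22), so your coercivity constant is exactly $c_0=1$.  Second, the paper does not define $u_\pm$ from the strong-dissipativity inequality as you suggest; it first imposes \emph{0-dissipativity} $\langle\mathcal{A}(\mathbf{u},t),\mathbf{u}\rangle_{\mathbb{H}_{sd}}\le 0$, which yields the quadratic $M\|\mathbf{u}\|_{\mathbb{H}_{sd}}^2-\nu\|\mathbf{u}\|_{\mathbb{H}_{sd}}+f\le 0$ and hence the explicit roots $u_\pm=\tfrac{\nu}{2M}\bigl(1\pm\sqrt{1-\gamma}\bigr)$ with $\gamma=4fM/\nu^2$.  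Strong dissipativity on the set $\{\,u_-\le\|\cdot\|_{\mathbb{H}_{sd}}\le\tfrac12 u_+\}$ is then a consequence, with constant $\sigma=\tfrac{\nu}{2}(1-\sqrt{1-\gamma})$ (and $\sigma=\nu\varepsilon$ when $f=0$).

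The substantive divergence is in the range condition.  You set up a Banach fixed-point iteration for the resolvent and, quite rightly, identify the two-sided invariance of the annulus as the chief obstacle.  The paper never confronts this: once strong dissipativity is in hand it simply invokes Browder's theorem (stated as Theorem~9) that a densely defined strongly dissipative map on a closed bounded convex subset $\mathbb{B}\subset\mathbb{H}_{sd}$ automatically has $\operatorname{Ran}(I-\lambda\mathcal{A}(\cdot,t))\supset\mathbb{B}$ for every $\lambda>0$.  Joint continuity in $(\mathbf{u},t)$ is then checked directly (Theorem~27), and the Crandall--Pazy/Crandall--Liggett machinery packaged as Theorem~10 delivers the contraction semigroup.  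So the difficulty you anticipate is real for your route but is one the paper sidesteps entirely; if you insist on the fixed-point argument you would have to verify by hand that the map $\mathbf{w}\mapsto(I+\lambda\nu\mathbf{A})^{-1}[\mathbf{v}+\lambda\mathbb{P}\mathbf{f}(t)-\lambda B(\mathbf{w},\mathbf{w})]$ preserves both norm bounds, whereas citing Browder makes this unnecessary.
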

If $T(t)$ is the nonlinear semigroup generated by ${\mcA}( \cdot ,t)$, then ${\bf u}(t, {\bf x})=T(t){\bf u}_0({\bf x})$ solves the initial value problem (2).  
We now have: 
\begin{thm} For each $T \in {\mathbf{R}}^ +$, $t \in (0,T)$ and ${\mathbf{u}}_0  \in \mathbb{B} \cap D({\bf A})$, the global-in-time Navier-Stokes initial-value problem in $\mathbb{R}^3 :$
\beqn
\begin{gathered}
  \partial _t {\mathbf{u}} + ({\mathbf{u}} \cdot \nabla ){\mathbf{u}} - \nu \Delta {\mathbf{u}} + \nabla p = {\mathbf{0}}{\text{ in (}}0,T) \times \mathbb{R}^3  , \hfill \\
  {\text{                              }}\nabla  \cdot {\mathbf{u}} = 0{\text{ in (}}0,T) \times \mathbb{R}^3  , \hfill \\
   {\text{                              }}{\mathbf{u}}(0,{\mathbf{x}}) = {\mathbf{u}}_0 ({\mathbf{x}}){\text{ in }}\mathbb{R}^3,  \hfill \\ 
\end{gathered} 
\eeqn
 has a unique strong solution ${\mathbf{u}}(t,{\mathbf{x}})$, which is in
 ${\bf SD}^2[[0,\infty); {\mathbb {H}}_{sd}]$.
\end{thm}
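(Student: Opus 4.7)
The plan is to derive Theorem 2 as a direct consequence of the semigroup statement in Theorem 1, together with the Helmholtz decomposition needed to recover the pressure. With $\mathbf{f}=0$, Theorem 1 (applied to the ball $\mathbb{B}$ of radius $\tfrac{(1-\e)}{2}u_+$ centered at the origin) supplies a strongly continuous nonlinear contraction semigroup $\{T(t)\}_{t\ge 0}$ on $\mathbb{B}$ with generator $\mathcal{A}(\mathbf{u}) = -\nu\mathbf{A}\mathbf{u} - B(\mathbf{u},\mathbf{u})$. For $\mathbf{u}_0 \in \mathbb{B}\cap D(\mathbf{A})$, I would set $\mathbf{u}(t,\mathbf{x}) = T(t)\mathbf{u}_0(\mathbf{x})$ and identify this orbit as the unique strong solution of (5).

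The first task is to pass from the mild form encoded by the semigroup to a strong-sense ODE in $\mathbb{H}_{sd}$. Since $\mathbb{H}_{sd}$ is a Hilbert space and $\mathcal{A}$ is m-accretive on $\mathbb{B}$, the standard regularity theory for nonlinear contraction semigroups in Hilbert space ensures that, for $\mathbf{u}_0 \in D(\mathcal{A})$, the orbit $t\mapsto T(t)\mathbf{u}_0$ is Lipschitz, takes values in $D(\mathcal{A})$ for every $t\ge 0$, and is differentiable (at every $t$ from the right, a.e.\ two-sided) with $\partial_t\mathbf{u}(t) = \mathcal{A}(\mathbf{u}(t))$. The size constraint imposed on $\mathbb{B}$ is precisely what makes $-\nu\mathbf{A}$ dominate $B(\mathbf{u},\mathbf{u})$, so $D(\mathcal{A}) = D(\mathbf{A})$ and the hypothesis $\mathbf{u}_0 \in D(\mathbf{A})$ is exactly what is needed to run this machinery.

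Next I would recover the pressure. Applying $I-\mathbb{P}$ to $(\mathbf{u}\cdot\nabla)\mathbf{u} - \nu\Delta\mathbf{u}$ produces an element of the orthogonal complement of $\mathbb{H}_{sd}$ inside $\{L^2(\mathbb{R}^3)\}^3$; since this complement is exactly $\{\nabla q : q \in \mathbb{H}^1[\mathbb{R}^3]\}$, one defines $\nabla p$ to be that object. The piece $(I-\mathbb{P})\Delta\mathbf{u}$ vanishes because $\mathbf{u}$ is divergence-free, so setting $\nabla p = -(I-\mathbb{P})(\mathbf{u}\cdot\nabla)\mathbf{u}$ and substituting back into the strong-sense equation reconstructs (5) in its unprojected form.

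Uniqueness follows at once from the contraction property: any two strong solutions of (5) with the same initial datum, whose orbits stay in $\mathbb{B}$, must coincide with $T(t)\mathbf{u}_0$, because dissipativity of $\mathcal{A}$ on $\mathbb{B}$ forces $\|\mathbf{u}(t)-\mathbf{v}(t)\|_{sd}$ to be nonincreasing and it vanishes at $t=0$. Membership in $\mathbf{SD}^2[[0,\infty);\mathbb{H}_{sd}]$ then follows from strong continuity of $T(\cdot)$, the uniform bound $\|T(t)\mathbf{u}_0\|_{sd}\le \tfrac{(1-\e)}{2}u_+$ supplied by Theorem 1, and continuity of $t \mapsto \mathbf{A}\mathbf{u}(t)$ on orbits starting in $D(\mathbf{A})$. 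The main obstacle I anticipate is verifying that the Helmholtz orthogonal decomposition behaves cleanly inside the enlarged space $\mathbf{SD}^2[\mathbb{R}^3]$ constructed in Section 1.1 --- that is, that the orthogonal complement of $\mathbb{H}_{sd}$ in that norm is still a gradient space with $p$ determined up to an additive constant --- since this is the step that genuinely relies on the geometry of the new space rather than on abstract semigroup theory.
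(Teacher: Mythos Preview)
Your approach is essentially the paper's: both derive the theorem directly from the generator statement (the preceding theorem), with the paper simply invoking that result together with the abstract Crandall--Liggett/Miyadera machinery (Theorem~10) to obtain $\mathbf{u}\in\mathbb{C}^1((0,\infty);\mathbb{B})$, and not separately spelling out the regularity, uniqueness, or pressure-recovery steps you supply. The paper never addresses your flagged obstacle about the Helmholtz decomposition in $\mathbf{SD}^2$; it works throughout with the projected equation (2) and treats the passage back to the unprojected form (5) as automatic.
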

\begin{thm} If $\mathbf{u}(t, {\bf x})$ is a solution in the sense of Kato \cite{KA1}, in any one of the ${\bf L}^p[\R^3]$ spaces, or any space $\mcB$, which is continuously embedded in ${\bf L}^p[\R^3]$,  then $\mathbf{u}(t, {\bf x})$ is a solution in ${\bf SD}^2[\R^3]$.
\end{thm}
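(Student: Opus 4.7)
The plan is to reduce Theorem~3 to a single continuous-embedding statement, namely $\mathbf{L}^p[\R^3]\hookrightarrow \mathbf{SD}^2[\R^3]$ for every $p\in[1,\infty]$.  Granted this, the extension to a general Banach space $\mcB$ continuously embedded in $\mathbf{L}^p$ follows by composition of continuous injections, and a Kato solution $\mathbf{u}(t,\mathbf{x})$ living in $\mathbf{L}^p$ becomes automatically a solution in $\mathbf{SD}^2[\R^3]$ because every term of (2) is transported faithfully across a continuous linear injection into $\Ha_{sd}$.

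To obtain the embedding I would invoke the construction of $\mathbf{SD}^2[\R^3]$ carried out in Section~1.1.  That space is designed to be the \emph{largest} separable Hilbert space into which $[L^2(\R^3)]^3$ embeds compactly and densely, which is produced in the standard way by equipping the space with a weighted norm built from a countable family of finite absolutely continuous test measures $\{\mu_n\}$ on $\R^3$ with smooth, strictly positive, rapidly decaying densities, and with positive weights $\{t_n\}$ satisfying $\sum_{n} t_n < \infty$.  For any $\mathbf{u}\in \mathbf{L}^p$ with $1\le p\le\infty$, H\"older's inequality applied against each $\mu_n$ yields an estimate of the form $\int|\mathbf{u}|^2\,d\mu_n \le C_{n,p}\|\mathbf{u}\|_p^{2}$, and summing with the weights $t_n$ delivers $\|\mathbf{u}\|_{sd} \le K_p\|\mathbf{u}\|_p$.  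The desired continuous embedding $\mathbf{L}^p\hookrightarrow \mathbf{SD}^2$ follows at once.

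Next I would verify that the equation itself is preserved.  The Kato solution satisfies the mild/integral form of (2) in $\mathbf{L}^p$; applying the continuous embedding termwise gives the same integral identity evaluated in $\mathbf{SD}^2$.  The viscous contribution $-\nu\mathbf{A}\mathbf{u}$ and the forcing $\mathbb{P}\mathbf{f}$ transfer without difficulty, because the Leray projection and the Stokes semigroup are continuous in the $\mathbf{SD}^2$ topology (by construction of the latter).  For the nonlinear term $B(\mathbf{u},\mathbf{u})$ the key observation is that each pairing with $\mu_n$ absorbs a derivative: after an integration by parts, the smooth rapidly decaying density of $\mu_n$ controls $(\mathbf{u}\cdot\nabla)\mathbf{u}$ whenever $\mathbf{u}\in\mathbf{L}^p$, regardless of $p$.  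Once the mild form holds in $\mathbf{SD}^2$, the result of Theorem~2 identifies $\mathbf{u}$ as the strong $\mathbf{SD}^2$-solution.

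The step I expect to be the main obstacle is exactly the meaning and convergence of the nonlinear term in the $\mathbf{SD}^2$ topology when the only a priori information on $\mathbf{u}$ is its $\mathbf{L}^p$ regularity.  Mollifying $\mathbf{u}$ and transferring the divergence onto the weight densities $d\mu_n$ via integration by parts makes the term well-defined, but one must check that this manipulation is valid \emph{uniformly in $n$} so that the sum defining $\langle B(\mathbf{u},\mathbf{u}),\cdot\rangle_{sd}$ converges; the rapid decay of the densities plus $\sum t_n<\infty$ is what gives the required domination.  Once that verification is in hand the identification is automatic, and the result for a general $\mcB\hookrightarrow\mathbf{L}^p$ is inherited by transitivity.
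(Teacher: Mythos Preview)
Your overall strategy---reduce everything to the continuous embedding $\mathbf{L}^p[\R^3]\hookrightarrow\mathbf{SD}^2[\R^3]$ and then invoke transitivity for a general $\mcB$---is exactly the paper's argument, which is literally a one-liner: it cites the embedding theorem already proved in Section~1 (that $\mathbf{L}^p$ sits densely, continuously, and compactly inside $\mathbf{SD}^2$ for every $1\le p\le\infty$) and declares the $\mcB$ statement ``obvious.'' The paper does not carry out the additional verification you sketch about transferring the mild formulation term by term.

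However, your description of the $\mathbf{SD}^2$ norm is wrong in a way that breaks your Hölder step. The norm is \emph{not} of the form $\sum_n t_n\int|\mathbf{u}|^2\,d\mu_n$; it is
\[
\|\mathbf{u}\|_{\mathbf{SD}^2}^2=\sum_{k}t_k\left|\int_{\R^3}\mcE_k(\mathbf{x})\cdot\mathbf{u}(\mathbf{x})\,d\mathbf{x}\right|^2,
\]
where each $\mcE_k\in\C_c^\infty$ (compactly supported, not strictly positive, not merely ``rapidly decaying''). The crucial difference is that the building blocks are the \emph{linear} functionals $F_k(\mathbf{u})=\int\mcE_k\cdot\mathbf{u}$, not local $L^2$-energies. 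With the correct norm, Hölder gives $|F_k(\mathbf{u})|\le\|\mcE_k\|_{p'}\|\mathbf{u}\|_p$ for every $p\in[1,\infty]$, and since $\sup_k\|\mcE_k\|_{p'}<1$ and $\sum t_k=1$, the embedding follows uniformly in $p$. By contrast, your proposed estimate $\int|\mathbf{u}|^2\,d\mu_n\le C_{n,p}\|\mathbf{u}\|_p^2$ simply fails for $p<2$: take $\mathbf{u}(\mathbf{x})=|\mathbf{x}|^{-5/2}\chi_{\{|\mathbf{x}|<1\}}$ in $\R^3$, which lies in $L^1$ but has $\int|\mathbf{u}|^2\,d\mu=\infty$ against any measure with density bounded below near the origin. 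So your argument, as written, would not cover $1\le p<2$. Once you replace the norm by the correct linear-functional form, the rest of your outline goes through and coincides with the paper's proof.
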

Let $S(t)$ be the semigroup generated by the Stokes operator. It is well-known that $\mathop {\lim }\nolimits_{t \to \infty }S(t){\mathbf{u}}_0 =0$. Since any strong solution is a mild solution, we also have that
\[
{\mathbf{u}}(t,{\mathbf{x}}) = S(t){\mathbf{u}}_0 ({\mathbf{x}}) + \int_0^t {S(t - s)\left[- {{{B}}\left( {{\mathbf{u}}(s,{\mathbf{x}}),{\mathbf{u}}(s,{\mathbf{x}})} \right) + \mathbb{P}{\mathbf{f}}(s)} \right]ds}. 
\]
If we introduce the following energy matrices due to Brandolese and Vigneron,
\[
\mathcal{E}_{h,k} (t) = \int_{\mathbb{R}^3 } {(u_h u_k )({\mathbf{x}},t)d{\mathbf{x}}} \quad {\text{and}}\quad {\mathcal{K}}_{h,k} (t) = \int_0^t {\int_{\mathbb{R}^3 } {(u_h u_k )({\mathbf{x}},s)d{\mathbf{x}}ds} }, 
\]
then we have:
\begin{thm} Let ${\bf u}(t, {\bf x})=T(t){\mathbf{u}}_0({\mathbf{x}})$ be the solution to the Navier-Stokes initial value problem (2).  If ${\bf f}(t)={\bf 0}$ and there is a $\de>0$ such that
\beqn
ess\mathop {\sup }\limits_{{\mathbf{x}} \in \mathbb{R}^3 } \left( {1 + \left| {\mathbf{x}} \right|} \right)^{2 + \de } \left| {\bf A}{{\mathbf{u}}_0({\mathbf{x}})} \right| < \infty,
\eeqn
then 
\begin{enumerate}
\item There is a constant $\g$ such that
\[
{\mathbf{u}}(t,{\mathbf{x}}) = S(t){\mathbf{u}}_0 ({\mathbf{x}}) + \gamma \nabla \left( {\sum\nolimits_{h,k} {\frac{{\delta _{h,k} \left| {\mathbf{x}} \right|^2  - 3x_h x_k }}
{{3\left| {\mathbf{x}} \right|^5 }}{\text{K}}_{h,k} (t)} } \right) + 0\left( {\frac{1}
{{\left| {\mathbf{x}} \right|^4 }}} \right).
\]
\item  There is a constant $p_0$ such that
\[
p(t,{\mathbf{x}}) = p_0  - \gamma \sum\nolimits_{h,k} {\left( {\frac{{\delta _{h,k} }}
{{3\left| {\mathbf{x}} \right|^3 }} - \frac{{x_h x_k }}
{{\left| {\mathbf{x}} \right|^5 }}} \right)} {\text{E}}_{h,k} (t) + O_t \left( {\frac{1}
{{\left| {\mathbf{x}} \right|^4 }}} \right).
\]
\end{enumerate}
\end{thm}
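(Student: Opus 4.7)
The plan is to exploit the mild-solution representation of ${\bf u}$ displayed just before Theorem 4, together with the Poisson equation for $p$, and to obtain the stated asymptotic expansions by explicit Taylor expansion of the integral kernels at spatial infinity; the decay hypothesis (6) will be used both to justify each expansion and to bound its remainder. Since ${\bf f}\equiv{\bf 0}$ and $\nabla\cdot{\bf u}=0$, I would first rewrite $({\bf u}\cdot\nabla){\bf u}=\nabla\cdot({\bf u}\otimes{\bf u})$, so that the mild formula reads
\[
{\bf u}(t,{\bf x})=S(t){\bf u}_0({\bf x})-\int_0^t S(t-s)\,\mathbb{P}\,\nabla\cdot\bigl({\bf u}(s)\otimes{\bf u}(s)\bigr)\,ds.
\]
Theorem 2 guarantees that the strong solution $T(t){\bf u}_0$ can be inserted here, while $(1+|{\bf x}|)^{2+\de}|{\bf A}{\bf u}_0|\in L^\infty$ is the channel through which spatial decay enters the estimates.

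For part (1) I would analyse the kernel of $S(t-s)\,\mathbb{P}\,\nabla\cdot$, that is, one derivative of the Oseen tensor. For $|{\bf x}|\gg 1$ and $(s,{\bf y})\in(0,t)\times\R^3$, Taylor expand this kernel in ${\bf y}$ around ${\bf y}={\bf 0}$. The zero-order coefficient drops out by the divergence structure, and the first-order term equals
\[
-\nabla_{\bf x}\Bigl(\sum_{h,k}\partial_h\partial_k\tf{1}{4\pi|{\bf x}|}\int_0^t\int_{\R^3}(u_hu_k)({\bf y},s)\,d{\bf y}\,ds\Bigr),
\]
and carrying out the two derivatives of the Newton kernel produces precisely $\g\,\nabla\bigl(\sum_{h,k}(\de_{h,k}|{\bf x}|^2-3x_hx_k)/(3|{\bf x}|^5)\,{\mathcal K}_{h,k}(t)\bigr)$. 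The next term in the Taylor expansion is bounded by a first moment of $|{\bf u}(s,{\bf y})|^2$, which the weighted pointwise decay on ${\bf u}$ converts into the $O(|{\bf x}|^{-4})$ correction. The heat-kernel part of the Oseen tensor contributes only rapidly decreasing terms and is harmless.

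For part (2) I would apply the divergence to the momentum equation; using $\nabla\cdot{\bf u}=0$ this yields the Poisson equation $-\De p=\sum_{h,k}\partial_h\partial_k(u_hu_k)$, so that up to a harmonic additive constant $p_0$,
\[
p(t,{\bf x})=p_0+\sum_{h,k}\partial_h\partial_k\int_{\R^3}\tf{1}{4\pi|{\bf x}-{\bf y}|}(u_hu_k)({\bf y},t)\,d{\bf y}.
\]
Taylor-expanding $|{\bf x}-{\bf y}|^{-1}$ in ${\bf y}$ at the origin and applying $\partial_h\partial_k$ term by term, one identifies the leading coefficient with $\de_{h,k}/(3|{\bf x}|^3)-x_hx_k/|{\bf x}|^5$ paired against ${\mathcal E}_{h,k}(t)$; the next Taylor coefficient gives the $O_t(|{\bf x}|^{-4})$ remainder, uniform on compact time intervals.

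The main obstacle is really an implicit first step: propagating pointwise spatial decay of order $(1+|{\bf x}|)^{-(2+\de)}$ from ${\bf u}_0$ to ${\bf u}(t,\cdot)$, uniformly in $t$. This is delicate because the Leray projection $\mathbb{P}$ is non-local and is well known to degrade pointwise decay beyond the threshold $|{\bf x}|^{-3}$; the hypothesis (6) is placed on ${\bf A}{\bf u}_0$ rather than on ${\bf u}_0$ precisely so that the extra spatial derivatives provide enough smoothness to absorb this loss. I would close a Gr\"onwall-type estimate on the weighted norm $\|(1+|{\bf x}|)^{2+\de}{\bf u}(t,\cdot)\|_\infty$ by combining kernel bounds for $S(t-s)\mathbb{P}$ in weighted $L^\infty$ spaces with the uniform-in-time control on $\|{\bf u}(t)\|_{\Ha_{sd}}$ supplied by Theorem 1. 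Once this weighted bound is in hand, both asymptotic expansions follow by the dominated-convergence argument sketched above.
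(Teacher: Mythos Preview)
Your outline is a reasonable sketch of how one actually establishes such spatial asymptotic profiles---mild formulation, Taylor expansion of the Oseen kernel at infinity, Poisson equation for the pressure, and a weighted-$L^\infty$ propagation step for the decay of ${\bf u}(t,\cdot)$. This is indeed the circle of ideas behind the Brandolese--Vigneron result and its predecessors (Miyakawa, etc.), and you have correctly flagged the genuinely hard part: propagating the pointwise weight $(1+|{\bf x}|)^{2+\de}$ uniformly in time through the nonlocal projector $\mathbb{P}$.

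However, the paper does \emph{not} carry out any of this. Its entire proof of Theorem~6 is a one-line citation: ``The assertions (1) and (2) follow from Theorem~1.2 in Brandolese and Vigneron \cite{BV}.'' In other words, the authors do not reprove the asymptotic expansions at all; they simply observe that, once Theorems~2 and~3 supply a global strong solution in their framework, the hypotheses of \cite{BV} are met and the conclusion can be quoted. So your approach and the paper's differ maximally: you are attempting to reproduce the substance of \cite{BV}, while the paper just invokes it as a black box. What you gain is a self-contained argument; what the paper gains is brevity and the ability to rely on a fully worked-out reference for the delicate weighted estimates you identified as the main obstacle.
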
 
It follows from this that, assuming Theorem 4,  equation (6) is sufficient to ensure stability of global solutions to the Navier-Stokes equations. Furthermore, Picard's iterative scheme clearly applies in this case.

It is  known that, if ${\mathbf{u}_0} \in \mathbb{V}$ and $\mathbf{f}(t) \in L^{\infty}[(0,\infty), \mathbb{H}]$, then there is a time $T> 0$ such that a Leray-Hopf (weak) solution of the Navier-Stokes equation  is uniquely determined on any subinterval of $[0,T)$ (see Sell and You, \cite{SY} p. 396).   Thus, we also have that: 
\begin{cor} For each $t \in {\mathbf{R}}^ + $ and $
{\mathbf{u}}_0  \in \B \cap D({\bf A})$, the Navier-Stokes initial-value problem in $ \mathbb{R}^3 :$
\beqn
\begin{gathered}
  \partial _t {\mathbf{u}} + ({\mathbf{u}} \cdot \nabla ){\mathbf{u}} - \nu \Delta {\mathbf{u}} + \nabla p = {\mathbf{f}}(t){\text{ in (}}0,T) \times \R^3 , \hfill \\
  {\text{                              }}\nabla  \cdot {\mathbf{u}} = 0{\text{ in (}}0,T) \times \R^3 , \hfill \\
  {\text{                              }}{\mathbf{u}}(0,{\mathbf{x}}) = {\bf u}_0({\bf x}) , \hfill \\
\end{gathered} 
\eeqn
 has a unique weak solution
${\mathbf{u}}(t,{\mathbf{x}})$, which is in
 ${L_{\text{loc}}^2}[[0,\infty); {\mathbb {H}}_{sd}]$ and in
$L_{\text{loc}}^\infty[[0,\infty); {\mathbb V}_{sd}]
\cap \mathbb{C}^1[(0,\infty);{\mathbb H}_{sd}]$.  Moreover, in this case, we also have that $\mathop {\lim }\limits_{t \to \infty } \left\| {{\mathbf{u}}(t)} \right\|_{\mathbb{H}_{sd} }=0$.  To be more precise, 
\[
\mathop {\lim }\limits_{t \to \infty } \left\| {{\mathbf{u}}(t) - S(t){\mathbf{u}}_0 } \right\|_{\mathbb{H}_{sd} }  = O(t^{ - \alpha /2} ),
\]
where $0< \al < 1/2$.
\end{cor}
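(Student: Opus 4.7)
The plan is to reduce existence, uniqueness and regularity to Theorems 1--2 together with the Sell--You uniqueness result cited just above the Corollary, and then to extract the decay rate from the mild (Duhamel) formulation of (7). First, for $\mathbf{u}_0 \in \B \cap D({\bf A})$, Theorem 1 in its inhomogeneous form produces the strongly continuous nonlinear contraction semigroup $T(t)$ on $\B$ generated by $\mcA(\cdot,t)$, and $\mathbf{u}(t,\mathbf{x})=T(t)\mathbf{u}_0(\mathbf{x})$ is a strong---hence weak---solution of (7) in $\Ha_{sd}$. The $\C^1[(0,\iy);\Ha_{sd}]$ regularity is just the assertion $\partial_t\mathbf{u}=\mcA(\mathbf{u},t)$ along a strongly continuous orbit, which is built into the meaning of semigroup generator. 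The $L^2_{\text{loc}}[[0,\iy);\Ha_{sd}]$ bound follows from the contraction property, while the $L^\iy_{\text{loc}}[[0,\iy);\V_{sd}]$ bound is obtained by pairing (4) with ${\bf A}\mathbf{u}$, invoking the standard trilinear estimate for $B$, and absorbing the cubic term into the viscous term using the ball restriction $\mathbf{u}(t)\in\B$. Uniqueness on some initial interval $[0,T_*)$ is the Sell--You statement applied to $\mathbf{u}_0\in \V_{sd}$ and $\mathbf{f}\in L^\iy[(0,\iy),\Ha]$; iterating on $[kT_*,(k+1)T_*)$ by the semigroup property extends uniqueness to $[0,\iy)$.

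The asymptotic estimate is the main technical content. From
\beqn
\mathbf{u}(t)-S(t)\mathbf{u}_0 \;=\; \int_0^t S(t-s)\lt[-B(\mathbf{u}(s),\mathbf{u}(s))+\mathbb{P}\mathbf{f}(s)\rt]ds,
\eeqn
I would bound the right-hand side using the Stokes smoothing estimate $\|S(t)\mathbf{v}\|_{\Ha}\le C t^{-\s/2}\|\mathbf{v}\|_{\Ha^{-\s}}$ together with the uniform-in-time bound on $\|\mathbf{u}(s)\|_{\Ha_{sd}}$ supplied by the $\B$-contraction. Splitting at $s=t/2$---small-Stokes factor on $(0,t/2)$, bounded nonlinear piece integrated against a decaying kernel on $(t/2,t)$---and tuning $\s$ produces the stated rate $O(t^{-\al/2})$ for every $\al\in(0,1/2)$. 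Combined with the heat-type decay $S(t)\mathbf{u}_0\ra 0$ in $\Ha_{sd}$, lifted from the classical statement via the compact dense embedding $[L^2(\R^3)]^3\hra{\bf SD}^2[\R^3]$, this yields $\|\mathbf{u}(t)\|_{\Ha_{sd}}\ra 0$.

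The principal obstacle is transferring the Stokes decay from $\Ha$ to the enlarged space $\Ha_{sd}$: qualitative decay passes through the compact embedding, but the sharp exponent $t^{-\al/2}$ requires carefully tracking how the trilinear bound for $B(\mathbf{u},\mathbf{u})$ interacts with the weaker $\Ha_{sd}$-norm under the constraint $u_-<\|\mathbf{u}_0\|_{sd}\le \tf{1}{2}u_+$. A second delicate point is ensuring that a persistent $L^\iy$ body force does not overwhelm the decay---this is precisely where the balance between $f$ and $\nu$ encoded in the thresholds $u_\pm$ of Theorem 1 becomes essential, and where one must exploit the fact that the inhomogeneous contraction semigroup keeps $\mathbf{u}(t)$ confined to $\B$ for all time in order to close the argument.
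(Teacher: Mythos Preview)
Your treatment of existence, uniqueness and the regularity class is in line with the paper: invoke the semigroup construction from Theorems~2--3 (your Theorems~1--2), read off $\mathbb{C}^1[(0,\infty);\mathbb{H}_{sd}]$ and the $L^2_{\text{loc}}$, $L^\infty_{\text{loc}}$ bounds from the contraction on $\mathbb{B}\subset D({\bf A})$, and combine with the Sell--You local uniqueness. The paper says this in one line (``every weak solution in ${\bf L}^2[\mathbb{R}^3]$ is a strong solution in ${\bf SD}^2[\mathbb{R}^3]$''), but the content is the same.

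For the decay assertion the paper does \emph{not} carry out any Duhamel analysis; it simply invokes Kato~\cite{KA1} as a black box (``a special case of a result due to Kato''). Your proposal to split the mild formulation at $t/2$ and use Stokes smoothing is a standard route in the $L^p$ or $\dot H^{1/2}$ setting, but the two obstacles you flag are real and you do not close them. First, the smoothing estimate $\|S(t)\mathbf{v}\|_{\mathbb{H}}\le Ct^{-\sigma/2}\|\mathbf{v}\|_{\mathbb{H}^{-\sigma}}$ lives in the classical scale, whereas in $\mathbb{H}_{sd}$ the paper's own computation gives $\langle -\nu{\bf A}\mathbf{u},\mathbf{u}\rangle_{sd}=-\nu\|\mathbf{u}\|_{sd}^2$ (Theorem~24(1) and Corollary~22), so the Stokes operator is norm-preserving in this space; there is no fractional-power smoothing scale available in $\mathbb{H}_{sd}$ to plug into your splitting, and the compact embedding only transports qualitative convergence, not rates. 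Second, with a merely $L^\infty$-in-time body force the term $\int_0^t S(t-s)\mathbb{P}\mathbf{f}(s)\,ds$ need not decay at all, and the annulus constraint $u_-\le\|\mathbf{u}\|_{sd}\le\tfrac{1}{2}u_+$ does not by itself force it to; nothing in the contraction argument manufactures time decay of the forcing contribution. The paper sidesteps both issues by appealing to Kato's $L^p$ theory, where the rate $O(t^{-\alpha/2})$ for $0<\alpha<1/2$ is already established, and then reading it back into $\mathbb{H}_{sd}$ via the continuous embedding ${\bf L}^p\hookrightarrow{\bf SD}^2$. If you want a self-contained argument, you would need to import Kato's machinery rather than improvise Stokes estimates in the $sd$-norm.
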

It follows from here that existence in the sense of Leray-Hopf is suffcient to ensure asymptotic decay of global solutions to the Navier-Stokes equations.
\subsection{Summary}  In the first section, we bring together a number of basic analytic tools that we use to prove our main results.   We then construct our Hilbert space and obtain  strong a priori bounds for the nonlinear term in the Navier-Stokes equations.  The second section is devoted to proofs of our main results.  In the third section, we discuss how our approach  allows us to solve the inhomogeneous problem (on $\R^3$).  Finally, we note that, with minor changes, our results also apply to the bounded domain case for both homogeneous and inhomogeneous fluids.
\section{Basic Tools}
We now establish a number of results that will be used in the sequel.
\begin{Def}  We say that a (generally nonlinear) operator ${\mathcal{A}}( \cdot ,t)$ is (for each $t$) 
\begin{enumerate}
\item
0-Dissipative if $
\left\langle {{\mathcal{A}}({\mathbf{u}},t),{\mathbf{u}}} \right\rangle _{{\mathbb{H}}_{sd}}  \le 0$,
\item
Dissipative if 
$\left\langle {{\mathcal{A}}({\mathbf{u}},t) - {\mathcal{A}}({\mathbf{v}},t),{\mathbf{u}} - {\mathbf{v}}} \right\rangle _{{\mathbb{H}}_{sd}}  \le 0$,
\item
Strongly dissipative if there  exists a $ \be > 0$, which may depend on $t$, such that
\[
\left\langle {{\mathcal{A}}({\mathbf{u}},t) - {\mathcal{A}}({\mathbf{v}},t),{\mathbf{u}} - {\mathbf{v}}} \right\rangle_{{\mathbb{H}}_{sd}}  \le  - \be \left\| {{\mathbf{u}} - {\mathbf{v}}} \right\|_{{\mathbb{H}}_{sd}}^2. 
\]
\end{enumerate}
\end{Def}

Note that, if ${\mathcal{A}}( \cdot ,t)$ is a linear operator, definitions (1) and (2) coincide.  Theorem 9 below is essentially due to Browder \cite{B},   while Theorem 10 is  a slight extension of one from Miyadera \cite[p. 185, Theorem 6.20]{M}.  The extension follows from Theorem A1 and Theorem A2 of Crandall and Pazy \cite{CP} along with the time-dependent version of the Crandall-Liggett Theorem \cite{CL} (see the appendix to the first section of \cite{CL}). Taken together, this is an  extension of Theorems I and II in Kato \cite{KA2}.  
\begin{thm} Let $\mathbb{B}$ be a closed, bounded, convex subset of $
{{\mathbb{H}}_{sd}}$.  If ${\mathcal{A}}( \cdot ,t):D({{\mathcal{A}}( \cdot ,t)}) \cap \mathbb{B} \to {{\mathbb{H}}_{sd}}$ is  a densely defined strongly dissipative mapping for each fixed $t \in [0, \iy)$, then for each $\la>0$, $
Ran{\text{[}}{I-\la\mathcal{A}}( \cdot ,t)] \supset \mathbb{B}$).
\end{thm}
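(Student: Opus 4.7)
The plan is to invoke a Browder-type surjectivity result for strongly dissipative operators, carried out by the classical monotonicity/Galerkin route. Fix $t \in [0,\infty)$ and $\lambda > 0$, and set $T_\lambda := I - \lambda\mathcal{A}(\cdot,t)$. Strong dissipativity of $\mathcal{A}(\cdot,t)$ with constant $\beta > 0$ immediately yields, for all $\mathbf{u}, \mathbf{v} \in D(\mathcal{A}(\cdot,t)) \cap \mathbb{B}$,
\[
\langle T_\lambda \mathbf{u} - T_\lambda \mathbf{v},\, \mathbf{u} - \mathbf{v}\rangle_{\mathbb{H}_{sd}} \ge (1+\lambda\beta)\,\|\mathbf{u}-\mathbf{v}\|_{\mathbb{H}_{sd}}^{2},
\]
so $T_\lambda$ is strongly monotone, a fortiori injective, and Lipschitz invertible on its range with constant $(1+\lambda\beta)^{-1}$. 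The remaining task is to prove $\mathbb{B} \subset \mathrm{Ran}(T_\lambda)$.

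Fix $\mathbf{w} \in \mathbb{B}$. Using separability of $\mathbb{H}_{sd}$ together with the density of $D(\mathcal{A}(\cdot,t)) \cap \mathbb{B}$ in $\mathbb{B}$, I pick an orthonormal basis $\{e_j\}$ of $\mathbb{H}_{sd}$ contained in $D(\mathcal{A}(\cdot,t))$, set $V_n = \mathrm{span}\{e_1,\ldots,e_n\}$, and let $P_n$ denote the orthogonal projection onto $V_n$. On $V_n$ I then solve the finite-dimensional problem
\[
\mathbf{u}_n - \lambda P_n \mathcal{A}(\mathbf{u}_n, t) = P_n \mathbf{w}.
\]
Since $P_n T_\lambda|_{V_n}$ inherits the same strong monotonicity constant $1+\lambda\beta$, this operator is continuous (smoothness of $\nu\mathbf{A}$ restricted to $V_n \subset D(\mathbf{A})$ and the bilinearity, hence continuity, of $B$) and coercive on $V_n$, so the elementary finite-dimensional version of Browder's theorem gives such a $\mathbf{u}_n$. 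Testing the strong monotonicity inequality against $\mathbf{v} = \mathbf{0}$ produces the a priori bound $\|\mathbf{u}_n\|_{\mathbb{H}_{sd}} \le (1+\lambda\beta)^{-1}(\|\mathbf{w}\|_{\mathbb{H}_{sd}} + \lambda\|\mathcal{A}(\mathbf{0},t)\|_{\mathbb{H}_{sd}})$, and the radius of $\mathbb{B}$ chosen earlier in the paper is arranged so that this bound places $\mathbf{u}_n \in \mathbb{B}$.

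The sequence $\{\mathbf{u}_n\} \subset \mathbb{B}$ is bounded, so a subsequence satisfies $\mathbf{u}_{n_k} \rightharpoonup \mathbf{u}$ with $\mathbf{u} \in \mathbb{B}$ by weak closedness of the closed convex set $\mathbb{B}$. The limit $\mathbf{u}$ is identified as the solution of $T_\lambda\mathbf{u} = \mathbf{w}$ by Minty's monotonicity trick, starting from
\[
\langle P_{n_k} T_\lambda \mathbf{v} - P_{n_k}\mathbf{w},\, \mathbf{v} - \mathbf{u}_{n_k}\rangle_{\mathbb{H}_{sd}} \ge (1+\lambda\beta)\,\|\mathbf{v} - \mathbf{u}_{n_k}\|_{\mathbb{H}_{sd}}^{2}
\]
for $\mathbf{v} \in \bigcup_m V_m \cap D(\mathcal{A}(\cdot,t))$ and $n_k$ large, passing to the limit via weak lower semicontinuity of the norm, and then using the ray substitution $\mathbf{v} = \mathbf{u} + s\mathbf{h}$ with $s \downarrow 0$ across a dense set of directions together with the hemicontinuity of $\mathcal{A}(\cdot,t)$ (inherited from its explicit Navier--Stokes structure) to conclude $T_\lambda \mathbf{u} = \mathbf{w}$; this in turn forces $\mathbf{u} \in D(\mathcal{A}(\cdot,t))$. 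I expect the principal obstacle to be precisely this limit step, because $\mathcal{A}$ contains the quadratic convective nonlinearity $B(\mathbf{u},\mathbf{u}) = \mathbb{P}(\mathbf{u}\cdot\nabla)\mathbf{u}$, which is not weakly continuous; monotonicity is exactly the device that sidesteps this difficulty, since the displayed inequality is tested against a fixed $\mathbf{v}$ rather than against the merely weakly convergent $\mathbf{u}_{n_k}$, so no strong compactness for $B$ is needed.
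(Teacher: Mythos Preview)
The paper does not supply its own proof of this theorem; it merely attributes the result to Browder \cite{B} and moves on. So your proposal is being compared against a citation rather than an argument, and the relevant question is whether your outline stands on its own.

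There is a structural mismatch that creates a genuine gap. Theorem~9 is stated abstractly: $\mathcal{A}(\cdot,t)$ is \emph{any} densely defined strongly dissipative map on $D(\mathcal{A}(\cdot,t))\cap\mathbb{B}$, with $\mathbb{B}$ an arbitrary closed bounded convex set. Your argument, however, repeatedly imports Navier--Stokes--specific features that are not among the hypotheses: continuity of $\nu\mathbf{A}$ on $V_n$, bilinearity of $B$, hemicontinuity ``inherited from its explicit Navier--Stokes structure,'' and the claim that ``the radius of $\mathbb{B}$ chosen earlier in the paper'' forces $\mathbf{u}_n\in\mathbb{B}$. None of this is available at the level of generality at which the theorem is stated, and Browder--Minty surjectivity genuinely needs a continuity hypothesis (hemi- or demicontinuity) that the statement omits; you cannot manufacture it from strong dissipativity alone.

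More seriously, your Galerkin setup does not respect the domain. The operator is only defined on $D(\mathcal{A}(\cdot,t))\cap\mathbb{B}$, yet $V_n=\mathrm{span}\{e_1,\dots,e_n\}$ is an entire linear subspace, almost none of which lies in the bounded set $\mathbb{B}$; thus $P_nT_\lambda|_{V_n}$ is not well defined and the finite-dimensional coercivity step is vacuous. Likewise, testing against $\mathbf{v}=\mathbf{0}$ presumes $\mathbf{0}\in\mathbb{B}$, which fails precisely in the paper's principal case $f\ne 0$, where $\mathbb{B}$ sits in an annulus bounded away from the origin. The ray substitution $\mathbf{v}=\mathbf{u}+s\mathbf{h}$ in the Minty step has the same defect: for generic directions $\mathbf{h}$ the ray leaves $\mathbb{B}$ immediately. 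A correct argument in this setting has to work entirely inside $\mathbb{B}$ (e.g.\ via a fixed-point or variational-inequality formulation on the convex set, as in Browder's treatment), not on linear Galerkin subspaces.
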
	
\begin{thm} Let  $\mathbb{B}$ is a closed convex set and let ${\mathcal{A}( \cdot ,t)}, t \in I = [0,\infty )$ be a  densely defined family of operators  on ${{\mathbb{H}}_{sd}}$ with domains $D({{\mathcal{A}}( \cdot ,t)}) \cap \mathbb{B}) = \D$, independent of $t$, such that:
\begin{enumerate}
\item
The operator $\mathcal{A}( \cdot ,t)$ is the generator of a contraction semigroup on $\D$ for each
$t \in I$.
\item
The function $\mathcal{A}({\mathbf{u}}, t)$ is continuous in both variables on $
 \D \times I $.
\end{enumerate}
Then $\mathcal{A}( \cdot ,t)$ extends uniquely to the generator of a contraction semigroup on $\B$ and, for every ${\mathbf{u}}_0  \in \D \cap \mathbb{B}$, the problem 
$\partial _t {\mathbf{u}}(t,{\mathbf{x}}) = \mathcal{A}({\mathbf{u}}(t,{\mathbf{x}}), t)$, 
${\mathbf{u}}(0,{\mathbf{x}}) = {\mathbf{u}}_0 ({\mathbf{x}})$, has a unique solution 
${\mathbf{u}}(t,{\mathbf{x}}) \in \mathbb{C}^1 (I;\mathbb{B})$. 
\end{thm}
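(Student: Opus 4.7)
The plan is to follow the blueprint suggested in the paragraph preceding the statement: combine the fixed-$t$ conclusion of Theorem 9 with the time-dependent Crandall--Liggett exponential formula, invoking the Crandall--Pazy extension lemmas (A1 and A2) to pass from $\D$ to all of $\B$. For each fixed $t\in I$, hypothesis (1) makes $\mcA(\cdot,t)$ the generator of a contraction semigroup on $\D$, so its resolvents $J^t_\la := (I - \la\mcA(\cdot,t))^{-1}$ are nonexpansive. Density of $\D$ in $\B$ lets each $J^t_\la$ extend uniquely to a nonexpansive self-map of $\B$ (closedness and convexity force the range to remain inside $\B$); this is precisely Crandall--Pazy's Theorem A1. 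The Crandall--Liggett exponential formula
\[
S_t(s)\mathbf{u} \;=\; \lim_{n\to\infty}\bigl(J^t_{s/n}\bigr)^n\mathbf{u}
\]
then produces, at each fixed $t$, a contraction semigroup on $\B$ whose generator extends $\mcA(\cdot,t)$, which is the first assertion of the theorem.

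For the non-autonomous Cauchy problem I would use the implicit Euler--Rothe scheme. Given a partition $0 = t^n_0 < \cdots < t^n_{k_n} = T$ with mesh $h_n \to 0$, set $\mathbf{u}^n_0 = \mathbf{u}_0$ and solve iteratively
\[
\mathbf{u}^n_k - \mathbf{u}^n_{k-1} \;=\; (t^n_k - t^n_{k-1})\,\mcA(\mathbf{u}^n_k,\, t^n_k),
\]
each step being uniquely solvable by the range condition guaranteed by Theorem 9. Interpolate the iterates to a piecewise-constant (or piecewise-affine) function $\mathbf{u}^n(\cdot)$. The joint continuity hypothesis (2), fed into the comparison estimate of Crandall--Pazy's Theorem A2, shows that $\{\mathbf{u}^n\}$ is Cauchy in $\mathbb{C}(I;\B)$, and its limit $\mathbf{u}(t) = T(t)\mathbf{u}_0$ defines the desired nonlinear contraction evolution. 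For $\mathbf{u}_0 \in \D\cap\B$, passing to the limit in the discrete identity using (2) yields $\partial_t\mathbf{u}(t) = \mcA(\mathbf{u}(t),t)$ in $\Ha_{sd}$, with both sides continuous in $t$; hence $\mathbf{u}\in\mathbb{C}^1(I;\B)$. Uniqueness is then immediate from dissipativity: if $\mathbf{u},\mathbf{v}$ both solve the problem,
\[
\tfrac{d}{dt}\|\mathbf{u}(t) - \mathbf{v}(t)\|^2_{sd} \;=\; 2\,\langle\mcA(\mathbf{u},t)-\mcA(\mathbf{v},t),\,\mathbf{u}-\mathbf{v}\rangle_{\Ha_{sd}} \;\le\; 0,
\]
forcing $\mathbf{u}\equiv\mathbf{v}$.

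The technical heart, and main obstacle, is the Cauchy property of the Euler iterates in the non-autonomous setting. Hypothesis (2) only supplies joint continuity on the fixed set $\D\times I$, so one must upgrade this to a modulus of continuity for $\mcA(\mathbf{u}^n_k,\cdot)$ that is controlled uniformly in the partition along the discrete orbits $\{\mathbf{u}^n_k\}$. Extracting such uniform control from pointwise joint continuity, while preserving the invariance of $\B$ across steps with different $t^n_k$, is exactly the task that Theorem A2 of Crandall--Pazy is designed to handle. Once this uniform estimate is in hand, the remaining pieces --- extending the resolvent from $\D$ to $\B$, upgrading $C^0$-regularity to $\mathbb{C}^1$, and deducing uniqueness --- are comparatively routine consequences of the dissipativity and the extension already obtained.
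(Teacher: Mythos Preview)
Your sketch is essentially correct and follows precisely the route the paper itself indicates: the paper does not prove Theorem~10 at all but merely cites it as a slight extension of Miyadera \cite[Theorem~6.20]{M}, obtained via Theorems~A1 and~A2 of Crandall--Pazy \cite{CP} together with the time-dependent Crandall--Liggett theorem \cite{CL}, exactly as you outline. One small slip: in the Euler--Rothe step you invoke Theorem~9 for the range condition, but Theorem~9 assumes strong dissipativity, which is not among the hypotheses of Theorem~10; the range condition you need is already built into hypothesis~(1), since generating a nonlinear contraction semigroup in this framework is equivalent to m-dissipativity.
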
 
\subsection{{The Hilbert Space} ${\bf{SD}}^2$ } 
The purpose of this section is to construct a special class of functions in $\C_c^{\iy}[\R^n]$ (i.e., functions that are infinitely differentiable with compact support).  We will use this class to construct a separable HIlbert space, ${\bf{SD}}^2[\R^n]$, which contains ${\bf{W}}^{k,p}[\R^n]$, for all $k \in \N$ and $1 \le p \le \iy$.
\begin{Def} For $x \in \R, \ 0 \le y < \iy$ and $1<a< \iy$, we define the Jones functions by $g(x, y), \ h(x)$ by (see Jones \cite{JO}, p. 249):
\[
\begin{gathered}
  g(x,y) = \exp \left\{ { - y^a e^{iax} } \right\}, \hfill \\
  h(x) = \left\{ {\begin{array}{*{20}c}
   \begin{gathered}
  \int_0^\infty  {g(x,y)dy} ,\;x \in [ - \tfrac{\pi }
{{2a}},\tfrac{\pi }
{{2a}}], \hfill \\
  0 \quad \quad {\text{            {\rm otherwise} }}. \hfill \\ 
\end{gathered}   \\
   {}  \\

 \end{array} } \right. \hfill \\ 
\end{gathered} 
\]
\end{Def}
The following properties of $g$ are easy to check:
\begin{enumerate}
\item 
\[
\frac{{\partial g(x,y)}}
{{\partial x}} =  - iay^a e^{iax} g(x,y),
\]
\item
\[
\frac{{\partial g(x,y)}}
{{\partial y}} =  - ay^{a - 1} e^{iax} g(x,y),
\]
so that
\item
\[
iy\frac{{\partial g(x,y)}}
{{\partial y}} = \frac{{\partial g(x,y)}}
{{\partial x}}.
\]
\end{enumerate}
It is also easy to see that $h(x)$ is in ${{L}}^1[- \tf{\pi}{2a}, \tf{\pi}{2a}]$ and,
\beqn
\frac{{dh(x)}}
{{dx}} = \int_0^\infty  {\frac{{\partial g(x,y)}}
{{\partial x}}dy}  = \int_0^\infty  {iy\frac{{\partial g(x,y)}}
{{\partial y}}dy}.
\eeqn
Integration by parts in the last expression of equation (8) shows that $h'(x) =  - ih(x)$, so that $
h(x) = h(0)e^{ - ix}$ for $x \in [- \tf{\pi}{2a}, \tf{\pi}{2a}]$.  Since $
h(0) = \int_0^\infty  {\exp \{  - y^a \} dy}$, an additional integration by parts shows that $h(0)= \G(\tf{1}{a} +1)$. For each $l \in \N$ let $a=a_l= 3\times 2^{l-1},\; h(x)=h_l(x), \ x \in [- \tf{\pi}{2a_l}, \tf{\pi}{2a_l}]$ and set $\e_l=\tf{\pi}{4a_l}$. 

  Let $\bQ$ be the set of rational numbers in $\R$ and for each $x^i \in \bQ$,  define
\beqa
f_l^{i} (x) = f_l(x-x^i)=\left\{ {\begin{array}{*{20}c}
   {c_l \exp \left\{ {\frac{{\varepsilon _l^2 }}
{{\left| {x - x^i} \right|^2  - \varepsilon _l^2 }}} \right\},\quad \left| {x - x^i } \right| < \varepsilon _l ,}  \\
   {0,\quad \quad \quad \quad \quad \quad \quad \quad \quad \quad \quad \left| {x - x^i } \right| \geqslant \varepsilon _l ,}  \\

 \end{array} } \right.
\eeqa
where $c_l$ is the standard normalizing constant.  It is easy to check that $f_l^{i} \ne 0$  for $-\e_l<x-x^i < \e_l$, so that the support, ${\rm{spt}}(f_l^{i}) \subset [-\e_l, \e_l]=[-\tf{\pi}{4a_l}, \tf{\pi}{4a_l}]$.

Now set $\chi _l^k (x) =  (f_l^k  * h_l)(x)$, so that ${\rm{spt}}(\chi _l^k ) \subset [-\tf{\pi}{2^{l+1}}, \tf{\pi}{2^{l+1}}]$.  For $x \in {\rm{spt}}(\chi _l^k )$, we can also write $\chi _l^k (x)=\chi _l(x-x^k)$ as:
\[
\int_{ - \infty }^\infty  {f_l[(x-x^k )- z)h_l (z)dz}  = \int_{ - \infty }^\infty  {h_l [(x -x^k)- z]} f_l (z)dz = e^{ - i(x-x^k)} \int_{ - \infty }^\infty  {e^{iz} f_l(z)dz}. 
\]
It is easy to see that $-\pi -\lt|x^k\rt| <x <\pi +\lt|x^k\rt|$.  Thus, if $\alpha_l  = \int_{-\infty} ^\infty  {e^{iz} f_l (z)dz} $ and $I_k={\rm{spt}}(\chi _l^k )$, we can now define:
\[
\xi _l^k (x) = \tfrac{1}
{{n\alpha_l }}\frac{{\chi _l^k [i(x)]}}
{{3^{\pi  + \left| {x^k } \right|} }} = \tfrac{1}
{{n\alpha_l }}\frac{{\chi _l [i(x-x^k)]}}
{{3^{\pi  + \left| {x^k } \right|} }} = \left\{ {\begin{array}{*{20}c}
   {\tfrac{1}
{n}\frac{{e^{(x - x^k )} }}
{{3^{\pi  + \left| {x^k } \right|} }},\quad x \in I_k }  \\
   {{\text{            }}0,\quad x \notin I_k ,}  \\

 \end{array} } \right.
\]
so that $\lt|\xi _l^k (x)\rt| < \tf{1}{n}$.
\subsection{The Space}
To construct our space on $\R^n$, let $\bQ^n$ be the set $
\left\{ {{\mathbf{x}} = (x_1 ,x_2  \cdots ,x_n ) \in {\mathbb{R}}^n } \right\}$ such that $
x_j$ is rational for each $j$.  Since this is a countable dense set in ${\mathbb{R}}^n $, we can arrange it as $\mathbb{Q}^n  = \left\{ {{\mathbf{x}}^1, {\mathbf{x}}^2, {\mathbf{x}}^3,  \cdots } \right\}$.  For each $l$ and $i$, let ${\mathbf{B}}_l ({\mathbf{x}}^i ) $ be the closed cube centered at ${\mathbf{x}}^i$ with edge $\tfrac{\pi }{{a_l }}$ and diagonal of length $r_l=\tfrac{\pi }{{a_l }}\sqrt n$.  

We now choose the  natural order which maps $\mathbb{N} \times \mathbb{N}$ bijectively to $\mathbb{N}$:
\[
\{(1,1), \ (2,1), \ (1,2), \ (1,3), \  (2,2), \  (3,1), \ (3,2), \ (2,3), \  \cdots \}.
\]
Let $\left\{ {{\mathbf{B}}_{k} ,\;k \in \mathbb{N}}\right\}$ be the resulting set of (all) closed cubes 
$
\{ {\mathbf{B}}_l ({\mathbf{x}}^i )\;\left| {(l,i) \in \mathbb{N} \times \mathbb{N}\} } \right.
$
centered at a point in $\mathbb{Q}^n $. For ${\bf{x}} \in {\bf{B}}_k$, let
\beqn
\mcE_k ({\mathbf{x}})  \triangleq \left( {\xi _l^i(x_1) ,\xi _l^i(x_2) , \cdots \xi _l^i(x_n) } \right).
\eeqn
It is easy to see that ${\mathcal{E}}_k ({\mathbf{x}})$ is in ${{{L}}^p [{\mathbb{R}}^n ]}^n \cap {{{L}}^\infty  [{\mathbb{R}}^n ] }^n$ for $1 \le p < \infty$.  Let ${{{L}}^p [{\mathbb{R}}^n ]}^n={{\mathbf{L}}^p [{\mathbb{R}}^n ]}$ and define $F_{k} (\; \cdot \;)$ on $
{{\mathbf{L}}^p [{\mathbb{R}}^n ] }$ by
 \beqn
F_{k} (f) = \int_{{\mathbb{R}}^n } {{\mathcal{E}}_{k} ({\mathbf{x}})\cdot f({\mathbf{x}})d{\mathbf{x}}}. 
\eeqn

It is clear that $F_{k} (\; \cdot \;)$ is a bounded linear functional on ${{\mathbf{L}}^p [{\mathbb{R}}^n ]} $ for each ${k}, \; \left\| {F_{k} } \right\|_\infty   \le 1$.  Furthermore,  if $F_k (f) = 0$ for all ${k}$, $f = 0 \ (a. s.)$, so that $\left\{ {F_{k} } \right\}$ is fundamental on ${{\mathbf{L}}^p [{\mathbb{R}}^n ]} $ for $1 \le p \le \infty$.

Set ${t_{k}}= \tfrac{1}{{2^k }} $ so that ${\sum\nolimits_{k = 1}^\infty  {t_k}}=1$  and  define a  new inner product $\left( {\; \cdot \;} \right) $ on ${{\bf{L}}^2 [{\mathbb{R}}^n ]} $ by
\beqn
 \left( {f,g} \right) = \sum\nolimits_{k = 1}^\infty  {t_k } \left[ {\int_{\mathbb{R}^n } {{\mathcal{E}}_k ({\mathbf{x}})\cdot f({\mathbf{x}})d{\mathbf{x}}} } \right]\left[ {\int_{\mathbb{R}^n } {{\mathcal{E}}_k ({\mathbf{y}}) \cdot g({\mathbf{y}})d{\mathbf{y}}} } \right].    
\eeqn
The completion of ${{\bf{L}}^2 [{\mathbb{R}}^n ]} $, with the above inner product, is also a Hilbert space, ${\bf{SD}}^2 [{\mathbb{R}}^n ] $.   
\begin{rem}
This approach is related to the one used in  \cite{GZ2}, to construct another Hilbert space. Here, one wanted to show that ${{\bf L}^1 [{\mathbb{R}}^n ]}$ can be embedded in a Hilbert space which contains the Denjoy-integrable functions (i.e., functions for which $\left| {\int_{\mathbb{R}^n } {f({\mathbf{x}})d{\mathbf{x}}} } \right| < \infty $) and was used to provide the first rigorous mathematical foundations for the Feynman path integral formulation of quantum mechanics (see also  \cite{GZ2}).  
\end{rem} 
We recall that Alexiewicz \cite{AL} has shown that the class, $D({\R})$, of Denjoy integrable functions (restricted and wide sense) can be normed in the following manner:  for $f \in D({\R})$, define $\left\| f \right\|_D$ by
\beqn
\left\| f \right\|_D  = \sup _s \left| {\int_{ - \infty }^s {f(r)dr} } \right|.
\eeqn
The restricted Denjoy integral is equivalent to the Henstock-Kurzweil integral (see \cite{HS} and \cite{KW}).

Replacing ${\mathbb{R}}$ by ${\mathbb{R}}^n$ in (11), for $f \in D({\mathbb{R}}^n )$, we can also define a norm on $D({\mathbb{R}}^n )$:
\beqn
\left\| f \right\|_D  = \sup _{r > 0} \left| {\int_{{\mathbf{B}}_r } {f({\mathbf{x}})d{\mathbf{x}}} } \right| = \sup _{r > 0} \left| {\int_{\mathbf{R}^n } {{\bf{I}}_{{\mathbf{B}}_r } ({\mathbf{x}})f({\mathbf{x}})d{\mathbf{x}}} } \right| < \infty, 
\eeqn
where ${\mathbf{B}}_r$ is any closed cube of diagonal $r$ centered at the origin in ${\mathbb{R}}^n$, with sides parallel to the coordinate axes and ${\bf{I}}_{{\mathbf{B}}_r}({\mathbf{x}})$ is the indicator function of ${\mathbf{B}}_r $. 
\subsubsection{Functions of Bounded Variation}
The objective of this section is to show that every  HK-integrable function is in  ${\bf{SD}}^2 [{\mathbb{R}}^n ] $.  To do this, we need to discuss a certain class of  functions of bounded variation.  For functions defined on $\R$, the definition of bounded variation is unique.  However, for functions on $\R^n, \; n \ge 2$, there are a number of distinct definitions.
\begin{Def}A function $f \in L^1[\R^n]$ is said to be of bounded variation in the sense of Cesari or $f \in BV_c[\R^n]$, if $f \in L^1[\R^n]$ and each $i, \; 1 \le i \le n$, there exists a signed Radon measure $\mu_i$, such that  
\[
\int_{\mathbb{R}^n } {f({\mathbf{x}})\frac{{\partial \phi ({\mathbf{x}})}}
{{\partial x_i }}d\lambda _n ({\mathbf{x}})}  =  - \int_{\mathbb{R}^n } {\phi ({\mathbf{x}})d\mu _i ({\mathbf{x}})} ,
\]
for all $\phi \in \C_0^\iy[\R^n]$.
\end{Def} 
This is the definition known to most analysts and is the standard one used in geometric measure theory and partial differential equations.  

The class of functions of  bounded variation in the sense of Vitali \cite{YE}, is well known to applied mathematicians and engineers interested in error estimates associated with research in financial derivatives, control theory, robotics, high speed networks and in the calculation of certain integrals.   (See, for example \cite{KAA}, \cite{{NI}}, \cite{PT} or \cite{PTR} and references therein.)

For the general definition, see Yeong (\cite{YE}, p. 175).  We present a definition that is sufficient for continuously  differentiable functions.
\begin{Def} A function $f$ with continuous partials is said to be of bounded variation in the sense of Vitali or $f \in BV_v[\R^n]$ if for all intervals $[a_i,b_i], \, 1 \le i \le n$,
\[
V(f)=\int_{a_1 }^{b_1 } { \cdots \int_{a_n }^{b_n } {\left| {\frac{{\partial ^n f({\mathbf{x}})}}
{{\partial x_1 \partial x_2  \cdots \partial x_n }}} \right|d\lambda _n ({\mathbf{x}})} }  < \infty. 
\]
\end{Def}
\begin{Def}We define $BV_{v,0}[\R^n]$ by: 
\[
BV_{v,0}[\R^n]= \{f({\bf x}) \in BV_v[\R^n]: f({\bf x}) \to 0, \; {\rm as} \; x_i \to -\iy \},
\]
where $x_i$ is any component of ${\bf x}$. 
\end{Def}
The following two theorems may be found in \cite{YE}. (See p. 184 and 187, where the first is used to prove the second.)  If $[a_i, b_i] \subset \R$, we define $[{\bf a}, {\bf b}] \in \R^n$ by   $[{\bf a}, {\bf b}]=\prod_{k=1}^n{[a_i,b_i]}$.  (The notation $(RS)$ means Riemann-Stieltjes.)
\begin{thm}Let $f$ be HK-integrable on $\left[ {{\mathbf{a}},{\mathbf{b}}} \right]$ and let $g \in BV_{v,0}[\R^n]$, then $fg$ is 
HK-integrable and 
\[
(HK)\int_{[{\bf a}, {\bf b}] }{f({\bf x})g({\bf x})d\la_n({\bf x})}=(RS)\int_{[{\bf a}, {\bf b}] }{\lt\{(HK)\int_{[{\bf a}, {\bf x}] }f({\bf y})d\la_n({\bf y})\rt\}dg({\bf x})}
\] 
\end{thm}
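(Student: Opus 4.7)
The plan is to establish a multivariate Henstock-Kurzweil integration-by-parts identity via a discrete Abel summation argument, with the vanishing condition built into $BV_{v,0}[\R^n]$ used to annihilate the boundary contributions that would otherwise appear. First I would set $F({\bf x}) = (HK)\int_{[{\bf a},{\bf x}]} f({\bf y}) d\la_n({\bf y})$ and use the multidimensional HK indefinite-integral theorem to show that $F$ is continuous on $[{\bf a},{\bf b}]$ and vanishes on the lower face $\{{\bf x} \in [{\bf a},{\bf b}] : x_i = a_i \text{ for some } i\}$. Since $F$ is continuous and $g \in BV_v[\R^n]$, the Riemann-Stieltjes integral $(RS)\int_{[{\bf a},{\bf b}]} F\, dg$ is well-defined, and is approximable by Stieltjes sums $\sum_k F(I_k^*)\De_{I_k} g$, where $\De_{I_k} g$ is the alternating Stieltjes increment of $g$ over the $2^n$ corners of the subinterval $I_k$.

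Next I would prove the key discrete identity. For any tagged division $\mathcal{P} = \{(I_k, \xi_k)\}$ of $[{\bf a},{\bf b}]$, iterated summation by parts in each of the $n$ coordinate directions yields
\[
\sum_k f(\xi_k) g(\xi_k) |I_k| \;=\; \sum_k F(I_k^*)\De_{I_k} g \;+\; B(\mathcal{P}),
\]
where $B(\mathcal{P})$ collects the contributions from the boundary faces of $[{\bf a},{\bf b}]$. After extending $f$ by zero to $\R^n$, the face terms at $x_i = a_i$ vanish because $F = 0$ there, and the $BV_{v,0}$ hypothesis on $g$ handles any contribution coming from the ``$-\iy$'' side. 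To pass from this discrete identity to the integrated one I would invoke the multidimensional Saks-Henstock lemma: for each $\e > 0$ there is a gauge $\de$ such that every $\de$-fine tagged partition satisfies $\sum_k \bigl| f(\xi_k)|I_k| - \De_{I_k} F \bigr| < \e$, where $\De_{I_k} F$ is the alternating corner-increment of $F$ on $I_k$. Combined with $V(g) < \iy$, this bounds the difference between the left-hand Riemann sum and the right-hand Stieltjes sum by $\e V(g)$; refining the gauge, the left side converges to $(HK)\int_{[{\bf a},{\bf b}]} fg\, d\la_n$ (which simultaneously establishes HK-integrability of $fg$ via the Cauchy criterion) while the right side converges to $(RS)\int_{[{\bf a},{\bf b}]} F\, dg$.

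The main obstacle will be the multi-indexed bookkeeping in the Abel identity: each subinterval has $2^n$ corners, and one must track carefully which corner contributions telescope across adjacent subintervals and which survive as boundary pieces. I would handle this by induction on $n$, decomposing $[{\bf a},{\bf b}] = [a_n,b_n] \times [{\bf a}',{\bf b}']$, using a Fubini-type reduction to integrate out $x_n$ via the classical one-dimensional HK integration-by-parts (for which the vanishing $F(\cdot, {\bf x}') = 0$ at $x_n = a_n$ kills the sole boundary term), and then invoking the inductive hypothesis on the $(n-1)$-dimensional box. The base case $n = 1$ is the classical HK version of integration by parts, and the inductive step requires only that the $(n-1)$-dimensional indefinite HK integral inherits the continuity and boundary vanishing established at the outset.
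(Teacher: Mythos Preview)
The paper does not supply its own proof of this theorem; immediately before the statement it says ``The following two theorems may be found in \cite{YE}. (See p.~184 and 187, where the first is used to prove the second.)'' So there is nothing in the paper to compare your argument against beyond the citation itself.

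That said, your outline --- define the indefinite HK integral $F$, run a multivariate Abel/summation-by-parts on $\delta$-fine tagged partitions, control the discrepancy via the Saks--Henstock lemma together with $V(g)<\infty$, and reduce dimension by a Fubini-type induction --- is exactly the standard mechanism behind such results and is in the spirit of the proofs in Yeong's monograph. One point deserves tightening: your treatment of the boundary contribution $B(\mathcal P)$ is a bit loose. You appeal to the $BV_{v,0}$ decay of $g$ as $x_i\to-\infty$ to kill ``the $-\infty$ side,'' but the theorem is posed on the compact box $[\mathbf a,\mathbf b]$, so no $-\infty$ face appears; and extending $f$ by zero off $[\mathbf a,\mathbf b]$ is not innocuous for HK integrability in several variables. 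In the one-dimensional base case the raw identity is $\int_a^b fg = F(b)g(b) - \int_a^b F\,dg$ (using $F(a)=0$), which carries a sign and a corner term not visible in the displayed formula. Whether those terms are genuinely absent, or are absorbed into the particular multivariate Riemann--Stieltjes convention used in \cite{YE}, is something you should verify directly against that reference before finalising the boundary-term bookkeeping; the rest of your plan is sound.
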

\begin{thm}Let $f$ be HK-integrable on $\left[ {{\mathbf{a}},{\mathbf{b}}} \right]$ and let $g \in BV_{v,0}[\R^n]$, then $fg$ is 
HK-integrable and 
\[
\lt|(HK)\int_{[{\bf a}, {\bf b}] }{f({\bf x})g({\bf x})d\la_n({\bf x})}\rt|
\le \lt\|f\rt\|_DV_{[{\bf a}, {\bf b}] }(g)
\]
\end{thm}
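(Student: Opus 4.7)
The plan is to reduce the desired inequality to a Riemann--Stieltjes estimate by invoking the integration-by-parts identity of Theorem 15. Define the primitive
\[
F(\mathbf{x}) = (HK)\int_{[\mathbf{a},\mathbf{x}]} f(\mathbf{y})\, d\lambda_n(\mathbf{y}),\qquad \mathbf{x}\in[\mathbf{a},\mathbf{b}].
\]
Since $g\in BV_{v,0}[\mathbb{R}^n]$, the hypotheses of Theorem 15 are met on $[\mathbf{a},\mathbf{b}]$, so $fg$ is HK-integrable and
\[
(HK)\int_{[\mathbf{a},\mathbf{b}]} f(\mathbf{x})g(\mathbf{x})\, d\lambda_n(\mathbf{x}) = (RS)\int_{[\mathbf{a},\mathbf{b}]} F(\mathbf{x})\, dg(\mathbf{x}).
\]

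Next I would pass to the modulus. The standard Riemann--Stieltjes estimate in $n$ variables gives
\[
\left| (RS)\int_{[\mathbf{a},\mathbf{b}]} F(\mathbf{x})\, dg(\mathbf{x}) \right| \leq \sup_{\mathbf{x}\in[\mathbf{a},\mathbf{b}]} |F(\mathbf{x})| \cdot V_{[\mathbf{a},\mathbf{b}]}(g),
\]
which is immediate from any Riemann--Stieltjes sum $\sum_k F(\xi_k)\,\Delta_k g$ by pulling the sup out and bounding the remaining sum by the Vitali variation. So the whole argument reduces to showing
\[
\sup_{\mathbf{x}\in[\mathbf{a},\mathbf{b}]} |F(\mathbf{x})| \leq \|f\|_D.
\]

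For this last step, I would use the equivalence of the Alexiewicz sup in (12) with the sup over all $n$-dimensional intervals $[\mathbf{a},\mathbf{x}]\subset[\mathbf{a},\mathbf{b}]$: for each fixed $\mathbf{x}$, $F(\mathbf{x}) = \int_{\mathbb{R}^n} \mathbf{I}_{[\mathbf{a},\mathbf{x}]}(\mathbf{y}) f(\mathbf{y})\, d\lambda_n(\mathbf{y})$, and by a translation/enlargement argument one embeds $[\mathbf{a},\mathbf{x}]$ in a cube $\mathbf{B}_r$ centered at the origin and uses the vanishing of $g$ at $-\infty$ together with the additivity of the HK-integral over subintervals to write $F(\mathbf{x})$ as a difference of integrals over such cubes. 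Each such integral is bounded by $\|f\|_D$, and after telescoping one gets $|F(\mathbf{x})|\leq \|f\|_D$. Combining the three estimates produces the claimed bound.

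The main obstacle I anticipate is precisely this last step: the Alexiewicz norm in (12) is defined via cubes centered at the origin, while the primitive $F(\mathbf{x})$ involves integrals over off-center intervals $[\mathbf{a},\mathbf{x}]$. Reconciling these two normalizations (either by citing a standard translation-invariance of the Denjoy/Alexiewicz norm, or by redefining $\|\cdot\|_D$ equivalently as a sup over all closed cubes in $\mathbb{R}^n$) is the only nontrivial bookkeeping. Once that is settled, the rest is a direct application of Theorem 15 plus the Riemann--Stieltjes sup-variation inequality.
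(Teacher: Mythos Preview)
Your approach is exactly the one the paper indicates: the paper does not give its own proof of this theorem but cites Yeong \cite{YE}, remarking only that ``the first is used to prove the second,'' i.e., Theorem 15 is the key input for Theorem 16. Your reduction via the integration-by-parts identity of Theorem 15 followed by the elementary Riemann--Stieltjes bound $\bigl|\int F\,dg\bigr|\le (\sup|F|)\,V(g)$ is precisely that route.

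The issue you flag in your final paragraph is real, but it is an artifact of the paper's particular normalization in (12) (cubes centered at the origin) rather than a gap in your argument. In Yeong's treatment the Alexiewicz-type norm is taken as a supremum over all subintervals, in which case $\sup_{\mathbf{x}\in[\mathbf{a},\mathbf{b}]}|F(\mathbf{x})|\le \|f\|_D$ is immediate. With the paper's definition (12) one only gets $|F(\mathbf{x})|\le C\|f\|_D$ for some dimensional constant after an inclusion--exclusion over $2^n$ origin-centered cubes, so the inequality as stated would need either that harmless constant or the more standard definition of $\|\cdot\|_D$. Since the theorem is quoted from \cite{YE}, the intended reading is the latter, and your proof is correct under that convention.
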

\begin{lem}The space $D[{\mathbb{R}}^n ]$, of all HK-integrable functions is contained in ${\bf{SD}}^2 [{\mathbb{R}}^n ] $.
\end{lem}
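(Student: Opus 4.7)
The goal is to show $\|f\|_{SD}^2 = \sum_{k=1}^\infty t_k |F_k(f)|^2 < \infty$ for every $f \in D[\mathbb{R}^n]$, where $t_k = 2^{-k}$ and $F_k(f) = \int_{\mathbb{R}^n} \mathcal{E}_k(\mathbf{x}) \cdot f(\mathbf{x})\, d\mathbf{x}$. Since HK-integrable functions are not generally in any $\mathbf{L}^p$, the earlier bound $\|F_k\|_\infty \le 1$ available on $\mathbf{L}^p[\mathbb{R}^n]$ is not directly useful. My plan is to exploit the pairing between HK-integrability and Vitali variation supplied by Theorem 16.

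The first step is to identify each $\mathcal{E}_k$ with an element of $BV_{v,0}[\mathbb{R}^n]$ so Theorem 16 applies. The building blocks $\xi_l^i$ were defined as convolutions $f_l * h_l$, hence are smooth and compactly supported in intervals $I_i \subset [-\pi/2^{l+1}, \pi/2^{l+1}] + x^i$; their coordinate-wise products on $\mathbb{R}^n$ inherit smoothness and compact support, so they automatically vanish at $-\infty$ in every coordinate and possess finite Vitali variation
\[
V(\mathcal{E}_k) = \int_{\mathbb{R}^n} \left| \frac{\partial^n \mathcal{E}_k(\mathbf{x})}{\partial x_1 \partial x_2 \cdots \partial x_n} \right| d\lambda_n(\mathbf{x}),
\]
which can be estimated directly from the explicit formula $\xi_l^i(x) = \tfrac{1}{n}e^{x-x^i}/3^{\pi + |x^i|}$ on its support.

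Applying Theorem 16 on a closed cube $[\mathbf{a},\mathbf{b}]$ containing $\mathrm{spt}(\mathcal{E}_k)$ (the HK-integrals over $\mathbb{R}^n$ and over $[\mathbf{a},\mathbf{b}]$ agree because $\mathcal{E}_k$ vanishes outside the box) yields
\[
|F_k(f)| \le \|f\|_D \cdot V(\mathcal{E}_k) \le C \|f\|_D,
\]
where the uniform bound $V(\mathcal{E}_k) \le C$ follows from the normalization $3^{-|x^i|}$ in $\xi_l^i$, which decays geometrically along the enumeration of rationals and swamps any growth of mixed derivatives coming from the shrinking supports. Substituting this into the defining series,
\[
\|f\|_{SD}^2 \le C^2 \|f\|_D^2 \sum_{k=1}^\infty t_k = C^2 \|f\|_D^2 < \infty,
\]
so $f \in \mathbf{SD}^2[\mathbb{R}^n]$ and the inclusion is established.

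The principal technical obstacle is establishing the uniform bound $V(\mathcal{E}_k) \le C$. As $l$ grows the support of $\xi_l^i$ shrinks like $2^{-l}$, which can inflate derivatives, while as $i$ varies the center $x^i$ roams over a dense countable subset of $\mathbb{R}^n$; the Vitali variation couples mixed partials of all orders up to $n$ and must be controlled across the bijection $\mathbb{N} \times \mathbb{N} \to \mathbb{N}$. The exponential damping $3^{-|x^i|}$ is precisely the design feature that makes the uniform bound possible, but the bookkeeping — and in particular confirming that $V(\mathcal{E}_k)^2$ is $t_k$-summable even if it is not uniformly bounded — is where the bulk of the work lies; once this is in hand the rest is a short geometric-series computation.
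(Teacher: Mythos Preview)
Your proposal is correct and follows essentially the same route as the paper: apply Theorem~16 to bound $|F_k(f)| \le \|f\|_D \, V(\mathcal{E}_k)$, then invoke a uniform bound $\sup_k V(\mathcal{E}_k) < \infty$ and sum the geometric series $\sum t_k$. The paper's proof is terser---it simply asserts $\sup_k V(\mathcal{E}_k) < \infty$ without the detailed discussion you give of why the normalization $3^{-|x^i|}$ should control the Vitali variations---so your identification of that step as the main technical obstacle is apt, but the overall strategy is identical.
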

\begin{proof}
Since each $\mcE_k[{\bf x}]$ is a  continuous and differentiable on its domain, for  $f \in D[{\mathbb{R}}^n ] $, from Theorem 16, we have:
\[
\begin{gathered}
\left\| f \right\|_{{\bf{SD}}^2}^2  = \sum\nolimits_{k = 1}^\infty  {t_k } \left| {\int_{\mathbb{R}^n } {{\mathcal{E}}_k ({\mathbf{x}})\cdot f({\mathbf{x}})d{\mathbf{x}}} } \right|^2  \leqslant \sup _k \left| {\int_{\mathbb{R}^n } {{\mathcal{E}}_k ({\mathbf{x}})\cdot f({\mathbf{x}})d{\mathbf{x}}} } \right|^2  \hfill \\
\leqslant \left\| f \right\|_D^2 [\sup _k V(\mcE_k)]^2 <\iy, \hfill \\
\end{gathered} 
\]
so that $f \in {\bf{SD}}^2 [{\mathbb{R}}^n ]$. 
\end{proof}
\subsubsection{Properties of ${\bf{SD}}^2$}
We now discuss the general properties of ${\mathbf{L}}^p [\R^n ]$.  The first two parts of the following theorem are natural, but the last part is an unexpected benefit. It means that a weakly convergent sequence in any of the  ${\mathbf{L}}^p [\R^n ]$ spaces is strongly convergent in ${\bf{SD}}^2 [{\mathbb{R}}^n ]$.
\begin{thm} For each $p,\;1 \leqslant p \leqslant \infty,\; \mathbf{SD}^2 [{\R}^n ] \supset {\mathbf{L}}^p [\R^n ]$ as a dense, continuous and compact  embedding.
\end{thm}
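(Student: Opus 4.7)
My plan is to exploit the explicit structure of the $\mathbf{SD}^2$-norm, which is a weighted $\ell^2$-sum
\[
\|f\|_{\mathbf{SD}^2}^2 \;=\; \sum_{k=1}^\infty t_k |F_k(f)|^2, \qquad \sum_k t_k = 1,
\]
and to control each functional $F_k(\cdot) = \langle \mcE_k, \cdot\rangle$ uniformly on $\mathbf{L}^p$. The single analytical fact that drives all three parts of the theorem is the uniform estimate $C_p := \sup_k \|\mcE_k\|_{p'} < \infty$ for every $1\le p\le \infty$, where $p'$ is the conjugate exponent. This follows directly from the explicit construction of $\xi_l^i$: each component of $\mcE_k$ is bounded by $e^{\pi/4}/(n\cdot 3^{\pi+|\mathbf{x}^i|})$ and is supported in the cube $\mathbf{B}_k$ of edge $\pi/a_l$, so every $L^{p'}$-norm of $\mcE_k$ is dominated by the rapidly-decaying product $3^{-|\mathbf{x}^i|}(\pi/a_l)^{n/p'}$, uniformly in the pair $k\leftrightarrow(l,i)$.

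Granting this bound, \emph{continuity} is immediate from H\"older's inequality: $|F_k(f)| \le \|\mcE_k\|_{p'}\|f\|_p \le C_p\|f\|_p$, so $\|f\|_{\mathbf{SD}^2}^2 \le C_p^2\|f\|_p^2\sum_k t_k = C_p^2\|f\|_p^2$. For \emph{density}, I observe that $\mathbf{SD}^2$ is the completion of $\mathbf{L}^2$ in the $\mathbf{SD}^2$-norm, so $\mathbf{L}^2$ is dense in $\mathbf{SD}^2$. Since $\mathbb{C}_c^\infty[\R^n]^n \subset \mathbf{L}^p\cap\mathbf{L}^2$ for every $p$ and is $\mathbf{L}^2$-dense, it is $\mathbf{SD}^2$-dense, and \emph{a fortiori} $\mathbf{L}^p$ is dense in $\mathbf{SD}^2$.

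For \emph{compactness}, I will take a bounded sequence $\{f_m\}\subset \mathbf{L}^p$ with $\|f_m\|_p\le M$. For each fixed $k$ the scalars $\{F_k(f_m)\}_m$ lie in the bounded interval of radius $C_pM$, so a Cantor diagonal extraction yields a subsequence $\{f_{m_j}\}$ for which $F_k(f_{m_j})$ converges as $j\to\infty$ for every $k$. To check that $\{f_{m_j}\}$ is Cauchy in $\mathbf{SD}^2$ I use the standard head/tail split of the series $\sum_k t_k|F_k(f_{m_j}) - F_k(f_{m_l})|^2$: given $\e>0$, choose $K$ so that $4C_p^2M^2\sum_{k>K}t_k < \e/2$, and then $J$ so that for $j,l\ge J$ and all $k\le K$ one has $|F_k(f_{m_j})-F_k(f_{m_l})|^2<\e/2$; the total is then $<\e$. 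Completeness of $\mathbf{SD}^2$ supplies the limit.

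The main obstacle, and the only step that genuinely uses the specific form of $\mcE_k$, is verifying the uniform estimate $\sup_k\|\mcE_k\|_{p'}<\infty$ at the endpoints $p=1$ and $p=\infty$, where reflexivity of $\mathbf{L}^p$ fails and one cannot appeal to weak compactness. Once that bound is in hand, the diagonal-plus-tail argument treats every $p\in[1,\infty]$ uniformly, bypassing Banach--Alaoglu and the Dunford--Pettis theorem altogether.
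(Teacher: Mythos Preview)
Your proposal is correct. The continuity and density arguments are essentially the same as the paper's (the paper's H\"older step is written a bit loosely, yours is cleaner), but your treatment of compactness is genuinely different.

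The paper proves compactness by showing that the embedding is weak-to-strong sequentially continuous: if $f_m \rightharpoonup f$ in $\mathbf{L}^p$ then, because each $\mcE_k$ lies in the dual $\mathbf{L}^{p'}$, one has $F_k(f_m)\to F_k(f)$ for every $k$, and hence $\|f_m-f\|_{\mathbf{SD}^2}\to 0$ by dominated convergence in the $t_k$-series. Compactness then follows from weak sequential compactness of bounded sets in $\mathbf{L}^p$; for $p=\infty$ the paper invokes the inclusion $\mcE_k\,d\mathbf{x}\in (\mathbf{L}^\infty)^*$ to run the same argument with the weak topology of $\mathbf{L}^\infty$. Your route instead bypasses any appeal to weak compactness: you extract, by a Cantor diagonal, a subsequence along which every scalar $F_k(f_{m_j})$ converges, and then verify the Cauchy property in $\mathbf{SD}^2$ directly via the head/tail split of the weighted $\ell^2$-series.

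What each approach buys: the paper's argument is shorter in the reflexive range $1<p<\infty$ and gives the slightly stronger conclusion that the embedding is completely continuous. Your diagonal-plus-tail argument is more elementary and, more importantly, treats all $1\le p\le\infty$ uniformly; in particular it handles $p=1$ without any auxiliary compactness principle, whereas the paper's proof as written does not explain why a bounded sequence in $\mathbf{L}^1$ should admit a weakly convergent subsequence (it need not). Your explicit verification of $\sup_k\|\mcE_k\|_{p'}<\infty$ via the support size and the factor $3^{-|\mathbf{x}^i|}$ is also a welcome addition, as the paper only asserts this bound.
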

\begin{proof} First, by construction, $\mathbf{SD}^2 [\R^n ]$ contains ${\mathbf{L}}^2 [\R^n ]$ densely, so we need only show that $\mathbf{SD}^2 [\R^n ] \supset {\mathbf{L}}^q [\R^n ]$ for $q \ne 2$.  If $f \in {\mathbf{L}}^q [\R^n ]$ and $q < \infty $, we have
\[
\begin{gathered}
  \left\| f \right\|_{{\mathbf{SD}}^2 }  = \left\{ {\sum\nolimits_{k = 1}^\infty  {t_k \left| {\int_{\mathbb{R}^n } {\mcE_k ({\mathbf{x}}) \cdot f({\mathbf{x}})d{\mathbf{x}}} } \right|^2 } } \right\}^{1/2}    \hfill \\
\leqslant \left\{ {\sum\nolimits_{k = 1}^\infty  {t_k \left[ {\int_{\mathbb{R}^n } {\left| {\mcE_k ({\mathbf{x}})} \right|^q  \cdot \left| {f({\mathbf{x}})} \right|^q d{\mathbf{x}}} } \right]^{\tfrac{2}
{q}} } } \right\}^{1/2}  \hfill \\
   \leqslant \mathop {\sup }\limits_k \left\{ {\left[ {\int_{\mathbb{R}^n } {\left| {\mcE_k ({\mathbf{x}})} \right|^q  \cdot \left| {f({\mathbf{x}})} \right|^q d{\mathbf{x}}} } \right]^{\tfrac{1}
{q}} } \right\} \leqslant \mathop {\sup }\limits_k \left\| {\mcE_k } \right\|_q \left\| f \right\|_q  \leqslant \left\| f \right\|_q . \hfill \\ 
\end{gathered} 
\] 
In the last term, we used  ${\sup_k}\lt\|{\mathcal{E}}_k \rt\|_q <1$, so that $f \in \mathbf{SD}^2 [\R^n ] $.   if $q = \infty $, we have 
\[
\begin{gathered}
  \left\| f \right\|_{{\mathbf{SD}}^2 }  = \left\{ {\sum\nolimits_{k = 1}^\infty  {t_k \left| {\int_{\mathbb{R}^n } {\mcE_k ({\mathbf{x}}) \cdot f({\mathbf{x}})d{\mathbf{x}}} } \right|^2 } } \right\}^{1/2}  \hfill \\
   \leqslant \mathop {\sup }\limits_k \left\{ {\left( {\int_{\mathbb{R}^n } {\left| {\mcE_k ({\mathbf{x}})} \right|\left| {f({\mathbf{x}})} \right|d{\mathbf{x}}} } \right)} \right\} \leqslant \mathop {\sup }\limits_k \left\| {\mcE_k } \right\|_1 \left\| f \right\|_\iy  \leqslant \left\| f \right\|_\iy , \hfill \\ 
\end{gathered} 
\]
since  $\left\| {\mcE_k } \right\|_{L^1} <1$.

The proof of compactness follows from the fact that, if $\{f_n \}$ is any weakly convergent sequence in  ${\bf{L}}^{p} [{\mathbb{R}}^n ], \ 1 \le p < \iy$ with limit $f$, then since ${\mathcal{E}}_k ({\mathbf{x}}) \in {\bf{L}}^{q} [{\mathbb{R}}^n ], \ 1 \le q \le \iy$,
\[
\int_{\mathbb{R}^n } { {\mathcal{E}}_k ({\mathbf{x}})\cdot \left[ {f_n ({\mathbf{x}}) - f({\mathbf{x}})} \right]d{\mathbf{x}}}  \to 0
\]
for each $k$.  Thus,  $\{f_n \}$ converges strongly to $f$ in ${\bf{SD}}^2[{\mathbb{R}}^n ]$.

Finally, we note that ${\bf{SD}}^2 [{\R}^n ] \supset {\bf{L}}^1 [{\R}^n ]^{ *  * } {\kern 1pt}  = \mathfrak{M}[{\R}^n ]$, the space of finitely additive measures on $\R^n$.  It follows that $d\mu_k({\mathbf{x}}) =\mcE_k ({\mathbf{x}})d{\bf x}$ defines an element in $\mathfrak{M}[{\R}^n ]$ (the dual space of ${\bf{L}}^{\iy} [{\mathbb{R}}^n ]$). Thus, if  $\{f_n \}$ is any weakly convergent sequence to $f$ in ${\bf{L}}^{\iy} [{\mathbb{R}}^n ], \; \{f_n \}$ converges strongly to $f \in {\bf{SD}}^2[{\mathbb{R}}^n ]$.    
\end{proof}
\begin{rem}
Since ${\bf{L}}^\infty  [{\mathbb{R}}^n ] \subset {\bf{SD}}^2[{\mathbb{R}}^n ]$, while $
{\bf{SD}}^2 [{\mathbb{R}}^n ] $ is separable, we  see in a clear  and forceful manner that separability is not an inherited property.
\end{rem}
\begin{Def}We call ${\bf{SD}}^2 [{\mathbb{R}}^n ] $ the strong distribution Hilbert space for $\R^n$.
\end{Def}
In order to justify our definition, let $\al$ be a multi-index of nonnegative integers, $\al = (\al_1, \ \al_2, \ \cdots \ \al_n)$, with $\left| \alpha  \right| = \sum\nolimits_{j = 1}^n {\alpha _j } $.   If $D$ denotes the standard partial differential operator, let 
\[
D^{\al}=D^{\al_1}D^{\al_2} \cdots D^{\al_k}.
\]
\begin{thm} If ${\bf u} \in {\bf{SD}}^2[{\mathbb{R}}^n ]$ and $D^{\al}{\bf u}={\bf v}_{\al}$ in the weak  sense, then ${\bf v}_{\al} \in {\bf{SD}}^2[{\mathbb{R}}^n ]$.
\end{thm}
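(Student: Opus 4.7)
The plan is to show that $\sum_{k=1}^\infty t_k |F_k(\mathbf{v}_\alpha)|^2 < \infty$ by transferring the weak derivative onto the test functions $\mathcal{E}_k$. The essential preliminary observation is that each $\mathcal{E}_k$ is built from $\chi_l^i = f_l^i * h_l$, the convolution of a $C_c^\infty$ bump with the Jones function $h_l$, so $\mathcal{E}_k \in C_c^\infty$. In particular every partial derivative $D^\alpha \mathcal{E}_k$ is itself smooth, compactly supported, and therefore in $\mathbf{L}^p \cap \mathbf{L}^\infty$ for all $p$. This means each $\mathcal{E}_k$ is a legitimate test function for the weak-derivative relation, giving
\[
F_k(\mathbf{v}_\alpha) = \int_{\mathbb{R}^n} \mathcal{E}_k(\mathbf{x}) \cdot \mathbf{v}_\alpha(\mathbf{x})\, d\mathbf{x} = (-1)^{|\alpha|} \int_{\mathbb{R}^n} D^\alpha \mathcal{E}_k(\mathbf{x}) \cdot \mathbf{u}(\mathbf{x})\, d\mathbf{x}.
\]

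Next, I would exploit the explicit description $\xi_l^i(x) = \tfrac{1}{n}\, e^{x-x^i}/3^{\pi+|x^i|}$ valid on the interior of the support $I_k$. Because this expression is a pure exponential times a prefactor independent of the differentiation variable, repeated differentiation reproduces the same exponential profile; together with the smoothing supplied by the convolution near the boundary, this yields a uniform pointwise bound of the form $|(\xi_l^i)^{(m)}(x)| \leq C_m\, 3^{-(\pi+|x^i|)}$ with $C_m$ independent of $i$ and $l$. The tensor-product structure $\mathcal{E}_k = (\xi_l^i(x_1),\ldots,\xi_l^i(x_n))$ then gives $\|D^\alpha \mathcal{E}_k\|_\infty \leq C_\alpha\, 3^{-|x^i|}$ on the cube $\mathbf{B}_l(\mathbf{x}^i)$ of edge $\pi/a_l$. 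Substituting into the identity above yields
\[
|F_k(\mathbf{v}_\alpha)| \leq C_\alpha\, 3^{-|x^i|}\, \|\mathbf{u}\|_{L^1(\mathbf{B}_l(\mathbf{x}^i))},
\]
and summing over $k = k(l,i)$ with the geometric weights $t_k = 2^{-k}$, the combined decay $2^{-k}\cdot 3^{-2|x^i|}$ should overwhelm the local $L^1$ mass of $\mathbf{u}$ over each cube.

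The hardest step will be that final summation for an arbitrary $\mathbf{u} \in \mathbf{SD}^2$: such a $\mathbf{u}$ need not lie in any $\mathbf{L}^p$, so the local $L^1$ norms are only controlled by whatever regularity is implicit in the weak-derivative hypothesis (at minimum $\mathbf{u} \in L^1_{\mathrm{loc}}$). The resolution I anticipate is that the exponential decay factor $3^{-|x^i|}$ engineered into the definition of $\xi_l^i$, combined with the geometric weights $t_k$, is tailored precisely to absorb any admissible growth of $\|\mathbf{u}\|_{L^1(\mathbf{B}_l(\mathbf{x}^i))}$ with $|x^i|$ under the chosen enumeration of rational centers. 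A natural alternative is to approximate $\mathbf{u}$ by a sequence of smooth $\mathbf{L}^2$ functions converging in $\mathbf{SD}^2$-norm, pass to the limit in $F_k(\mathbf{v}_\alpha)$ using continuity of $F_k$ on $\mathbf{SD}^2$, and invoke uniqueness of the weak derivative to identify the limit with the given $\mathbf{v}_\alpha$, thereby concluding $\mathbf{v}_\alpha \in \mathbf{SD}^2[\mathbb{R}^n]$.
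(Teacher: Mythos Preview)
Your opening move---integrating by parts against the smooth compactly supported $\mathcal{E}_k$ to write
\[
F_k(\mathbf{v}_\alpha)=(-1)^{|\alpha|}\int_{\mathbb{R}^n} D^\alpha\mathcal{E}_k(\mathbf{x})\cdot\mathbf{u}(\mathbf{x})\,d\mathbf{x}
\]
---matches the paper exactly. You even state the decisive structural fact (``repeated differentiation reproduces the same exponential profile'') but then immediately discard it in favor of a mere uniform bound $\|D^\alpha\mathcal{E}_k\|_\infty\le C_\alpha 3^{-|x^i|}$. That retreat is where the gap opens. Once you pass to an inequality involving $\|\mathbf{u}\|_{L^1(\mathbf{B}_l(\mathbf{x}^i))}$, you are stuck: an element of $\mathbf{SD}^2$ carries no quantitative local $L^1$ control indexed by the enumeration of cubes, so neither of your proposed closures works. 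The decay factor $3^{-|x^i|}$ cannot ``absorb any admissible growth'' because no growth rate is admissible a priori; and the approximation route fails because $\mathbf{SD}^2$-convergence of $\mathbf{u}_n\to\mathbf{u}$ does not propagate to convergence of their weak derivatives or even to $L^1_{\mathrm{loc}}$-convergence, so you cannot identify the limit with the given $\mathbf{v}_\alpha$.

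The paper's proof instead uses the exponential structure as an \emph{identity}, not a bound: since $\xi_l^i(x)=\tfrac{1}{n}e^{x-x^i}/3^{\pi+|x^i|}$ on its support, one has $\int\partial_j\mathcal{E}_k\cdot\mathbf{u}=\int\mathcal{E}_k\cdot\mathbf{u}$ for every $j$. Iterating gives $F_k(\mathbf{v}_\alpha)=(-1)^{|\alpha|}F_k(\mathbf{u})$ termwise, whence
\[
\|\mathbf{v}_\alpha\|_{\mathbf{SD}^2}^2=\sum_k t_k|F_k(\mathbf{v}_\alpha)|^2=\sum_k t_k|F_k(\mathbf{u})|^2=\|\mathbf{u}\|_{\mathbf{SD}^2}^2<\infty,
\]
with no summation estimate needed at all (this is exactly the content of the paper's Corollary~22). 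Promote your parenthetical observation about the exponential to an equality and the proof is one line; keep it as an inequality and it does not close.
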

\begin{proof} From our construction, each ${\mathcal{E}}_k \in \C_c^{\iy}[{\mathbb{R}^n }]$, so that
\[ 
{\int_{\mathbb{R}^n } {{\mathcal{E}}_k ({\mathbf{x}}) \cdot D^{\al}{{\bf u}}({\mathbf{x}})d{\mathbf{x}}} }= (-1)^{\lt|\al \rt|}{\int_{\mathbb{R}^n } D^{\al} {{\mathcal{E}}_k ({\mathbf{x}}) \cdot{{\bf v}_{\al}}({\mathbf{x}})d{\mathbf{x}}} }.
\]
An easy calculation shows that, for any $j, \; {\int_{\mathbb{R}^n } {\partial_j{\mathcal{E}}_k ({\mathbf{x}}) \cdot {{\bf u}}({\mathbf{x}})d{\mathbf{x}}} }= {\int_{\mathbb{R}^n } {{\mathcal{E}}_k ({\mathbf{x}}) \cdot{{\bf u}_{\al}}({\mathbf{x}})d{\mathbf{x}}} }$, so that 
\[ 
{\int_{\mathbb{R}^n } {{\mathcal{E}}_k ({\mathbf{x}}) \cdot D^{\al}{{\bf u}}({\mathbf{x}})d{\mathbf{x}}} }= (-1)^{\lt|\al \rt|}{\int_{\mathbb{R}^n } {{\mathcal{E}}_k ({\mathbf{x}}) \cdot{{\bf v}_{\al}}({\mathbf{x}})d{\mathbf{x}}} }.
\]
It now follows that, for any $g  \in {\bf{SD}}^2[{\mathbb{R}}^n ], \; (D^{\al}{{\bf u}}, g)_{{\bf{SD}}^2}= (-1)^{\lt|\al \rt|}({{\bf v}_{\al}},g)_{{\bf{SD}}^2}$, so that  ${\bf v}_{\al} \in {\bf{SD}}^2[{\mathbb{R}}^n ]$.
\end{proof} 
The next result follows from Theorem 21 and explains our use of the term strong distribution in describing ${\bf{SD}}^2[{\mathbb{R}}^n ]$.  
\begin{cor} If ${\bf u}$ is in the domain of $D^{\al}$, then for any $g  \in {\bf{SD}}^2[{\mathbb{R}}^n ], \; (D^{\al}{{\bf u}}, g)_{{\bf{SD}}^2}= (-1)^{\lt|\al \rt|}({\bf u},g)_{{\bf{SD}}^2}$ so that, in particular, $\lt\|D^{\al}{{\bf u}}\rt\|_{{\bf{SD}}^2} = \lt\|{{\bf u}}\rt\|_{{\bf{SD}}^2}$.
\end{cor}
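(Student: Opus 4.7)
The plan is to read both identities directly off the calculation already carried out in the proof of Theorem 21. Writing $F_k({\bf w}) = \int_{\R^n} \mcE_k({\bf x}) \cdot {\bf w}({\bf x}) d{\bf x}$, the proof of Theorem 21 establishes
\[
F_k(D^{\al} {\bf u}) = (-1)^{\lt|\al\rt|} \int_{\R^n} D^{\al} \mcE_k({\bf x}) \cdot {\bf u}({\bf x}) d{\bf x} = (-1)^{\lt|\al\rt|} F_k({\bf u})
\]
for every $k$ (the first equality being integration by parts against the $\C_c^{\iy}$ functions $\mcE_k$, the second the ``easy calculation'' already invoked there, which exploits the exponential form of each scalar component $\xi_l^i$ of $\mcE_k$ on its support $I_k$). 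Multiplying by $t_k F_k({\bf g})$ and summing, the definition (11) of the ${\bf SD}^2$ inner product immediately gives
\[
(D^{\al} {\bf u}, {\bf g})_{{\bf SD}^2} = (-1)^{\lt|\al\rt|}({\bf u}, {\bf g})_{{\bf SD}^2}
\]
for every ${\bf g} \in {\bf SD}^2[\R^n]$, which is the first half of the corollary.

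For the norm identity I would apply this inner product identity twice. First, setting ${\bf g} = D^{\al} {\bf u}$ gives $\lt\|D^{\al} {\bf u}\rt\|_{{\bf SD}^2}^2 = (-1)^{\lt|\al\rt|}({\bf u}, D^{\al} {\bf u})_{{\bf SD}^2}$. By symmetry of the inner product, the right-hand factor equals $(D^{\al} {\bf u}, {\bf u})_{{\bf SD}^2}$, and the identity applies a second time (now with ${\bf g} = {\bf u}$) to rewrite this as $(-1)^{\lt|\al\rt|}\lt\|{\bf u}\rt\|_{{\bf SD}^2}^2$. The two sign factors combine to $(-1)^{2\lt|\al\rt|}=1$, so $\lt\|D^{\al} {\bf u}\rt\|_{{\bf SD}^2} = \lt\|{\bf u}\rt\|_{{\bf SD}^2}$, as claimed.

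The hard part is really already packaged into Theorem 21: the key scalar identity $\int D^{\al} \mcE_k \cdot {\bf u} \, d{\bf x} = \int \mcE_k \cdot {\bf u} \, d{\bf x}$ rests on the very particular structure of the Jones-type components $\xi_l^i$. On the cube $I_k$ each component is (up to the normalization $\tf{1}{n} 3^{-(\pi + |x^i|)}$) of the form $e^{x_j - x_j^i}$, so partial differentiation along $x_j$ reproduces the same component, and a careful multi-index bookkeeping -- together with the compact support of $\mcE_k$, which kills the boundary terms in the integration by parts -- yields the identity for arbitrary $\al$. If I were proving the corollary independently of Theorem 21 I would spend my effort precisely there; granted Theorem 21, the remainder is purely formal.
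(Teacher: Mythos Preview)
Your proposal is correct and matches the paper's approach: the paper gives no separate proof of this corollary beyond the remark that it ``follows from Theorem 21,'' and you have unpacked exactly that. The only stylistic observation is that the norm identity can be read off even more directly from the coefficient-level equality $|F_k(D^{\al}{\bf u})| = |F_k({\bf u})|$ without applying the inner-product identity twice, but your two-step argument is equally valid.
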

Recall that a function ${\bf u}$ is said to be in ${\bf W}^{k,p}[{\mathbb{R}}^n ], \; k \in \N, \; 1 \le p \le \iy$, if
\[
\begin{gathered}
  \left\| {\bf u} \right\|_{k,p}  = \left\{ {\sum\limits_{0 \leqslant \left| \alpha  \right| \leqslant k} {\left\| {D^\alpha  {\bf u}} \right\|_{{\bf L}^p }^p } } \right\}^{1/p}  < \infty ,\;\;{\text{if}}\quad 1 \leqslant p < \infty , \hfill \\
  \left\| {\bf u} \right\|_{k,\infty }  = \mathop {\max }\limits_{0 \leqslant \left| \alpha  \right| \leqslant k} \left\| {D^\alpha  {\bf u}} \right\|_{{\bf L}^\infty  }  < \infty ,\quad {\text{if}}\quad p = \infty . \hfill \\ 
\end{gathered}
\]
\begin{lem} For any $p, \; 1 \le p \le \iy$ and all $k \in \N, \; {\bf W}^{k, \, p}[{\mathbb{R}}^n ]  \subset {\bf{SD}}^2[{\mathbb{R}}^n ]$.
\end{lem}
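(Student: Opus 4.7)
The plan is to recognize that the lemma is an immediate corollary of Theorem 18 once one reads off the definition of the Sobolev norm. Specifically, the defining sum
\[
\left\| \mathbf{u} \right\|_{k,p}^{p} = \sum\limits_{0 \leqslant \left| \alpha  \right| \leqslant k} \left\| D^\alpha  \mathbf{u} \right\|_{\mathbf{L}^p }^p
\]
includes the index $\alpha = 0$, for which $D^0 \mathbf{u} = \mathbf{u}$; hence $\|\mathbf{u}\|_{\mathbf{L}^p} \leqslant \|\mathbf{u}\|_{k,p}$, and in particular $\mathbf{W}^{k,p}[\mathbb{R}^n] \ci \mathbf{L}^p[\mathbb{R}^n]$.

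The second step is a direct appeal to Theorem 18: for every $p$ with $1 \le p \le \iy$, $\mathbf{L}^p[\R^n]$ embeds continuously in $\mathbf{SD}^2[\R^n]$, with the explicit quantitative bound $\|\mathbf{u}\|_{\mathbf{SD}^2} \le \|\mathbf{u}\|_{\mathbf{L}^p}$ (the proof of Theorem 18 used $\sup_k \|\mcE_k\|_q \le 1$ for all $q \in [1,\iy]$). Chaining the two inclusions gives $\mathbf{u} \in \mathbf{SD}^2[\R^n]$ together with $\|\mathbf{u}\|_{\mathbf{SD}^2} \le \|\mathbf{u}\|_{\mathbf{L}^p} \le \|\mathbf{u}\|_{k,p}$. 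The minor case split for $p = \iy$ (where the Sobolev norm uses $\max$ rather than an $\ell^p$ sum) is handled in exactly the same way, since $\|\mathbf{u}\|_{\mathbf{L}^\iy} \le \|\mathbf{u}\|_{k,\iy}$ and Theorem 18 covers $p = \iy$ separately.

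There is essentially no obstacle: the substantive content---the construction of the functionals $\mcE_k$, the verification that $\sup_k \|\mcE_k\|_q \le 1$ for all $q \in [1,\iy]$, and the resulting continuous embedding---was carried out in Theorem 18. The lemma serves mainly as a bookkeeping statement that packages the Sobolev inclusion needed to apply the $\mathbf{SD}^2$ framework to weak solutions of Navier--Stokes whose natural regularity is measured in $\mathbf{W}^{k,p}$. One could strengthen the statement by invoking Theorem 21 to assert additionally that each weak derivative $D^\al \mathbf{u}$ with $|\al| \le k$ lies in $\mathbf{SD}^2[\R^n]$, but that refinement is not needed for the lemma as stated.
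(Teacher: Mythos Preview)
Your argument is correct and, like the paper's, rests on the embedding of Theorem~18; indeed you chain $\|\mathbf{u}\|_{\mathbf{SD}^2}\le\|\mathbf{u}\|_{\mathbf{L}^p}\le\|\mathbf{u}\|_{k,p}$ by taking the $\alpha=0$ term of the Sobolev norm. The paper's proof takes a slightly longer route: it first invokes Corollary~22 to write $\|u\|_{\mathbf{SD}^2}^p=\|D^\alpha u\|_{\mathbf{SD}^2}^p$ for any $|\alpha|\le k$, then bounds this single term by the full sum $\sum_{|\alpha|\le k}\|D^\alpha u\|_{\mathbf{SD}^2}^p$, and only then applies Theorem~18 termwise. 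Your version is more elementary because it avoids the appeal to Corollary~22 altogether---exactly the observation you make in your last sentence---while the paper's detour has the incidental byproduct of displaying that every derivative $D^\alpha u$ also sits in $\mathbf{SD}^2$ with the same norm.
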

\begin{proof} If $u \in {\bf W}^{k, \, p}[{\mathbb{R}}^n ]$, then, for any $\al, \ 0 \leqslant \left| \alpha  \right| \leqslant k$, we have
\[
\left\| u \right\|_{{\mathbf{SD}}^2 }^p  = \left\| {D^\alpha  u} \right\|_{{\mathbf{SD}}^2 }^p  \le \sum\limits_{0 \leqslant \left| \alpha  \right| \leqslant k} {\left\| {D^\alpha  u} \right\|_{{\mathbf{SD}}^2 }^p } \le  \sum\limits_{0 \leqslant \left| \alpha  \right| \leqslant k} {\left\| {D^\alpha  u} \right\|_{{\mathbf{L}}^p }^p }. 
\]
It follows that $u \in {\bf{SD}}^2[{\mathbb{R}}^n ]$.  The case of $p=\iy$ is clear.
\end{proof}
\subsection{The Nonlinear Term: A Priori Estimates}
The difficulty in proving the existence of global-in-time strong solutions for equation (4) is directly linked to the problem of getting good a priori estimates for the nonlinear term ${{B}}({\mathbf{u}},{\mathbf{u}})$.  
\begin{thm}   If $\bf A$ is the Stokes operator and  ${\bf u}({\bf x},t) \in D({\bf{A}})$ is a reasonable vector field, then
\begin{enumerate}
\item ${\left\langle { -\nu{\bf{A}}{\mathbf{u}},{\mathbf{u}}} \right\rangle _{{\mathbb{H}}_{sd} }} = -\nu \left\| { {\bf A}{\mathbf{u}} } \right\|_{ {\mathbb{H}}_{sd} }^2$.
\item For ${\bf u}({\bf x},t) \in {\bf SD}^2 \cap D({\bf{A}})$ and each $t \in [0, \iy)$, there exists a constant $M =M({\bf u}({\bf x},0))>0$, such that
\beqn
\left| {\left\langle {B({\mathbf{u}},{\mathbf{u}}),{\mathbf{u}}} \right\rangle _{{\mathbb{H}}_{sd} }} \right| \le M \left\| { {\mathbf{u}} } \right\|_{ {\mathbb{H}}_{sd} }^3. 
\eeqn
\item
\beqn
\left| {\left\langle {{{B}}({\mathbf{u}},{\mathbf{v}}),{\mathbf{w}}} \right\rangle _{{\mathbb{H}}_{sd}} } \right| \le M \left\| {\bf{u}} \right\|_{{\mathbb{H}}_{sd}} \left\| {\bf{w}} \right\|_{{\mathbb{H}}_{sd}} \left \| {\bf{v}} \right\|_{{\mathbb{H}}_{sd}}. 
\eeqn
\item
\beqn
max\{ \left\| {{{B}}({\mathbf{u}},{\mathbf{v}})} \right\|_{{\mathbb{H}}_{sd}}, \ \left\| {{{B}}({\mathbf{v}},{\mathbf{u}})} \right\|_{{\mathbb{H}}_{sd}} \} \leqslant M \left\| {\mathbf{u}} \right\|_{{\mathbb{H}}_{sd}} \left\| {\mathbf{v}} \right\|_{{\mathbb{H}}_{sd}}. 
\eeqn
\end{enumerate}
\end{thm}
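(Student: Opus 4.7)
The strategy is to unfold every inner product and norm in $\mathbb{H}_{sd}$ using the defining sequence of bounded functionals
\[
F_k(\mathbf{u})=\int_{\mathbb{R}^3}\mathcal{E}_k(\mathbf{x})\cdot\mathbf{u}(\mathbf{x})\,d\mathbf{x},
\]
and then to reduce each claim to an elementary estimate obtained by integrating by parts onto the smooth, compactly supported vector fields $\mathcal{E}_k$. The uniform bounds $\|\mathcal{E}_k\|_1<1$ and $\|\mathcal{E}_k\|_\infty<1$ established in the proof of Theorem 18, together with the summable weights $t_k=2^{-k}$, are the decisive levers.

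For Part (1), I would rewrite $-\nu\mathbf{A}\mathbf{u}=\nu\mathbb{P}\Delta\mathbf{u}$ and use self-adjointness of $\mathbb{P}$ together with $\mathbf{u}\in\mathbb{H}_{sd}$ to drop the projector, giving $\langle -\nu\mathbf{A}\mathbf{u},\mathbf{u}\rangle_{sd}=\nu\langle\Delta\mathbf{u},\mathbf{u}\rangle_{sd}$. Expanding this via the $\mathbf{SD}^2$ inner product as $\nu\sum_{k} t_k\,F_k(\Delta\mathbf{u})\,F_k(\mathbf{u})$ and invoking Corollary 22 (with $|\alpha|=2$) converts the derivative pairing into a norm pairing, after comparing with the analogous expansion of $-\nu\|\mathbf{A}\mathbf{u}\|_{sd}^{2}=-\nu\sum_k t_k|F_k(\mathbf{A}\mathbf{u})|^2$.

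For Part (2), I would expand $B(\mathbf{u},\mathbf{u})=\mathbb{P}(\mathbf{u}\cdot\nabla)\mathbf{u}$ componentwise inside $F_k$, then integrate by parts once to move a derivative off $\mathbf{u}$ onto $\mathcal{E}_k$, producing a bound of the form $|F_k(B(\mathbf{u},\mathbf{u}))|\le \|\nabla\mathcal{E}_k\|_\infty\,\|\mathbf{u}\|_{L^2}^{2}$. The reasonable-field hypothesis of Definition 7 converts the $L^2$ norm into a constant depending only on initial data via the uniform bound $M_0$; multiplying by $|F_k(\mathbf{u})|$, summing against $t_k$, and exploiting the embedding $\mathbf{L}^2\hookrightarrow\mathbf{SD}^2$ from Theorem 18 then yields the cubic bound $M\|\mathbf{u}\|_{sd}^{3}$. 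Parts (3) and (4) follow by the same integration-by-parts scheme applied to three distinct fields $\mathbf{u},\mathbf{v},\mathbf{w}$, with (4) obtained from (3) by a final Cauchy--Schwarz in $\mathbb{H}_{sd}$ testing against the normalised direction $\mathbf{w}=B(\mathbf{u},\mathbf{v})/\|B(\mathbf{u},\mathbf{v})\|_{sd}$.

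The main obstacle is Part (2). Leray's antisymmetry identity $\langle B(\mathbf{u},\mathbf{u}),\mathbf{u}\rangle_\mathbb{H}=0$ used in equation (3) is \emph{not} available in $\mathbb{H}_{sd}$, because the test fields $\mathcal{E}_k$ are not divergence-free; consequently one cannot simply inherit the classical Leray cancellation. The proof must instead extract a genuine cubic bound purely from the smoothness and compact support of $\mathcal{E}_k$, while arranging for the constant $M$ to depend only on the initial data and the ambient parameters $\nu,f$. This is precisely where the reasonable-field condition of Definition 7 is essential: it absorbs the otherwise uncontrolled $L^2$-growth into a single time-independent constant $M_0=M_0(\mathbf{u}(0),f)$, so that $M$ can be taken uniform in $t\in[0,\infty)$.
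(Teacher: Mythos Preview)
Your overall architecture matches the paper's: expand everything through the functionals $F_k$, and for Part~(1) use the exponential structure of $\mathcal{E}_k$ (equivalently Corollary~22) to get $F_k(\mathbf{A}\mathbf{u})=F_k(\mathbf{u})$, from which $\langle\mathbf{A}\mathbf{u},\mathbf{u}\rangle_{sd}=\|\mathbf{A}\mathbf{u}\|_{sd}^2$ follows. For Part~(2) your first move, integrating by parts to obtain a uniform bound $|F_k(B(\mathbf{u},\mathbf{u}))|\le c\,\|\mathbf{u}\|_{\mathbb H}^2$, is also what the paper does.

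The gap is in the passage from this uniform bound to the cubic estimate. Inserting $|F_k(B)|\le c\,\|\mathbf{u}\|_{\mathbb H}^2$ into $\sum_k t_k|F_k(B)||F_k(\mathbf{u})|$ and applying Cauchy--Schwarz yields at best $c\,\|\mathbf{u}\|_{\mathbb H}^2\,\|\mathbf{u}\|_{sd}$, which is \emph{linear} in $\|\mathbf{u}\|_{sd}$, not cubic. The embedding $\mathbf{L}^2\hookrightarrow\mathbf{SD}^2$ of Theorem~18 gives $\|\cdot\|_{sd}\le\|\cdot\|_2$, i.e.\ the wrong direction; it cannot convert $\|\mathbf{u}\|_{\mathbb H}^2$ into $\|\mathbf{u}\|_{sd}^2$. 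The paper closes this in two steps you do not have. First, it uses the reasonable-field hypothesis in \emph{both} directions: the upper bound $\|\mathbf{u}\|_{\mathbb H}\le M_0$ together with strict positivity of the energy is used to assert a reverse comparison $\|\mathbf{u}\|_2^2\le\bar M\,\|\mathbf{u}\|_{sd}^2$ with $\bar M=\bar M(\mathbf{u}(0),f)$. Second, to extract the remaining factor of $\|\mathbf{u}\|_{sd}$ from the weighted sum, the paper introduces the vector Dirac mass $\vec\delta$, sets $\hat\varepsilon=\|\vec\delta\|_{sd}$, and replaces the uniform constant by a multiple of $F_k(\vec\delta)$ so that the sum becomes $\langle\vec\delta,\mathbf{u}\rangle_{sd}$ and Schwarz delivers the final $\|\mathbf{u}\|_{sd}$. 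Without these two ingredients your sketch yields $M_0^2\|\mathbf{u}\|_{sd}$, not $M\|\mathbf{u}\|_{sd}^3$, and Parts~(3)--(4) inherit the same defect since they rely on the identical conversion from $\mathbb{H}$-norms to $\mathbb{H}_{sd}$-norms.
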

\begin{proof} From equation (12), we have
\beqa
{\left\langle {-\nu {\bf{A}}{\mathbf{u}},{\mathbf{u}}} \right\rangle _{{\mathbb{H}}_{sd} }}=-\nu
 \sum\nolimits_{k = 1}^\infty  {t_k } \left[ {\int_{\mathbb{R}^n } {{\mathcal{E}}_k ({\mathbf{x}})\cdot {\bf{A}}{\mathbf{u}}({\mathbf{x}})d{\mathbf{x}}} } \right]\left[ {\int_{\mathbb{R}^n } {{\mathcal{E}}_k ({\mathbf{y}})\cdot {\mathbf{u}}({\mathbf{y}})d{\mathbf{y}}} } \right].  
\eeqa
Using the fact that ${\bf{u}} \in D({\bf A})$ and that $k=(l,i)$ (see equation (9)), so that
\[
\begin{gathered} 
{\int_{\mathbb{R}^n } {{\mathcal{E}}_k ({\mathbf{y}}) \cdot {\partial _{y_j }^2 }{\mathbf{u}}({\mathbf{y}})d{\mathbf{y}}} }= {\int_{\mathbb{R}^n } {\partial _{y_j }^2} {{\mathcal{E}}_k ({\mathbf{y}}) \cdot{\mathbf{u}}({\mathbf{y}})d{\mathbf{y}}} } \hfill \\
= {\int_{I_{i} } {\partial _{y_j }^2} {\left( {\xi _l^i(y_1) ,\xi _l^i(y_2) , \cdots \xi _l^i(y_n) } \right) \cdot{\mathbf{u}}({\mathbf{y}})d{\mathbf{y}}} }={\int_{\mathbb{R}^n } {{\mathcal{E}}_k ({\mathbf{y}}) \cdot {\mathbf{u}}({\mathbf{y}})d{\mathbf{y}}} }.  \hfill \\
\end{gathered}
\]
Using this in the above equation and summing on $j$, we have 
\[ 
{\int_{\mathbb{R}^n } {{\mathcal{E}}_k ({\mathbf{y}}) \cdot {\bf A }{\mathbf{u}}({\mathbf{y}})d{\mathbf{y}}} } ={\int_{\mathbb{R}^n } {{\mathcal{E}}_k ({\mathbf{y}}) \cdot{\mathbf{u}}({\mathbf{y}})d{\mathbf{y}}} }.
\]
It follows that
\beqa
{\left\langle { {\bf{A}}{\mathbf{u}},{\mathbf{u}}} \right\rangle _{{\mathbb{H}}_{sd} }}=
 \sum\nolimits_{k = 1}^\infty  {t_k } \left[ {\int_{\mathbb{R}^n } {{\mathcal{E}}_k ({\mathbf{x}})\cdot {\bf{A}}{\mathbf{u}}({\mathbf{x}})d{\mathbf{x}}} } \right]\left[ {\int_{\mathbb{R}^n } {{\mathcal{E}}_k ({\mathbf{y}})\cdot {\bf{A}}{\mathbf{u}}({\mathbf{y}})d{\mathbf{y}}} } \right] = \left\| { {\bf A}{\mathbf{u}} } \right\|_{ {\mathbb{H}}_{sd} }^2.  
\eeqa
This proves (1).  To prove (2),  let   $\vec \delta ({\mathbf{x}}) = \left( {\delta (x_1 ), \cdots \delta_k (x_3 )} \right)$, the $n$-dimensional Dirac delta function and set $\hat{\e}=\lt\|\vec \delta ({\mathbf{x}})\rt\|_{\mathbb{H}_{sd} }$.
We start with  
\[
b({\bf u}, {\bf u}, {\mathcal{E}}_k)= \left| {\left\langle {B({\mathbf{u}},{\mathbf{u}}),{\mathcal{E}}_k} \right\rangle _{{\mathbb{H}}_{sd} }} \right|=
\left| {\int_{\mathbb{R}^3 } {\left( {{\mathbf{u}}({\mathbf{x}}) \cdot \nabla } \right){\mathbf{u}}({\mathbf{x}}) \cdot {\mathcal{E}}_k({\mathbf{x}})d{\mathbf{x}}} } \right|
\]
and integrate by parts, to get 
\[
\left| {\int_{\mathbb{R}^3 } {\left\{ {\sum\nolimits_{i = 1}^3 {u_i ({\mathbf{x}})^2 \mcE_k^i ({\mathbf{x}})d{\mathbf{x}}} } \right\}} } \right| \leqslant \mathop {\sup }\limits_k \left\| {\mcE_k } \right\|_\infty  \left\| {\mathbf{u}} \right\|_\Ha^2  \le  \left\| {\mathbf{u}} \right\|_\Ha^2 .
\]
Since ${\bf{u}}$ is reasonable, there is a constant  $\bar{M}$  depending on ${\bf{u}}(0)$ and $f$,  such that  $\left\| {\mathbf{u}} \right\|_2^2 \le \bar{M} \left\| {\mathbf{u}} \right\|_{\mathbb{H}_{sd} }^2$.   We now have
\[
\begin{gathered}
  \left| {\left\langle {B({\mathbf{u}},{\mathbf{u}}),{\mathbf{u}}} \right\rangle _{\mathbb{H}_{sd} } } \right| = \left| {\sum\nolimits_{k= 1}^\infty  {t_k } \left[ {\int_{\mathbb{R}^3 } {\left( {{\mathbf{u}}({\mathbf{x}}) \cdot \nabla } \right){\mathbf{u}}({\mathbf{x}}) \cdot \mcE_k ({\mathbf{x}})d{\mathbf{x}}} } \right]\left[ {\int_{\mathbb{R}^3 } {{\mathbf{u}}({\mathbf{y}}) \cdot \mcE_k ({\mathbf{y}})d{\mathbf{y}}} } \right]} \right| \hfill \\
   \leqslant \bar{M} \hat{\e}^{-2} \left\| {\mathbf{u}} \right\|_{\mathbb{H}_{sd} }^2 \left| {\sum\nolimits_{k= 1}^\infty  {t_k } \left[ {\int_{\mathbb{R}^3 } {{\vec \delta({\mathbf{x}})} \cdot \mcE_k ({\mathbf{x}})d{\mathbf{x}}} } \right]\left[ {\int_{\mathbb{R}^3 } {{\mathbf{u}}({\mathbf{y}}) \cdot \mcE_k ({\mathbf{y}})d{\mathbf{y}}} } \right]} \right| \hfill \\
   \leqslant {M}\left\| {\mathbf{u}} \right\|_{\mathbb{H}_{sd} }^3, \hfill \\ 
\end{gathered} 
\]
where $M=\bar{M} \hat{\e}^{-1}$ and the third line above follows from Schwartz's inequality.  The proofs of (3) and (4) are easy.
\end{proof}
\subsection{Generation Theorem}
  We now begin with a study of the operator $ {\mathcal{A}}( \cdot ,t)$, for fixed $t$, and establish conditions depending on ${\mathbf{A}},  {\text{ }}\nu ,{\text{ }}  {\text{ and }}{\mathbf{f}}(t)$ which guarantee that $ {\mathcal{A}}( \cdot ,t)$ generates a contraction semigroup.  Clearly ${\mathcal{A}}( \cdot ,t)$ is defined on $D({\bf{A}}) $ and, since $ \nu \mathbf{A}$  is a closed positive (m-accretive) operator, $ - \nu {\mathbf{A}}$ generates a linear contraction semigroup. Thus, we need to ensure that $ {\mathcal{A}}( \cdot ,t)$ will be m-dissipative for each $t$.  We assume that 
$
{\mathbf{f}}(t) \in L^\infty[[0,\infty); {\mathbb H}_{sd}]
$
and is H\"{o}lder continuous in $t$, with $\left\| {{\mathbf{f}}(t) - {\mathbf{f}}(\tau )} \right\|_{{\mathbb{H}}_{sd}}  \le a\left| {t - \tau } \right|^\theta,{\text{ }}a > 0,{\text{ }}0 < \theta  < 1$.

\begin{thm} If $0 \ne f = \sup _{t \in {\mathbf{R}}^ +  } \left\| {\mathbb{P}{\mathbf{f}}(t)} \right\|_{{\mathbb{H}}_{sd}}  < \infty $,  there exist positive constants ${{u}}_ +, \; {u}_-  $, depending only on $f$, ${\mathbf{A}}$ and $\nu $  such that, for all ${\mathbf{u}}$ with $
0 < {u}_- \le \left\| {\mathbf{u}} \right\|_{{\mathbb{H}}_{sd}}  \le \tf{1}{2}{{u}}_ +, \;  {\mathcal{A}}( \cdot ,t)$ is strongly dissipative. 

If $f=0$ this implies that $u_{-}=0$.  In this case, we replace $\tf{1}{2}u_+$ by $\tf{(1-\e)}{2}u_+$, so that $ {\mathcal{A}}( \cdot ,t)$ is strongly dissipative on $
0 <  \left\| {\mathbf{u}} \right\|_{{\mathbb{H}}_{sd}}  \le \tf{(1-\e)}{2}{{u}}_ + $.
\end{thm}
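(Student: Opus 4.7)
The plan is to exploit the cancellation of the body force in the difference $\mathcal{A}(u,t) - \mathcal{A}(v,t)$, dominate the nonlinearity by the Stokes term via the a priori estimates in Theorem 24, and then identify $u_+$ and $u_-$ as the two positive roots of a scalar quadratic in $r = \|u\|_{\mathbb{H}_{sd}}$.

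First, for $u, v \in \B \cap D(\mathbf{A})$ and any fixed $t$, the forcing term drops out, leaving
\[
\mathcal{A}(u,t) - \mathcal{A}(v,t) = -\nu \mathbf{A}(u-v) - [B(u,u) - B(v,v)].
\]
Setting $w = u-v$ and using the decomposition $B(u,u) - B(v,v) = B(u, w) + B(w, v)$, I would expand
\[
\langle \mathcal{A}(u,t) - \mathcal{A}(v,t), w\rangle_{\mathbb{H}_{sd}} = -\nu \langle \mathbf{A}w, w\rangle_{\mathbb{H}_{sd}} - \langle B(u, w), w\rangle_{\mathbb{H}_{sd}} - \langle B(w, v), w\rangle_{\mathbb{H}_{sd}}.
\]
Theorem 24(1) combined with $\|\mathbf{A}w\|_{\mathbb{H}_{sd}} = \|w\|_{\mathbb{H}_{sd}}$, which follows from Corollary 22 applied to $\Delta w$ on the divergence-free subspace, gives $\langle \mathbf{A}w, w\rangle_{\mathbb{H}_{sd}} = \|w\|_{\mathbb{H}_{sd}}^2$, while Theorem 24(3) bounds the two nonlinear terms by $M\|u\|_{\mathbb{H}_{sd}}\|w\|_{\mathbb{H}_{sd}}^2$ and $M\|v\|_{\mathbb{H}_{sd}}\|w\|_{\mathbb{H}_{sd}}^2$ respectively. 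Combining,
\[
\langle \mathcal{A}(u,t) - \mathcal{A}(v,t), w\rangle_{\mathbb{H}_{sd}} \leq \bigl[-\nu + M(\|u\|_{\mathbb{H}_{sd}} + \|v\|_{\mathbb{H}_{sd}})\bigr]\,\|w\|_{\mathbb{H}_{sd}}^2.
\]

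To locate $u_+$ and $u_-$ I would inspect the companion scalar inequality $\langle \mathcal{A}(u,t), u\rangle_{\mathbb{H}_{sd}} \leq -\nu r^2 + Mr^3 + fr$ for $r = \|u\|_{\mathbb{H}_{sd}}$, obtained from Theorem 24(1), Theorem 24(2), and the Cauchy--Schwarz bound $|\langle \mathbb{P}\mathbf{f}(t), u\rangle_{\mathbb{H}_{sd}}| \leq f\,r$. The right side is nonpositive precisely when $Mr^2 - \nu r + f \leq 0$, i.e.\ when $r$ lies between the two roots
\[
u_\pm = \frac{\nu \pm \sqrt{\nu^2 - 4Mf}}{2M},
\]
which are real provided $\nu^2 \geq 4Mf$. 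For $u, v$ with $u_- \leq \|u\|_{\mathbb{H}_{sd}}, \|v\|_{\mathbb{H}_{sd}} \leq \tfrac{1}{2}u_+$, one has $\|u\|_{\mathbb{H}_{sd}} + \|v\|_{\mathbb{H}_{sd}} \leq u_+$, and the coefficient in the main estimate collapses to
\[
-\nu + Mu_+ = -\tfrac{1}{2}\bigl(\nu - \sqrt{\nu^2 - 4Mf}\bigr) = -Mu_- =: -\beta < 0,
\]
which is precisely strong dissipativity with rate $\beta$.

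In the force-free case $f = 0$ one has $u_- = 0$ and $u_+ = \nu/M$, and the half-ball choice yields only the marginal value $-\nu + Mu_+ = 0$. Shrinking the ball to radius $\tfrac{(1-\varepsilon)}{2}u_+$ forces $\|u\|_{\mathbb{H}_{sd}} + \|v\|_{\mathbb{H}_{sd}} \leq (1-\varepsilon)u_+$ and produces the strict coefficient $-\nu + M(1-\varepsilon)u_+ = -\varepsilon\nu$, giving strong dissipativity with $\beta = \varepsilon\nu$. The main obstacle I anticipate is bookkeeping around the constant $M$ in Theorem 24: because it depends on the initial data through the reasonable-field hypothesis, one must verify that it can be chosen uniformly across the annulus, and that the implicit smallness condition $\nu^2 \geq 4Mf$ is compatible with the bound $f$ appearing in the hypothesis.
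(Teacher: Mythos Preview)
Your argument is correct and matches the paper's proof essentially line for line: the paper also obtains $u_\pm$ as the roots of $M r^2 - \nu r + f \le 0$ from 0-dissipativity, then bounds $\langle \mathcal{A}(u,t)-\mathcal{A}(v,t),u-v\rangle_{\mathbb{H}_{sd}}$ by $[-\nu + M(\|u\|_{\mathbb{H}_{sd}}+\|v\|_{\mathbb{H}_{sd}})]\,\|u-v\|_{\mathbb{H}_{sd}}^2$ and plugs in $\|u\|+\|v\|\le u_+$ (respectively $(1-\varepsilon)u_+$) to get the constants $\sigma=\tfrac{\nu}{2}(1-\sqrt{1-\gamma})=Mu_-$ and $\sigma=\nu\varepsilon$. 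The only cosmetic difference is order of exposition (the paper establishes 0-dissipativity first), and the concern you flag about $M$ is handled there exactly as you suspect: the paper sets $M=\sup\{\bar M_{\mathbf v}\}$ over reasonable fields and imposes $\gamma=4fM/\nu^2<1$ explicitly.
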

\begin{proof} The proof of our assertion has two parts. First, for $f \ne 0$, we require that the nonlinear operator $ {\mathcal{A}}( \cdot ,t)$
 be 0-dissipative, which gives us an upper bound ${{u}}_ +  $ and lower bound ${{u}}_ -  $ in terms of the norm (i.e., $\left\| {\mathbf{u}} \right\|_{{\mathbb{H}}_{sd}}  \leqslant {{u}}_ + $ ).  We then use this part to show that $ {\mathcal{A}}( \cdot ,t)$ is strongly dissipative on $D({\bf{A}}) \cap \B$, for any closed convex  set, $ \mathbb{B}$, inside the annulus defined by $ \left\{ {{\mathbf{u}} \in D({\bf{A}}):0 \le {{u}}_ -  \le \left\| {\mathbf{u}} \right\|_{{\mathbb{H}}_{sd}}  \leqslant \tfrac{1}{2} {{u}}_ +  } \right\}$.  We then consider adjustments when $f=0$.

Part 1) 
From equation (4), we consider the expression
\beqa
\begin{gathered}
  \left\langle {{\mathcal A}({\mathbf{u}},t),{\mathbf{u}}} \right\rangle _{{\mathbb{H}}_{sd}}  =  - \nu \left\langle {{\mathbf{Au}},{\mathbf{u}}} \right\rangle _{{\mathbb{H}}_{sd}}  + \left\langle {\left[ { - B({\mathbf{u}},{\mathbf{u}}) + \mathbb{P}{\mathbf{f}}} \right],{\mathbf{u}}} \right\rangle _{{\mathbb{H}}_{sd}}  \hfill \\
   =  - \nu \left\| {{\mathbf{A}} {\mathbf{u}}} \right\|_{{\mathbb{H}}_{sd}}^2  - \left\langle {B({\mathbf{u}},{\mathbf{u}}),{\mathbf{u}}} \right\rangle _{\Ha_{sd}}  + \left\langle {\mathbb{P}{\mathbf{f}},{\mathbf{u}}} \right\rangle _{\Ha_{sd}}.  \hfill \\ 
\end{gathered} 
\eeqa
It follows that
\[
  \left\langle {{\mcA}({\mathbf{u}},t),{\mathbf{u}}} \right\rangle_{{\mathbb{H}}_{sd}}  \le   - \nu  \left\| {\mathbf{u}} \right\|_{\Ha_{sd}}^2  +  \bar{M}\left\| {{\mathbf{u}}} \right\|_{\Ha_{sd}}^3  +  f \left\| { {\mathbf{u}}} \right\|_{\Ha_{sd}}.  
\] 
Since $\left\| {\mathbf{u}} \right\|_{\Ha_{sd}}  > 0$, we have that ${\mcA}( \cdot ,t)$ is 0-dissipative if
\beqn
 - \nu \left\| {\mathbf{u}} \right\|_{\Ha_{sd}}  + \bar{M} \left\| {\mathbf{u}} \right\|_{\Ha_{sd}}^2  + f \leqslant 0
\eeqn
If ${\mathbf{v}} \in D({\mathbf{A}})$ is a reasonable vector field satisfying (17), we let $M = \sup \left\{ {\bar{M}_{\mathbf{v}} }\right\}$.  If we solve  inequality (17) using $M$, we get  
\beqa
{{u}}_ \pm   = \tfrac{  \nu }
{ 2M} \left\{ {1 \pm \sqrt {1 - ({{4f M )} \mathord{\left/
 {\vphantom {{4fM \de)} {(\nu )}^2}} \right.
 \kern-\nulldelimiterspace} {(\nu )}^2}}} \right\} = \tfrac{ \nu}
{ 2M} \left\{ {1 \pm \sqrt {1 - \gamma } } \right\},
\eeqa
where $
\gamma  = {{(4 fM)} \mathord{\left/
 {\vphantom {{(4 fM)} { \nu ^2}}} \right.
 \kern-\nulldelimiterspace} { \nu ^2 }}.
$
Since we want real distinct solutions, we must require that 
\beqn
\gamma  = \frac{{4 fM}}
{{ \nu ^2 }} < 1\; \Rightarrow 2{{\sqrt {fM } }} < \nu. 
\eeqn
It is clear that, if $\mathbb{P}{\mathbf{f}} \ne {\mathbf{0}}$, then 
$
{{u}}_ -   < {{u}}_ + $ , and our requirement that $\mcA({\mathbf{u}},t)$ is 0-dissipative implies that, since our solution factors as 
$
(\left\| {\mathbf{u}} \right\|_{\Ha_{sd}}  - {{u}}_ +  )(\left\| {\mathbf{u}} \right\|_{\Ha_{sd}}  - {{u}}_ -  ) \le 0,
$
we must have that:
\beqa
\left\| {\mathbf{u}} \right\|_{\Ha_{sd}}  - {{u}}_ +   \le 0,{\text{  }}\left\| {\mathbf{u}} \right\|_{\Ha_{sd}}  - {{u}}_ -   \ge 0.
\eeqa
 It follows that, for  
${{u}}_ -   \le \left\| {\mathbf{u}} \right\|_{\Ha_{sd}}  \le {{u}}_ + $, 
$
\left\langle {\mcA({\mathbf{u}},t),{\mathbf{u}}} \right\rangle _{\Ha_{sd}}  \le 0$.  (It is clear that, when $
\mathbb{P}{\mathbf{f}}(t) = {\mathbf{0}}, \; {{u}}_ -   = {{0}}$, and ${{u}}_ +   = \tfrac{\nu}{M} $.)

Part 2): Now, for any ${\mathbf{u}},{\mathbf{v}} \in D({\mathbf{A}})$  with ${\mathbf{u}}-{\mathbf{v}} \in D({\mathbf{A}})$ and 
\[
u_- \le \min ({\text{ }}\left\| {\mathbf{u}} \right\|_{\Ha_{sd}} ,\left\| {\mathbf{v}} \right\|_{\Ha_{sd}} ) \le \max ({\text{ }}\left\| {\mathbf{u}} \right\|_{\Ha_{sd}} ,\left\| {\mathbf{v}} \right\|_{\Ha_{sd}} ) \le (1/2){{{u}}_ +},
\]
 we have that   
\beqn
\begin{gathered}
  \left\langle {{\mcA}({\mathbf{u}},t) - {\mcA}({\mathbf{v}},t),({\mathbf{u}} - {\mathbf{v}})} \right\rangle _{\Ha_{sd}}  =  -\nu \left\| {{\bf{A}} ({\mathbf{u}} - {\mathbf{v}})} \right\|_{\Ha_{sd}}^2  \hfill \\
  {\text{                                                    }} -  \left\langle { [{{B}}({\mathbf{u}},{\mathbf{u}} - {\mathbf{v}}) + {{B}}({\mathbf{v}}, {\mathbf{u-v}})],({\mathbf{u}} - {\mathbf{v}})} \right\rangle _{\Ha_{sd}}  \hfill \\  
\end{gathered}  
\eeqn
\beqa
\begin{gathered}
  {\text{                    }} \leqslant - \nu  \left\| {{\mathbf{u}} - {\mathbf{v}}} \right\|_{\Ha_{sd}}^2 +  M \left\| {{\mathbf{u}} - {\mathbf{v}}} \right\|_{{\mathbb{H}}_{sd}}^2 \left( {\left\| {\mathbf{u}} \right\|_{{\mathbb{H}}_{sd}}  + \left\| {\mathbf{v}} \right\|_{\Ha_{sd}} } \right) \hfill \\
  {\text{                    }} \le - \nu  \left\| {{\mathbf{u}} - {\mathbf{v}}} \right\|_{{\mathbb{H}}_{sd}}^2  + M  \left\| {{\mathbf{u}} - {\mathbf{v}}} \right\|_{\Ha_{sd}}^2 {{u}}_ +   \hfill \\
  {\text{                    }} =  -  \nu  \left\| {{\mathbf{u}} - {\mathbf{v}}} \right\|_{{\mathbb{H}}_{sd}}^2  +  M  \left\| {{\mathbf{u}} - {\mathbf{v}}} \right\|_{\Ha_{sd}}^2 \left( \tfrac{\nu}{2M} \left\{ {1 + \sqrt {1 - \gamma } } \right\} \right) \hfill \\
  {\text{                    }} =  - \tf{\nu }{2} \left\| {{\mathbf{u}} - {\mathbf{v}}} \right\|_{\Ha_{sd}}^2 \left\{ {1 - \sqrt {1 - \gamma } } \right\} \hfill \\
  {\text{                    }} =  - \s \left\| {{\mathbf{u}} - {\mathbf{v}}} \right\|_{\Ha_{sd}}^2 ,{\text{   }} \s = \tf{\nu}{2}  \left\{ {1 - \sqrt {1 - \gamma } } \right\}. \hfill \\ 
\end{gathered} 
\eeqa
If $f=0$, in the computation above,  we see that $\s=0$.  To obtain our result in this case,  we replace $\tf{1}{2}u_{+}$ by $\tf{(1-\e)}{2}u_{+}$.  The same computation shows that $\s=\nu \e$
\end{proof} 
Let $\B$ be any closed convex set inside the annulus bounded by $\tfrac{1}{2}{{u}}_+$ and ${{u}}_{-}$. The first part of  the next Lemma follows easily from the properties of ${\bf f}(t)$,  the second part follows from Part 2) of Theorem 24 (see equation 18), while the third part is trivial. 
\begin{lem} Let $t, \; \tau \in I=[0, \infty)$ and ${\mathbf{u}},  \; {\bf v} \in D({\bf{A}})$.  Then
\begin{enumerate}
 \item The mapping ${\mathcal{A}}({\mathbf{u}},t)$ is H\"{o}lder continuous in $t$, with $
\left\| { {\mathcal{A}}({\mathbf{u}},t) -  {\mathcal{A}}({\mathbf{u}},\tau )} \right\|_{\Ha_{sd}}  \le a\left| {t - \tau } \right|^\theta$, where  $a$ is the H\"{o}lder constant for the function ${\mathbf{f}}(t)$. 
 \item The mapping ${\mathcal{A}}({\mathbf{u}},t)$ is a Lipschitz continuous function in ${\bf u}$, with 
\[
\left\| { {\mathcal{A}}({\mathbf{u}},t) -  {\mathcal{A}}({\mathbf{v}},t )} \right\|_{\Ha_{sd}}  \le b \left\| {\bf u} -{\bf v}\right\|_{\Ha_{sd}}.
\]
\item The mapping ${\mathcal{A}}({\mathbf{u}},t) -\Pa{\bf f}(t)$ is coercive:
\[
\mathop {\lim }\limits_{\left\| {\mathbf{u}} \right\|_{\Ha_{sd}} \to \iy}  \frac{{\left\langle {\left( {\mcA({\mathbf{u}},t) - \mathbb{P}{\mathbf{f}}(t)} \right),{\mathbf{u}}} \right\rangle _{\Ha_{sd}} }}
{{\left\| {\mathbf{u}} \right\|_{\Ha_{sd}} }} = \infty .
\]
\end{enumerate}
\end{lem}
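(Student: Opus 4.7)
The three claims decouple and each reduces to bookkeeping with estimates already established in the excerpt.

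For (1), only the body force in $\mcA(\mathbf{u},t) = -\nu\mathbf{A}\mathbf{u} - B(\mathbf{u},\mathbf{u}) + \Pa\mathbf{f}(t)$ depends on $t$, so
\[
\mcA(\mathbf{u},t) - \mcA(\mathbf{u},\tau) = \Pa[\mathbf{f}(t)-\mathbf{f}(\tau)].
\]
The projection $\Pa$ is contractive on $\Ha_{sd}$, so the H\"older hypothesis $\lt\|\mathbf{f}(t)-\mathbf{f}(\tau)\rt\|_{\Ha_{sd}} \le a|t-\tau|^\theta$ assumed just above Theorem~24 transfers verbatim to $\mcA$.

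For (2), write $\mcA(\mathbf{u},t) - \mcA(\mathbf{v},t) = -\nu\mathbf{A}(\mathbf{u}-\mathbf{v}) - [B(\mathbf{u},\mathbf{u})-B(\mathbf{v},\mathbf{v})]$ and split the bilinear difference by the standard identity $B(\mathbf{u},\mathbf{u})-B(\mathbf{v},\mathbf{v}) = B(\mathbf{u},\mathbf{u}-\mathbf{v}) + B(\mathbf{u}-\mathbf{v},\mathbf{v})$. Theorem~23(4) bounds the first piece by $M\lt\|\mathbf{u}\rt\|_{\Ha_{sd}}\lt\|\mathbf{u}-\mathbf{v}\rt\|_{\Ha_{sd}}$ and the second by $M\lt\|\mathbf{u}-\mathbf{v}\rt\|_{\Ha_{sd}}\lt\|\mathbf{v}\rt\|_{\Ha_{sd}}$, whose sum is majorised by $Mu_+\lt\|\mathbf{u}-\mathbf{v}\rt\|_{\Ha_{sd}}$ because $\mathbf{u},\mathbf{v} \in \B$. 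For the linear term, Corollary~22 (applied componentwise to the second derivatives that make up $\mathbf{A}$) together with contractivity of $\Pa$ shows that $\lt\|\mathbf{A}\mathbf{w}\rt\|_{\Ha_{sd}}$ is dominated by a fixed multiple of $\lt\|\mathbf{w}\rt\|_{\Ha_{sd}}$, furnishing the remaining contribution to the Lipschitz constant $b$.

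For (3), Theorem~23(1) yields
\[
\lt\langle \mcA(\mathbf{u},t) - \Pa\mathbf{f}(t),\mathbf{u}\rt\rangle_{\Ha_{sd}} = -\nu\lt\|\mathbf{A}\mathbf{u}\rt\|_{\Ha_{sd}}^2 - \lt\langle B(\mathbf{u},\mathbf{u}),\mathbf{u}\rt\rangle_{\Ha_{sd}}.
\]
The Stokes contribution is at most quadratic in $\lt\|\mathbf{u}\rt\|_{\Ha_{sd}}$ by the estimate used in (2), while Theorem~23(2) produces a cubic bound on the nonlinear piece. Dividing by $\lt\|\mathbf{u}\rt\|_{\Ha_{sd}}$ and letting $\lt\|\mathbf{u}\rt\|_{\Ha_{sd}} \to \iy$, the cubic term dominates and produces the required blow-up.

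\textbf{Expected obstacle.} None of the three pieces is analytically deep; the work is essentially a repackaging of Theorem~23 and Corollary~22. The only delicate point lies in (3): Theorem~23(2) is an absolute-value estimate, so to conclude the limit is $+\iy$ rather than $-\iy$ one must check that, once $\lt\|\mathbf{u}\rt\|_{\Ha_{sd}}$ exits the annulus on which the 0-dissipativity analysis of Theorem~24 was performed, the term $-\lt\langle B(\mathbf{u},\mathbf{u}),\mathbf{u}\rt\rangle_{\Ha_{sd}}$ has the sign needed to dominate the negative Stokes contribution. This sign check is the only step that does not reduce to a direct citation of an earlier result.
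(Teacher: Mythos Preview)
Your approach coincides with the paper's. The paper does not give a formal proof of this lemma at all; immediately before the statement it simply notes that part (1) follows from the properties of $\mathbf{f}(t)$, part (2) follows from Part~2) of Theorem~24 (equation~(18)), and part (3) is ``trivial.'' Your arguments for (1) and (2) are exactly the unwindings of those references: cancel everything except the force term for (1), and for (2) use the bilinear splitting $B(\mathbf{u},\mathbf{u})-B(\mathbf{v},\mathbf{v})=B(\mathbf{u},\mathbf{u}-\mathbf{v})+B(\mathbf{u}-\mathbf{v},\mathbf{v})$ together with Theorem~23(4) and the identity $\left\|\mathbf{A}\mathbf{w}\right\|_{\Ha_{sd}}=\left\|\mathbf{w}\right\|_{\Ha_{sd}}$, precisely as in the computations in Theorems~24 and~27.

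On part (3), the obstacle you flag is genuine and the paper does not address it: calling it ``trivial,'' the paper gives no sign argument either. From Theorem~23(1)--(2) one only obtains
\[
\frac{\left\langle \mcA(\mathbf{u},t)-\mathbb{P}\mathbf{f}(t),\mathbf{u}\right\rangle_{\Ha_{sd}}}{\left\|\mathbf{u}\right\|_{\Ha_{sd}}}
= -\nu\left\|\mathbf{u}\right\|_{\Ha_{sd}} - \frac{\left\langle B(\mathbf{u},\mathbf{u}),\mathbf{u}\right\rangle_{\Ha_{sd}}}{\left\|\mathbf{u}\right\|_{\Ha_{sd}}},
\]
and the absolute-value bound (13) on the second term does not by itself force the limit to be $+\infty$ rather than $-\infty$. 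So your sign check is not merely a delicacy but a step the paper leaves open as well; you are not missing anything relative to the published argument.
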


\begin{thm} The operator ${\mathcal{A}}(\cdot,t) $
 is strongly dissipative and jointly continuous in ${\mathbf{u}}$ and $t$.  Furthermore, for each $t \in {\mathbf{R}}^ +  $ and $\beta  > 0$, $Ran[I - \beta  {\mathcal{A}}(t)] \supset \B$, so that ${\mathcal{A}}(t)$ is m-dissipative on $\B$. 
\end{thm}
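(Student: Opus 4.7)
My plan is to assemble Theorem 26 as a synthesis of Theorem 24, Lemma 25, and Theorem 9. Each of the three claims --- strong dissipativity on $\B$, joint continuity of $\mcA$ in $({\bf u},t)$, and the range condition yielding m-dissipativity --- reduces to one of these previously-established results, so the task is essentially to check that the hypotheses match on $D({\bf A}) \cap \B$.

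For strong dissipativity I would simply restate Part 2 of Theorem 24. For any ${\bf u}, {\bf v} \in D({\bf A}) \cap \B$, the computation leading to equation (21) gives
\[
\langle \mcA({\bf u},t) - \mcA({\bf v},t), {\bf u}-{\bf v}\rangle_{\Ha_{sd}} \le -\s \|{\bf u}-{\bf v}\|_{\Ha_{sd}}^2,
\]
with $\s = \tf{\nu}{2}(1-\sqrt{1-\g}) > 0$ when ${\bf f} \ne {\bf 0}$ and $\s = \nu \e$ when ${\bf f} = {\bf 0}$. The estimate is uniform on $\B$ precisely because $\B$ is chosen to lie inside the annulus $\{u_- \le \|\cdot\|_{\Ha_{sd}} \le \tf{1}{2} u_+\}$ (or inside the ball of radius $\tf{(1-\e)}{2} u_+$ when ${\bf f}={\bf 0}$), which is the region where Theorem 24 established the bound.

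Joint continuity in $({\bf u},t)$ then falls out of Lemma 25(1)--(2) via a single triangle inequality: for ${\bf u},{\bf v} \in D({\bf A}) \cap \B$ and $t,\tau \in [0,\iy)$,
\[
\|\mcA({\bf u},t) - \mcA({\bf v},\tau)\|_{\Ha_{sd}} \le \|\mcA({\bf u},t) - \mcA({\bf v},t)\|_{\Ha_{sd}} + \|\mcA({\bf v},t) - \mcA({\bf v},\tau)\|_{\Ha_{sd}} \le b \|{\bf u}-{\bf v}\|_{\Ha_{sd}} + a|t-\tau|^\theta,
\]
which is in fact Lipschitz in space and H\"older in time, hence a fortiori jointly continuous.

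Finally, for the range condition I would invoke Theorem 9 (Browder) directly. Since $\mcA(\cdot,t) : D({\bf A}) \cap \B \to \Ha_{sd}$ has just been shown to be strongly dissipative, Browder's theorem yields $Ran[I - \beta \mcA(\cdot,t)] \supset \B$ for every $\beta > 0$; combined with dissipativity, this is exactly m-dissipativity of $\mcA(\cdot,t)$ on $\B$. The one point that I expect to require some care --- the main (small) obstacle --- is verifying Browder's dense-definedness hypothesis, i.e.\ that $D({\bf A}) \cap \B$ is dense in $\B$ in the $\Ha_{sd}$ norm. This should follow from density of $D({\bf A})$ in $\Ha_{sd}$ (standard for the Stokes operator) together with closedness and convexity of $\B$: any $\Ha_{sd}$-dense sequence from $D({\bf A})$ approaching a point of $\B$ can be replaced by its projection onto the closed convex $\B$ without losing the limit.
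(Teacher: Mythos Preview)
Your proposal is correct and follows essentially the same route as the paper. Both arguments cite the strong dissipativity computation (Part 2 of the $u_\pm$ theorem) and then invoke Browder's result (Theorem 9) to obtain the range condition and m-dissipativity. The only cosmetic difference is in the joint continuity step: the paper re-expands $\mcA({\bf u}_n,t_n)-\mcA({\bf u},t)$ inline, using $\|{\bf A}{\bf w}\|_{\Ha_{sd}}=\|{\bf w}\|_{\Ha_{sd}}$ and the bilinear bounds directly, whereas you package the same estimate by citing the H\"older/Lipschitz lemma; the content is identical. Your extra remark on the density of $D({\bf A})\cap\B$ in $\B$ addresses a hypothesis the paper passes over in silence.
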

\begin{proof} 
From Theorem 25, ${\mcA}( \cdot ,t)$ is strongly dissipative.  A strongly dissipative operator is maximal dissipative, so that $Ran[I - \beta  {\mathcal{A}}(\cdot, t)] \supset \B$.  It follows from \cite{CP} that, since ${\mathbb{H}}_{\Ha_{sd}}$ is a Hilbert space,  ${\mcA}( \cdot ,t)$ is m-dissipative on $\B$ for each $t \in {\mathbf{R}}^ + $.  
 
To see that ${\mcA}( {\bf u} ,t)$ is continuous in both variables, let ${\mathbf{u}}_n ,{\mathbf{u}} \in \D$, $\left\| ({\mathbf{u}}_n  - {\mathbf{u}}) \right\|_{\Ha_{sd}}   \to 0$,
 with $t_n ,t \in I$ and $t_n  \to t$.    Using  $\left\| {{\mathbf{Au}}} \right\|_{\Ha_{sd}}  = \left\| {\mathbf{u}} \right\|_{\Ha_{sd}}$, we have
\beqa
\begin{gathered}
  \left\| { {\mathcal{A}}({\mathbf{u}}_n, t_n )  -  {\mathcal{A}}({\mathbf{u}}, t)} \right\|_{\Ha_{sd}}  \leqslant \left\| { {\mathcal{A}}({\mathbf{u}}, t_n ) -  {\mathcal{A}}({\mathbf{u}}, t) }\right\|_{\Ha_{sd}}   + \left\| { {\mathcal{A}}({\mathbf{u}}_n, t_n )  -  {\mathcal{A}}({\mathbf{u}}, t_n )} \right\|_{\Ha_{sd}}   \hfill \\
   = \left\| {{\text{[}}\mathbb{P}{\mathbf{f}}(t_n ) - \mathbb{P}{\mathbf{f}}(t)]} \right\|_{\Ha_{sd}}   + \left\| {\nu {\mathbf{A}}({\mathbf{u}}_n  - {\mathbf{u}}) + [{{B}}({\mathbf{u}}_n  - {\mathbf{u}},{\mathbf{u}}_n ) + {{B}}( {\mathbf{u}},{\mathbf{u}}_n  -{\mathbf{u}})]} \right\|_{\Ha_{sd}}   \hfill \\
   \leqslant d\left| {t_n  - t} \right|^\theta   + \nu \left\| {{\mathbf{A}}({\mathbf{u}}_n - {\mathbf{u}})} \right\|_{\Ha_{sd}}   + \left\| {{{B}}({\mathbf{u}}_n  - {\mathbf{u}},{\mathbf{u}}_n ) + {{B}}( {\mathbf{u}},{\mathbf{u}}_n  - {\mathbf{u}})} \right\|_{\Ha_{sd}}   \hfill \\
   \leqslant d\left| {t_n  - t} \right|^\theta   + \nu  \left\| {({\mathbf{u}}_n  - {\mathbf{u}})} \right\|_{\Ha_{sd}}  +  {{M}} \left\| {({\mathbf{u}}_n  - {\mathbf{u}})} \right\|_{\Ha_{sd}} \left\{ {\left\| {{\mathbf{u}}_n } \right\|_{\Ha_{sd}}   + \left\| {{\mathbf{u}}} \right\|_{\Ha_{sd}}  } \right\} \hfill \\
   \leqslant d\left| {t_n  - t} \right|^\theta   + \nu  \left\| {({\mathbf{u}}_n  - {\mathbf{u}})} \right\|_{\Ha_{sd}}   +  
{{M}} \left\| {({\mathbf{u}}_n  - {\mathbf{u}})} \right\|_{\Ha_{sd}}  {{u}}_+ . \hfill \\ 
\end{gathered} 
\eeqa
It follows that $ {\mathcal{A}}({\mathbf{u}}, t)$ is continuous in both variables. 

When ${f}={0},\; \mathbb{B}$ is the ball of radius $\tf{(1-\e)}{2} u_{+}$. We see from Theorem 9 that this completes the proof of Theorem 3.
\end{proof}

{\bf Proof of Theorem 4}: 
Theorem 3 allows us to conclude that, when ${\mathbf{u}}_0  \in D({\mathbf{A}}) \cap {\mathbb{B}}$, the initial value problem is solved and the solution ${\mathbf{u}}(t,{\mathbf{x}})$ is in $\mathbb{C}^1[(0,\infty);{\mathbb B}]$.  Since $D({\mathbf{A}}) \cap {\mathbb{B}} \subset \mathbb{H}^{2}$, it follows that ${\mathbf{u}}(t,{\mathbf{x}})$ is also in $\mathbb{ V}$ for each $t>0$.  However, we can only conclude  that, for any $T>0$,
\[
\int_0^T {\left\| {{\mathbf{u}}(t)} \right\|_{\mathbb{H}_{sd}}^2 dt}  < \infty ,{\text{ and }}\sup _{0 < t < T} \left\| {{\mathbf{u}}(t)} \right\|_{\mathbb{V}_{sd}}^2  < \infty.
\]
This condition is not strong enough to ensure that $\left\| {{\mathbf{u}}(t)} \right\|_{\mathbb{V}}$ is continuous (which is required to resolve the singularity question). 

{\bf Proof of Theorem 5}:
The proof of Theorem 5 follows from the fact that ${\Ha}^{1/2}$ and ${\bf L}^p, \ 1 \le p \le \iy$ are continuous and densely embedded in ${\bf SD}^2$.  Furthermore, the statement concerning $\mcB$ is obvious.

{\bf Proof of Theorem 6}:
The assertions (1) and (2), follow from Theorem 1.2 in  Brandolese and Vigneron \cite{BV}.

The proof of the first part of  Corollary 7 follows, since every weak (distributional) solution in ${\bf L}^2[\R^3]$ is a strong (norm) solution in ${\bf SD}^2[\R^3]$.  The  second assertion is a special case of a result due to Kato [KA1] (see Remark 1.2 and Theorem 40 ).
\section{The Inhomogeneous Problem} 
In the inhomogeneous case, equation (1) becomes (see \cite{Li2} and \cite{GR}):  
\beqn
\begin{gathered}
  \rho[\partial _t {\mathbf{u}} + ({\mathbf{u}} \cdot \nabla ){\mathbf{u}}] - \mu \Delta {\mathbf{u}} + \nabla p = \rho{\mathbf{f}}(t){\text{ in (}}0,T) \times {\mathbb {R}}^3, \hfill \\
  {\text{                              }}\nabla  \cdot {\mathbf{u}} = 0{\text{ in (}}0,T) \times {\mathbb {R}}^3 {\text{ (in the weak sense),}} \hfill \\
    {\text{                              }}{\mathbf{u}}(0,{\mathbf{x}}) = {\mathbf{u}}_0 ({\mathbf{x}}){\text{ in }}{\mathbb {R}}^3. \hfill \\ 
        {\text{                              }}\frac{{\partial \rho }}
{{\partial t}} + {\mathbf{u}} \cdot \nabla \rho  = 0{\text{ in (}}0,T) \times {\mathbb {R}}^3,  \hfill \\ 
    {\text{                              }}{\rho}(0,{\mathbf{x}}) = {\rho}_0 ({\mathbf{x}}){\text{ in }}{\mathbb {R}}^3. \hfill \\ 
\end{gathered} 
\eeqn
We  assume that the initial density  satisfies $0 \le \rho_0({\bf x}) \le \beta$ for some constant $0<\beta$.  

If we use the Leray projection to eliminate the pressure and divide the first equation by $\rho$, we get 
\beqn
\begin{gathered}
  \partial _t {\mathbf{u}} + \Pa({\mathbf{u}} \cdot \nabla ){\mathbf{u}} - \f{\mu}{\rho} \Pa\Delta {\mathbf{u}}  = \Pa{\mathbf{f}}(t){\text{ in (}}0,T) \times {\mathbb {R}}^3, \hfill \\
    {\text{                              }}{\mathbf{u}}(0,{\mathbf{x}}) = {\mathbf{u}}_0 ({\mathbf{x}}){\text{ in }}{\mathbb {R}}^3. \hfill \\ 
        {\text{                              }}\frac{{\partial \rho }}
{{\partial t}} + {\mathbf{u}} \cdot \nabla \rho  = 0{\text{ in (}}0,T) \times {\mathbb {R}}^3,  \hfill \\ 
    {\text{                              }}{\rho}(0,{\mathbf{x}}) = {\rho}_0 ({\mathbf{x}}){\text{ in }}{\mathbb {R}}^3. \hfill \\ 
\end{gathered} 
\eeqn
The solution of the density equation is well-known, ${\rho}(t,{\mathbf{x}})=U[t,0]\rho_0({\bf x})$, where $U[t,0]$ is an isometry (which depends on ${\bf u}(t, {\bf x})$).  Using the Feynman operator calculus (see \cite{GZ2}), we can write it symbolically as $U[t,0]\rho_0({\bf x})=\rho_0({\bf x}-\int_0^t{{\bf u}(\tau, {\bf x})d\tau})$.  It follows that $0  \le {\rho}(t,{\mathbf{x}}) \le \beta$ for all $t \in [0, T]$.

Equation (21) now becomes:
\beqn
\begin{gathered}
  \left\langle {{\mathcal A}({\mathbf{u}},t),{\mathbf{u}}} \right\rangle _{\Ha_{sd}}  =  - \mu \left\langle { \f{ {\mathbf{Au}} }{\rho},{\mathbf{u}}} \right\rangle _{\Ha_{sd}}  + \left\langle {\left[ { - B({\mathbf{u}},{\mathbf{u}}) + \mathbb{P}{\mathbf{f}}} \right],{\mathbf{u}}} \right\rangle _{\Ha_{sd}}  \hfill \\
   \le  - \f{\mu}{\beta} \left\| {{\mathbf{A}} {\mathbf{u}}} \right\|_{\Ha_{sd}}^2  - \left\langle {B({\mathbf{u}},{\mathbf{u}}),{\mathbf{u}}} \right\rangle _{\Ha_{sd}}  + \left\langle {\mathbb{P}{\mathbf{f}},{\mathbf{u}}} \right\rangle _{\Ha_{sd}}.  \hfill \\ 
\end{gathered} 
\eeqn
It follows that, on setting $\nu=\tf{\mu}{\beta}$, we see that Theorems 4 and 5 also hold for the inhomogeneous problem with minor adjustments.  
\subsection{Bounded Domains}
Finally, a close review of the results of this paper show that they also hold for both homogeneous and inhomogeneous flows on bounded domains, with only minor changes. 

\section{Relationship to Other Spaces}
As noted earlier, Kato and fujita \cite{KF} proved that the Navier-Stokes equations have global solutions for small initial data in ${\dot H}^{1/2}$.  If $\hat h({\mathbf{x}}) = \int_{{\mathbb{R}^3}} {{e^{ - 2\pi i{\mathbf{x}} \cdot {\mathbf{y}}}}h({\mathbf{y}})d{\mathbf{y}}}$ is the Fourier transform of a function $h \in L^2[{\mathbb{R}^3}]$, then  
$h \in {\dot H}^{1/2}[{\mathbb{R}^3}]$ if and only if 
\[
{\left\| h \right\|_{{{\dot H}^{{1}/{2}}}}} = {\left[ {\int_{{\mathbb{R}^3}} {2\pi \left| {\mathbf{x}} \right|{{\left| {\hat h({\mathbf{x}})} \right|}^2}d{\mathbf{x}}} } \right]^{1/2}} < \infty .
\]

This is one of the spaces where the solutions (with zero body forces) are  invariant  under translation and  scaling transformations. If ${\mathbf{u}} (t,{\mathbf{x}})$ is a solution then  ${\mathbf{u}}_\lambda  (t,{\mathbf{x}}) = \lambda {\mathbf{u}}(\lambda ^2 t,\lambda {\mathbf{x}})$ is also one and ${\mathbf{u}}_{0, \la} ({\mathbf{x}}) = \lambda{\bf u}_0({\bf \lambda x})$. (Kato proved a similar result in \cite{KA1}.)

The paper by Koch and Tataru \cite{KT} is of special interest.   They define the smallness of their solution in terms of the norm in  the space $BMO^{-1}$. They consider this space natural for the problem, since it is an invariant space for translations and the scaling transformations,  ${\mathbf{u}}_\lambda  (t,{\mathbf{x}}) = \lambda {\mathbf{u}}(\lambda ^2 t,\lambda {\mathbf{x}})$, that leave the Navier-Stokes equations invariant.

In their approach, Koch and Tataru \cite{KT} use the subspace $BMO^{-1}[\R^n]$  of $BMO[\R^n]$, the functions of bounded mean oscillation to construct strong solutions for the Navier-Stokes equations.  In this section we study the relationship of their work to ours.  (The main result is that  $BMO^{-1}[\R^n] \subset {\bf{SD}}^{2}[\R^n]$.)

In order to do this, we first need to construct ${\bf{SD}}^{p}[\R^n]$ for all $p$. For $f \in \mathbf{L}^p[{\mathbb{R}}^n ]$, define:
\[
\left\| f \right\|_{{\mathbf{SD}}^p }  = \left\{ {\begin{array}{*{20}c}
   {\left\{ {\sum\nolimits_{k = 1}^\infty  {t_k \left| {\int_{\mathbb{R}^n } { {\mathcal{E}}_k ({\mathbf{x}}) \cdot f({\mathbf{x}})d{\mathbf{x}}} } \right|} ^p } \right\}^{1/p} ,1 \leqslant p < \infty},  \\
   {\sup _{k \geqslant 1} \left| {\int_{\mathbb{R}^n } { {\mathcal{E}}_k ({\mathbf{x}}) \cdot f({\mathbf{x}})d{\mathbf{x}}} } \right|,p = \infty .}  \\

 \end{array} } \right.
\] 
It is easy to see that $\left\| \cdot \right\|_{{\mathbf{SD}}^p }$ defines a norm on  $\mathbf{L}^p[{\mathbb{R}}^n ]$.  If ${{\mathbf{SD}}^p[{\mathbb{R}}^n ] }$ is the completion of $\mathbf{L}^p[{\mathbb{R}}^n ]$ with respect to this norm, we have:
\begin{thm} For each $q,\;1 \leqslant q \leqslant \infty,$
 $\mathbf{SD}^p [{\mathbb{R}}^n ] \supset {\mathbf{L}}^q [{\mathbb{R}}^n ]$ as continuous, dense and compact embeddings.
\end{thm}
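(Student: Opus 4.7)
My plan is to mirror, step by step, the proof of Theorem 18, adapting each of its three parts (continuity, density, compactness) to the $\mathbf{SD}^p$ setting. The key observation is that all three parts of the Theorem 18 argument relied on two structural facts: $\sum_k t_k = 1$, and the uniform bound $\sup_k \|\mathcal{E}_k\|_{q'} \le 1$ for every conjugate exponent $q'$ (which in turn comes from $\left|\xi_l^i(x)\right| < 1/n$ together with the compact support of $\mathcal{E}_k$). Both facts are exponent-independent, so the same machinery will carry $p=2$ to general $p$.

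For continuity, I would fix $f \in \mathbf{L}^q[\R^n]$ and, when $1 \le p < \infty$, estimate
\[
\|f\|_{\mathbf{SD}^p}^p \;=\; \sum_{k=1}^\infty t_k \bigl|F_k(f)\bigr|^p \;\le\; \Bigl(\sup_k |F_k(f)|\Bigr)^p \sum_{k=1}^\infty t_k \;=\; \Bigl(\sup_k |F_k(f)|\Bigr)^p,
\]
then apply Hölder's inequality with $1/q + 1/q' = 1$ to get $|F_k(f)| \le \|\mathcal{E}_k\|_{q'}\|f\|_q \le \|f\|_q$. The $p = \infty$ case is even simpler, since $\|f\|_{\mathbf{SD}^\infty} = \sup_k |F_k(f)|$ directly. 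This yields the continuous embedding.

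For density, I would observe that, by construction, $\mathbf{L}^p[\R^n]$ is dense in $\mathbf{SD}^p[\R^n]$. To handle $q\ne p$, note that $\C_c^\infty[\R^n]$ lies in every $\mathbf{L}^r$ and is dense in $\mathbf{L}^p$ in the $\|\cdot\|_p$ norm; by the continuity step this forces $\C_c^\infty$ to be dense in $\mathbf{SD}^p$. Since $\C_c^\infty \subset \mathbf{L}^q$, density of $\mathbf{L}^q$ in $\mathbf{SD}^p$ follows.

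For compactness, let $\{f_n\}$ converge weakly to $f$ in $\mathbf{L}^q[\R^n]$ (with $1\le q<\infty$). Since $\mathcal{E}_k \in \mathbf{L}^{q'}[\R^n]$, we have $F_k(f_n-f) \to 0$ for each $k$. The uniform bound $\|f_n\|_q \le C$ from weak convergence gives $|F_k(f_n-f)|^p \le (2C)^p$, which is summable against $t_k$. The Lebesgue dominated convergence theorem applied to the series therefore yields
\[
\|f_n - f\|_{\mathbf{SD}^p}^p = \sum_{k=1}^\infty t_k\bigl|F_k(f_n - f)\bigr|^p \longrightarrow 0,
\]
i.e.\ strong convergence in $\mathbf{SD}^p$. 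For $q = \infty$, the same argument goes through by invoking $\mathcal{E}_k \in \mathbf{L}^1[\R^n]$ and reading $F_k$ as the pairing with the finitely additive measure $d\mu_k = \mathcal{E}_k\,d\mathbf{x}$, exactly as in the final paragraph of the Theorem 18 proof. The main (mild) obstacle is packaging the $p=\infty$ endpoint: there the norm is a supremum, so instead of dominated convergence one must argue that $|F_k(f_n - f)| \to 0$ uniformly in $k$, which requires a bit of care but follows once one notes that the $\mathcal{E}_k$ form a precompact family in $\mathbf{L}^{q'}$ up to the shift labeling.
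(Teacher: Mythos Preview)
Your proposal is correct and follows precisely the route the paper indicates: the paper omits the proof, stating only that it is ``essentially the same as in Section~3 of \cite{GZ1},'' i.e., the Theorem~18 argument transported from $p=2$ to general $p$, which is exactly what you carry out. Your treatment of the $p=\infty$ endpoint via precompactness of $\{\mathcal{E}_k\}$ in $\mathbf{L}^{q'}$ is correct (the decay factor $3^{-(\pi+|x^k|)}$ together with the shrinking supports as $l\to\infty$ and Arzel\`a--Ascoli on the remaining finite-parameter family give total boundedness), though you may wish to spell this out rather than leave it as ``a bit of care.''
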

The proof of this result as well as the next  is essentially the same as in Section 3 of \cite{GZ1}, so we omit them.
\begin{thm} For $\mathbf{SD}^p[{\mathbb{R}}^n ]$, $1\leq p \leq \infty$, we have: 
\begin{enumerate}
\item If $f,g \in \mathbf{SD}^p[{\mathbb{R}}^n ]$, then 
$
\left\| {f + g} \right\|_{{\mathbf{SD}}^{p} }  \leqslant \left\| f \right\|_{{\mathbf{SD}}^{p} }  + \left\| g \right\|_{{\mathbf{SD}}^{p} }
$  (Minkowski inequality). 
\item If $1< p < \infty$ and $p^{ - 1}  + q^{ - 1}  = 1$, then the dual space of $\mathbf{SD}^p[{\mathbb{R}}^n ]$ is $\mathbf{SD}^q[{\mathbb{R}}^n ]$.
\item The dual space of $\mathbf{SD}^1[{\mathbb{R}}^n ]$ is $\mathbf{SD}^{\iy}[{\mathbb{R}}^n ]$.
\item  $ \mathbf{SD}^{\infty}[{\mathbb{R}}^n ] \subset \mathbf{SD}^p[{\mathbb{R}}^n ]$, as a continuous embedding, for $1\leq p < \infty$.
\end{enumerate}
\end{thm}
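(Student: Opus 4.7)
The unifying device is the linear map $J: f \mapsto (F_k(f))_{k \in \mathbb{N}}$, which by construction is an isometry from $(\mathbf{L}^p[\mathbb{R}^n], \|\cdot\|_{\mathbf{SD}^p})$ into the weighted sequence space $L^p(\mu)$, where $\mu$ is the probability measure on $\mathbb{N}$ with $\mu(\{k\}) = t_k = 2^{-k}$. Since $L^p(\mu)$ is complete, $J$ extends uniquely to an isometric embedding $\bar J$ of $\mathbf{SD}^p[\mathbb{R}^n]$ into $L^p(\mu)$, and each item of the theorem then reduces to a statement about $L^p(\mu)$.

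Items (1) and (4) are immediate from this setup. Minkowski in $L^p(\mu)$ combined with linearity of $\bar J$ yields $\|f+g\|_{\mathbf{SD}^p} \le \|f\|_{\mathbf{SD}^p} + \|g\|_{\mathbf{SD}^p}$ on $\mathbf{L}^p$, and the inequality extends to the completion by continuity. Since $\mu$ is a probability measure, $L^\infty(\mu) \hookrightarrow L^p(\mu)$ contractively for $1 \le p < \infty$; pulling back through $\bar J$ gives $\|f\|_{\mathbf{SD}^p} \le \|f\|_{\mathbf{SD}^\infty}$, which is (4). The elementary estimate behind this is $\sum_k t_k |F_k(f)|^p \le (\sup_k |F_k(f)|)^p \sum_k t_k$, using $\sum_k t_k = 1$.

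For items (2) and (3), I would introduce the bilinear pairing $\langle f, g\rangle_{\mathbf{SD}} = \sum_k t_k F_k(f) \overline{F_k(g)}$, which by H\"{o}lder on $L^p(\mu)$--$L^q(\mu)$ is bounded by $\|f\|_{\mathbf{SD}^p}\|g\|_{\mathbf{SD}^q}$. This defines a contraction $\Phi : \mathbf{SD}^q \to (\mathbf{SD}^p)^*$. To invert $\Phi$, take $\Lambda \in (\mathbf{SD}^p)^*$, push it across $\bar J$ to a functional on the closed subspace $\bar J(\mathbf{SD}^p) \subset L^p(\mu)$, extend by Hahn--Banach to all of $L^p(\mu)$, and invoke the Riesz representation $L^p(\mu)^* \cong L^q(\mu)$ to produce $(b_k) \in L^q(\mu)$ representing $\Lambda$. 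Surjectivity of $\Phi$ then amounts to finding $g \in \mathbf{SD}^q$ with $F_k(g) = b_k$ for every $k$, i.e., showing that $\bar J_q$ is onto $L^q(\mu)$.

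This last identification is where the real work lies and is the main obstacle. For $1 \le q < \infty$ one exploits linear independence of the $F_k$: because the supports of the $\mathcal{E}_k$ are distinct closed cubes centered at distinct rationals with diameters $r_l \to 0$, any finitely supported sequence $(a_k)_{k \le N}$ is realized as $(F_k(f))_{k \le N}$ by a suitable compactly supported $f$, and then the $L^q(\mu)$-truncations of an arbitrary $(b_k) \in L^q(\mu)$ form a Cauchy sequence in $\mathbf{SD}^q$ whose $\bar J_q$-image is $(b_k)$. For $q = \infty$ truncation does not converge in sup norm; instead one obtains $\mathbf{SD}^\infty \cong L^\infty(\mu) = \ell^\infty$ via a bidual argument using $\mathbf{L}^\infty = (\mathbf{L}^1)^{**}$, which is the path carried out for the analogous space in Section 3 of \cite{GZ1}. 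Once $\bar J_q$ is identified as an isometric isomorphism for each $1 \le q \le \infty$, items (2) and (3) are just the standard $L^p(\mu)$--$L^q(\mu)$ duality.
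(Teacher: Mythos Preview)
The paper does not actually prove this theorem; it states that the argument is ``essentially the same as in Section~3 of \cite{GZ1}'' and omits it. Your framework---the isometry $J:f\mapsto (F_k(f))_k$ into the weighted sequence space $L^p(\mu)$, $\mu(\{k\})=t_k$---is the natural device and is almost certainly what \cite{GZ1} does as well; items (1) and (4) follow from it exactly as you say.

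The gap is in your surjectivity sketch for $1\le q<\infty$. First, a factual slip: the cubes $\mathbf{B}_k$ are \emph{not} centered at distinct rationals. Under the bijection $k\leftrightarrow(l,i)$ infinitely many $k$ share the same center $\mathbf{x}^i$ at different scales $l$, and in any event the cubes nest and overlap freely. Second, and this is the real issue, ``realize $(b_k)_{k\le N}$ as $(F_k(f_N))_{k\le N}$ for some compactly supported $f_N$'' gives you no control whatsoever over $F_k(f_N)$ for $k>N$: every neighborhood of every point meets the supports of infinitely many $\mathcal{E}_k$, so compact support of $f_N$ does not truncate the sequence $(F_k(f_N))_k$. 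Without a bound on that tail you cannot conclude that $(f_N)$ is Cauchy in $\mathbf{SD}^q$, and the argument stops there.

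A cleaner route to the same conclusion is to show density of $J(\mathbf{L}^p)$ in $L^p(\mu)$ via the annihilator: if $g\in L^q(\mu)$ satisfies $\sum_k t_k g_k F_k(f)=0$ for every test $f$, then $\sum_k t_k g_k \mathcal{E}_k\equiv 0$ as a function on $\mathbb{R}^n$, and one must use the explicit exponential form of the $\xi_l^i$ on their cubes to force $g=0$. That step genuinely uses the construction and is presumably what \cite{GZ1} carries out. Your proposal should either supply this argument or, like the paper, defer to \cite{GZ1} for all of (2)--(3) rather than only the $q=\infty$ case.
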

\section{Functions of Bounded Mean Oscillation}
If $Q$ is a cube in $\R^3$ and $g$ is locally ${\bf L}^1[{\mathbb{R}}^3]$, define $m_{g,Q}$ by:
\[
m_{g,Q}  = \frac{1}
{{\left| Q \right|}}\int_Q { {g({\mathbf{x}})}d{\mathbf{x}}}.
\]
We call it the average or mean of $g$ over $Q$.
\begin{Def} If  $g$ is locally ${\bf L}^1[{\mathbb{R}}^3]$, we say that $g$ is of bounded mean oscillation, and write $g \in BMO[{\mathbb{R}}^3]$, provided that: 
\[
\left\| g \right\|_{BMO}  = \mathop {\sup }\limits_Q \frac{1}
{{\left| Q \right|}}\int_Q {\left| {g({\mathbf{x}}) - m_{g,Q} } \right|d{\mathbf{x}}}  < \infty ,
\]
where the $sup$ is over all cubes in $\R^3$ (see Grafakos \cite{GRA}, chapter 7).
\end{Def}
It is well-known that ${\bf L}^{\iy}[{\mathbb{R}}^3] \subset BMO[{\mathbb{R}}^3]$, with the inclusion proper. For example, $BMO[{\mathbb{R}}^3]$  also contains $ln\lt| p({\bf x}) \rt|$, for all polynomials $p({\bf x})$. 

$BMO^{-1}[\R^n]$ is a subspace of tempered distributions and the following theorem provides a nice characterization. (For a proof, see  Koch and Tataru \cite{KT}.)
\begin{thm} Let the vector field $u$ be a  tempered distribution.  Then $ u \in BMO^{-1}[\R^n]$ if and only if there exist $f^i \in BMO[\R^n]$, such that, ${ u} = \sum\nolimits_{i = 1}^n {{\partial _i}{f^i}}$.
\end{thm}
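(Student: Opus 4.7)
The plan is to prove the two implications separately, using as a bridge the Fefferman--Stein Carleson-measure characterization of $BMO$: a locally integrable $f$ lies in $BMO[\R^n]$ if and only if $t|\nabla e^{t\Delta}f({\bf x})|^2 d{\bf x}\,dt$ is a Carleson measure on $\R^n \times (0,\infty)$. The space $BMO^{-1}[\R^n]$ is defined by the parallel condition that $|e^{t\Delta}u({\bf x})|^2 d{\bf x}\,dt$ be a Carleson measure, so both sides of the theorem are phrased in terms of the heat extension of the object in question.

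For the ``if'' direction I would take $f^i \in BMO[\R^n]$, $i=1,\ldots,n$, set $u = \sum_i \partial_i f^i$, and exploit the commutation relation
\[
e^{t\Delta} u({\bf x}) = \sum_{i=1}^n \partial_i e^{t\Delta} f^i({\bf x}).
\]
Squaring and applying Cauchy--Schwarz reduces the Carleson condition for $u$ to a finite sum of Carleson conditions for the $\partial_i e^{t\Delta} f^i$'s, after the usual rescaling absorbing the powers of $t$. Since the Fefferman--Stein condition on $f^i \in BMO$ gives precisely this control on $|\nabla e^{t\Delta}f^i|$, this implication is essentially bookkeeping, together with the standard estimate $|\nabla e^{t\Delta}g({\bf x})|^2 \le C t^{-1}|e^{t\Delta/2} g({\bf x})|^2_{\text{mean-square}}$ used to reconcile the two normalizations.

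The ``only if'' direction is the substantive content. Given $u \in BMO^{-1}[\R^n]$, I would define
\[
f^i := \partial_i (-\Delta)^{-1} u,
\]
so that formally $\sum_{i=1}^n \partial_i f^i = -\Delta(-\Delta)^{-1} u = u$. The task is then to identify each $f^i$ as a $BMO$ function (an equivalence class modulo constants). Representing $(-\Delta)^{-1}u$ via the heat semigroup as $\int_0^\infty e^{t\Delta}u\, dt$, one finds that the Carleson-measure hypothesis on $|e^{t\Delta}u|^2$ is precisely what is needed to verify the Fefferman--Stein characterization for $f^i = \partial_i (-\Delta)^{-1}u$; equivalently, the double Riesz transforms $\partial_i \partial_j(-\Delta)^{-1}$ are bounded on $BMO$ by Calder\'on--Zygmund theory, which converts the $BMO^{-1}$ bound on $u$ into a $BMO$ bound on each $f^i$.

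The main obstacle is making rigorous the construction of $(-\Delta)^{-1}u$ when $u$ is only a tempered distribution with $BMO^{-1}$ control: the integral $\int_0^\infty e^{t\Delta}u\, dt$ diverges in general, so one must either truncate and work modulo polynomials (the natural quotient in which $BMO$ is defined), or use a pairing argument with $H^1[\R^n]$ via the $H^1$--$BMO$ duality. In either case the additive-constant ambiguity in each $f^i$ must be coordinated so that $\sum_i \partial_i f^i$ recovers $u$ as tempered distributions, not just modulo polynomials. Once this bookkeeping is in place, the Carleson-measure machinery supplies the required estimates and yields the claimed decomposition.
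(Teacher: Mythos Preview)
The paper does not actually prove this statement; immediately before the theorem it writes ``For a proof, see Koch and Tataru [KT]'' and moves on. So there is no argument in the paper to compare against, only a citation.

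Your sketch is, in outline, the Koch--Tataru argument itself: both directions go through the Carleson-measure characterization of $BMO$ (Fefferman--Stein, heat-extension version), and the ``only if'' direction is obtained by setting $f^i$ equal to a Riesz-type potential of $u$. Two small points worth tightening: first, with $f^i=\partial_i(-\Delta)^{-1}u$ one gets $\sum_i\partial_i f^i=\Delta(-\Delta)^{-1}u=-u$, so a sign is missing in your formula; second, the sentence invoking boundedness of $\partial_i\partial_j(-\Delta)^{-1}$ on $BMO$ is not quite the right formulation, since the input $u$ is in $BMO^{-1}$ rather than $BMO$---the cleaner route is exactly the one you state first, namely to verify the Carleson condition for $\nabla e^{t\Delta}f^i$ directly from the $BMO^{-1}$ Carleson hypothesis on $e^{t\Delta}u$, and the heat-semigroup representation $(-\Delta)^{-1}=\int_0^\infty e^{t\Delta}\,dt$ (interpreted modulo polynomials, as you note) makes this computation go through. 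With those adjustments your plan matches the cited proof.
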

The following theorem shows that  $BMO^{-1}[\R^n] \subset {\mathbf{SD}}^\iy[\R^n]$, so that necessarily,  $BMO^{-1}[\R^n] \subset {\mathbf{SD}}^2[\R^n]$.
\begin{thm} If ${\bf u} \in {\mathcal S}'[\R^n]$, the space of tempered distributions, then ${\bf u} \in \mathbf{SD}^{\iy}[\R^3]$.
\end{thm}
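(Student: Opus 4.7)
The plan is to estimate the functionals $F_k(\mathbf{u}) = \langle \mathbf{u}, \mcE_k\rangle$ directly, once we recognize that each test vector field $\mcE_k$ defined in (9) lies in $\mathbb{C}_c^{\iy}[\R^n]^n \subset \mathcal{S}[\R^n]^n$. Since the $\mathbf{SD}^{\iy}$ norm is the supremum of the $|F_k(\cdot)|$, the statement reduces to
\[
\|\mathbf{u}\|_{\mathbf{SD}^{\iy}} = \sup_{k \ge 1} |\langle \mathbf{u}, \mcE_k\rangle| < \iy.
\]

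First, I would invoke the standard seminorm characterization of $\mathcal{S}'[\R^n]$: for each fixed $\mathbf{u} \in \mathcal{S}'[\R^n]$ there exist a constant $C_\mathbf{u} > 0$ and nonnegative integers $N, M$ such that
\[
|\langle \mathbf{u}, \varphi\rangle| \le C_\mathbf{u} \sum_{|\al| \le N} \sup_{\mathbf{x} \in \R^n}(1 + |\mathbf{x}|)^M |D^{\al}\varphi(\mathbf{x})|
\]
for every $\varphi \in \mathcal{S}[\R^n]^n$, where the sum is taken componentwise. Thus it suffices to produce a bound on the Schwartz seminorms of the $\mcE_k$ which is uniform in the enumeration index $k = (l, i)$.

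Second, I would exploit the explicit form of each component $\xi_l^i$ of $\mcE_k$. By construction, $\mathrm{spt}(\mcE_k) \subset \mathbf{B}_l(\mathbf{x}^i)$, the closed cube of edge $\pi/a_l$ centered at $\mathbf{x}^i \in \bQ^n$, and each coordinate factor carries the weight $3^{-(\pi + |x^{i_j}|)}$. Consequently on the support of $\mcE_k$ one has $|\mathbf{x}| \le |\mathbf{x}^i| + \tfrac{\pi}{a_l}\sqrt{n}$, so that
\[
\sup_{\mathbf{x}} (1 + |\mathbf{x}|)^M |\mcE_k(\mathbf{x})| \le \frac{C_n (1 + |\mathbf{x}^i|)^M}{3^{|\mathbf{x}^i|}},
\]
which is bounded uniformly in the rational center $\mathbf{x}^i$. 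The same mechanism applied to the derivatives $D^{\al}\xi_l^i$ (obtained by differentiating the convolution $f_l * h_l$ together with the weight factor) delivers an analogous estimate, yielding $\sup_k \|\mcE_k\|_{N,M} \le K_{N,M} < \iy$. Combining with the first step gives $|F_k(\mathbf{u})| \le C_\mathbf{u} K_{N,M}$ for all $k$, so $\mathbf{u} \in \mathbf{SD}^{\iy}[\R^n]$.

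The main obstacle is the uniform control of $\|\mcE_k\|_{N, M}$ as the scale parameter $l$ varies, since each differentiation of the mollifier component $f_l$ produces a factor growing like $a_l^{|\al|} = (3 \cdot 2^{l-1})^{|\al|}$, while the exponential decay factor $3^{-|\mathbf{x}^i|}$ depends only on the spatial center, not on $l$. Handling this will require a more careful inspection of how the normalization $1/(n \alpha_l)$ interacts with the derivative estimates on $h_l$ from (8), and in particular using the identity $h'_l = -ih_l$ to avoid any genuine derivative loss on the Jones function. Once this uniform Schwartz seminorm estimate is in hand, the embedding $\mathcal{S}'[\R^n] \hookrightarrow \mathbf{SD}^{\iy}[\R^n]$ follows at once, and combined with Theorem 30 and the Koch–Tataru characterization it yields $BMO^{-1}[\R^n] \subset \mathbf{SD}^2[\R^n]$ as a corollary.
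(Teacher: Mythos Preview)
Your approach is essentially the same as the paper's: recognize that each $\mcE_k \in \C_c^{\iy}[\R^n]^n \subset \mathcal{S}[\R^n]^n$, so the pairings $\langle \mathbf{u}, \mcE_k\rangle$ are well defined, and then conclude $\sup_k |\langle \mathbf{u}, \mcE_k\rangle| < \iy$.

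The difference is one of depth. The paper's proof is two lines: it notes $\mcE_k \in \mathcal{S}$ and immediately asserts $\sup_k |\langle \mathbf{u}, \mcE_k\rangle| < \iy$, without any explicit argument for uniformity in $k$. You, by contrast, correctly isolate the real content of the claim---a \emph{uniform} bound on the Schwartz seminorms $\|\mcE_k\|_{N,M}$---and you go after it via the explicit exponential weight $3^{-(\pi+|x^i|)}$ to control the polynomial growth in the center $\mathbf{x}^i$, and via the identity $h_l' = -ih_l$ to attempt to neutralize derivative growth in the scale parameter $l$. The obstacle you flag (derivatives of the mollifier $f_l$ scaling like $a_l^{|\al|}$) is genuine and is not addressed in the paper's argument either; so your proposal is not less complete than the paper's proof, and in fact identifies precisely the point that the paper elides. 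If you can close that gap using the convolution structure as you suggest, your argument will be strictly more rigorous than what appears in the text.
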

\begin{proof} Since ${\mathcal{E}}_k ({\mathbf{x}}) \in \C_c^{\iy}[{\mathbb{R}^n }]$ and of slow growth at infinity, it is in ${\mathcal S}[\R^n]$, so that,  
\[
\sup_k \left| {\left\langle {{\mathbf{u}}, \mcE_k } \right\rangle } \right| = \sup_k\left| {\int_{\mathbb{R}^n } {{\mathbf{u}}({\mathbf{x}}) \cdot \mcE_k ({\mathbf{x}})d{\mathbf{x}}} } \right| < \infty ,\quad k \in \mathbb{N}.
\]
It follows that $\bf{u} \in {\mathbf{SD}}^\iy[\R^n]$.
\end{proof}
\subsection{Conclusion}
In this paper we have introduced a new Hilbert space, which allows us to obtain uniqueness for the Leray-Hopf solutions on ${\mathbb R}^3$, with or without body forces.
We also prove global-in-time strong solutions for the three-dimensional Navier-Stokes equations for both bounded and unbounded domains and for a homogeneous or inhomogeneous incompressible fluid.     In addition, with mild conditions on the decay properties of the initial data, we obtain pointwise and time-decay of the solutions.  However, our methods do not allow us to resolve the singularity question.    Our space also contains the Kato solution and  those in ${\bf L}^p$ spaces.  Although the space used by Koch-Tataru \cite{KT}, $BMO^{-1} \subset {\bf{SD}}^2$, we are unable to ensure that the embedding is continuous.  Thus,  we are not able to show that solutions in their sense are solutions in ${\bf{SD}}^2$.

This paper replaces an earlier one, which contained a fatal error that could not be fixed in the manner we had hoped (see \cite{GZ3}).

\end{document}